\documentclass[11pt,a4paper]{article}
\pdfoutput=1
\usepackage{jheppub}
\usepackage{amsthm,bbm,graphicx,slashed}

\leftmarginii5mm

\newtheorem*{lem}{Lemma}

\DeclareMathOperator\ee{e}
\DeclareMathOperator\tr{tr}
\DeclareMathOperator\re{Re}
\DeclareMathOperator\im{Im}
\DeclareMathOperator\Pf{Pf}
\DeclareMathOperator\ind{ind}
\DeclareMathOperator\sgn{sgn}
\DeclareMathOperator\diag{diag}

\newcommand\fh{\tilde f}
\newcommand\vh{\tilde v}
\newcommand\dd{\mathcal D}
\newcommand\del{\partial}
\newcommand{\bep}{\begin{pmatrix}} 
\newcommand{\eep}{\end{pmatrix}}
\newcommand{\Sp}{\text{Sp}}
\newcommand{\SU}{\text{SU}}
\newcommand{\SO}{\text{SO}}
\newcommand{\su}{\text{su}}
\newcommand{\U}{\text{U}}
\newcommand{\connect}{\text{conn}}
\renewcommand{\L}{\textbf{L}}
\newcommand{\I}{\textbf{I}}
\renewcommand{\H}{\textbf{H}}
\newcommand\mL{\mathcal L}
\newcommand\rsv{\rho_\text{sv}}
\newcommand\rsvm{\hat\rho_\text{sv}}
\newcommand\eps{\varepsilon}
\renewcommand\epsilon{\varepsilon}
\renewcommand\phi\varphi
\newcommand{\1}{\mathbbm{1}}
\newcommand\Z{\mathbbm{Z}}
\newcommand\N{\mathbbm{N}}
\newcommand\R{\mathbbm{R}}
\newcommand\C{\mathbbm{C}}
\newcommand\ev[1]{\left\langle#1\right\rangle}
\renewcommand{\bar}[1]{\overline{#1}} 

\title{Singular values of the Dirac operator in dense QCD-like theories}
\author[a]{Takuya Kanazawa,}
\author[a]{Tilo Wettig,}
\author[b]{and Naoki Yamamoto}
\affiliation[a]{Department of Physics, University of Regensburg, 93040
    Regensburg, Germany}
\affiliation[b]{Institute for Nuclear Theory, University of Washington,
    Seattle, Washington 98195-1550, USA}
\emailAdd{takuya.kanazawa@ur.de}
\emailAdd{tilo.wettig@ur.de}
\emailAdd{nyama@u.washington.edu}

\subheader{\hfill\normalsize October 25, 2011}

\abstract{We study the singular values of the Dirac operator in dense
  QCD-like theories at zero temperature. The Dirac singular values are
  real and nonnegative at any nonzero quark density.  The scale of
  their spectrum is set by the diquark condensate, in contrast to the
  complex Dirac eigenvalues whose scale is set by the chiral
  condensate at low density and by the BCS gap at high density.  We
  identify three different low-energy effective theories with diquark
  sources applicable at low, intermediate, and high density, together
  with their overlapping domains of validity.  We derive a number of
  exact formulas for the Dirac singular values, including
  Banks-Casher-type relations for the diquark condensate,
  Smilga-Stern-type relations for the slope of the singular value
  density, and Leutwyler-Smilga-type sum rules for the inverse
  singular values.  We construct random matrix theories and determine
  the form of the microscopic spectral correlation functions of the
  singular values for all nonzero quark densities.  We also derive a
  rigorous index theorem for non-Hermitian Dirac operators.  Our
  results can in principle be tested in lattice simulations.}

\keywords{Spontaneous symmetry breaking, chiral Lagrangians, 
random matrix theory, sum rules, index theorem}


\begin{document}

\maketitle

\section{Introduction}

A prominent feature of quantum chromodynamics (QCD) in the vacuum is
the dynamical breaking of chiral symmetry through the formation of a
chiral condensate $\langle \bar \psi \psi \rangle$.  While an
analytical demonstration of this phenomenon from the underlying theory
is still lacking, a number of studies have deepened our understanding
of its nature.  For example, spontaneous chiral symmetry breaking is
reflected in the infrared limit of the Dirac eigenvalue spectrum
through the Banks-Casher relation \cite{Banks:1979yr} and the
Smilga-Stern relation \cite{Smilga:1993in}.  Also, a universal
finite-volume domain (called the $\epsilon$-regime) has been
discovered \cite{Gasser:1987ah} in which the theory becomes
zero-dimensional and is governed entirely by the global symmetries of
the system.  It was shown to lead to an infinite number of constraints
on the low-lying Dirac eigenvalues, known as the Leutwyler-Smilga
spectral sum rules \cite{Leutwyler:1992yt}.

Thanks to the universality, rigorous results for the spectral
properties of the Dirac operator can be derived from a much simpler
chiral random matrix theory (RMT) which has the same global
symmetry-breaking pattern as the QCD vacuum (see
\cite{Verbaarschot:2000dy,Akemann:2007rf} for reviews).  This has
advanced our knowledge of the low-lying Dirac eigenvalues to the level
of microscopic spectral correlation functions, which eventually made
it possible to determine the value of the chiral condensate to high
precision by first-principle lattice QCD simulations
\cite{Fukaya:2009fh}, thus numerically verifying that chiral symmetry
is spontaneously broken in the QCD vacuum.  There are also other
QCD-like theories that can be described by RMT.  They can be
classified by the Dyson index $\beta$.  If the Dirac operator commutes
with an anti-unitary operator $T$, then $\beta=1$ if $T^2=\1$ and
$\beta=4$ if $T^2=-\1$.  Otherwise $\beta=2$.  Equivalently,
$\beta=1$, $2$, or $4$ if the representation of the gauge group in
which the fermions transform is pseudoreal, complex, or real,
respectively.\footnote{This can be shown following the considerations
  in sections 21.2--21.4 of \cite{Georgi:1999}.}

At large quark chemical potential $\mu$,\footnote{Throughout most of
  the paper we will use the terms ``chemical potential'' and
  ``density'' interchangeably, unless the distinction is important.
  Nonzero density always implies nonzero $\mu$.  However, the reverse
  is not always true, e.g., in two-color QCD at small $\mu$ the
  density is zero for $\mu<m_\pi/2$, where $m_\pi$ is the pion mass.}
QCD is believed to exhibit another nonperturbative phenomenon: The
attractive interaction between quarks near the Fermi surface leads to
color superconductivity characterized by a diquark condensate $\langle
\psi \psi \rangle$ (see \cite{Alford:2007xm, Fukushima:2010bq} for
recent reviews).  In the color-superconducting phases, gluons acquire
masses through the Anderson-Higgs mechanism, and weakly interacting
quarks acquire a Bardeen-Cooper-Schrieffer (BCS) gap $\Delta$.  In
particular, in the color-flavor-locked (CFL) phase
\cite{Alford:1998mk} of $N_f=3$ QCD at high density, chiral symmetry
is spontaneously broken in much the same way as in the QCD vacuum, but
the order parameter is the four-quark condensate rather than the
conventional chiral condensate.  Using effective-theory techniques, it
was shown \cite{Yamamoto:2009ey} that the distribution of the smallest
Dirac eigenvalues in the CFL phase is entirely governed by global
symmetries and that the relevant scale of the low-lying spectrum is
set by the BCS gap $\Delta$.

Unfortunately, at $\mu\ne 0$ the complex phase of the fermion
determinant in the QCD partition function has so far hampered lattice
computations of physical quantities such as the BCS gap and the
diquark condensate. This is an example of the so-called sign problem,
which is encountered in many areas of physics.  However, several
QCD-like theories are devoid of the sign problem even at nonzero $\mu$
\cite{Alford:1998sd} and develop a nonzero BCS gap and a diquark
condensate (or pion condensate) signaling superfluidity.  Examples
include two-color QCD ($\beta=1$), $\Sp(2N_c)$ gauge theories with
fundamental fermions and an arbitrary number $N_c$ of colors
($\beta=1$), $\SU(N_c)$ gauge theories with adjoint fermions
($\beta=4$), and $\SO(N_c)$ gauge theories with fundamental fermions
($\beta=4$).  Another example is three-color QCD at nonzero isospin
chemical potential ($\beta=2$).%
\footnote{To evade the sign problem, one has to assume an even number
  of flavors with degenerate quark masses in the $\beta=1$ and
  $\beta=2$ cases.  On the other hand, we can take an arbitrary number
  of flavors with nondegenerate quark masses in the $\beta=4$ cases.}
In spite of the clear differences between these theories and
three-color QCD at nonzero quark chemical potential, they share some
common features with \emph{bona fide} QCD, e.g., the same mechanism of
quark-quark pairing that leads to superfluidity or color
superconductivity at nonzero $\mu$, and the universality of their
phase diagrams \cite{Hanada:2011ju}.  This observation leads us to
expect that we can obtain some insights into the properties of
realistic QCD matter through a deeper understanding of these QCD-like
theories.
 
Indeed, the aforementioned work on the Dirac eigenvalues in the CFL
phase has been successfully generalized to two-color QCD at high
density ($\beta=1$) \cite{Kanazawa:2009ks}, where the BCS pairing was
shown to have sizable consequences on the behavior of the small Dirac
eigenvalues, as represented through new Leutwyler-Smilga-type sum
rules.  (This analysis permits a straightforward extension to theories
with $\beta=2$ and $4$.)  Moreover, the chiral random matrix theory
that describes the Dirac eigenvalue distribution on the scale
characterized by the BCS gap was constructed \cite{Kanazawa:2009en}
and solved analytically \cite{Akemann:2009fc,Akemann:2010tv}.  These
results make it possible in principle to compute the BCS gap from
Dirac spectra on the lattice, which would directly verify the BCS-type
superfluid phase of QCD-like theories.

Although the BCS gap $\Delta$ and the diquark condensate
$\ev{\psi\psi}$ are closely intertwined at high density, they are
\emph{a priori} different objects.  While it is unclear to what lower
densities the BCS-type pairing with well-defined $\Delta$ will
persist, the diquark condensate is still nonvanishing and breaks
global symmetries even at lower densities in those QCD-like theories.
Some of the Nambu-Goldstone (NG) modes are diquarks charged under
baryon number (with mass $m_{\pi}$) so that they exhibit Bose-Einstein
condensation (BEC) for $\mu>m_{\pi}/2$
\cite{Kogut:1999iv,Kogut:2000ek,Son:2000xc,Son:2000by,Zhang:2010kn,Cherman:2010jj,Cherman:2011mh,Hanada:2011ju}.%
\footnote{In QCD at nonzero isospin chemical potential $\mu_I$, the
  usual pions, which do not have baryon charge but isospin charge,
  show BEC, i.e., $\langle \bar d\gamma_5 u \rangle \neq 0$, for
  $\mu_I>m_{\pi}$. In this case, however, it is still possible to view
  this as a BEC of ``diquarks" if we switch $d \to d'\equiv C\bar
  d^T$.}  The Bose-Einstein condensate at small $\mu$ consists of
strongly coupled bound diquarks, in contrast to the weakly coupled
BCS-type diquarks at large $\mu$.  Despite this qualitative
difference, however, the quantum numbers of the condensate are the
same at small and large $\mu$, so it is natural to expect that there
is no phase transition between the two limits.  This phenomenon is
called (relativistic) BEC-BCS crossover.  It passes a number of
nontrivial tests \cite{Son:2000xc,Son:2000by,Splittorff:2000mm}, and
its possible realization in dense quark matter in QCD has been
investigated in various model calculations
\cite{Abuki:2001be,Nawa:2005sb,Abuki:2006dv,Sun:2007fc,He:2007yj,Kitazawa:2007zs,Brauner:2008td,Baym:2008me,Abuki:2008tj,Brauner:2009gu,Matsuzaki:2009kb,Andersen:2010vu,Abuki:2010jq,He:2010nb,Zhang:2010ct,Wang:2010uj}
(see also
\cite{Nishida:2005ds,Deng:2006ed,He:2007kd,Deng:2008ah,Chatterjee:2008dr,Guo:2008ti}
for related models).  In view of the known relation between $\Delta$
and the Dirac eigenvalues \cite{Yamamoto:2009ey,Kanazawa:2009ks}, a
natural question arises: How is the diquark condensate $\ev{\psi\psi}$
reflected in the Dirac spectrum of QCD at nonzero density?

In this paper, we point out that it is the spectrum of the
\emph{singular values} of the Dirac operator $D(\mu)$, i.e., the
square roots of the eigenvalues of $D(\mu)^\dagger D(\mu)$, that
carries the information on the diquark condensate $\langle \psi \psi
\rangle$ at any nonzero chemical potential in all QCD-like theories we
consider here.  By inserting a diquark source (instead of a quark
mass)\footnote{The diquark source can also be called Majorana mass
  while the quark mass is called Dirac mass.}  we derive a number of
rigorous relationships covering the entire BEC-BCS crossover region,
such as Banks-Casher-type relations, Smilga-Stern-type relations, and
Leutwyler-Smilga-type spectral sum rules for the Dirac singular
values.  This significantly extends previous related work
\cite{Fukushima:2008su} in many directions.  We also construct chiral
random matrix theories and determine the form of the microscopic
spectral correlation functions for singular values at all nonzero
densities.  In most of this paper we will work in the chiral limit.

Note that, although the Dirac singular values and the Dirac
eigenvalues coincide at $\mu=0$, they are essentially different
objects at nonzero $\mu$.%
\footnote{As an aside we mention that at $\mu=0$ the spectrum of the
  so-called Hermitian Wilson-Dirac operator in lattice QCD with Wilson
  fermions is nothing but the singular value spectrum of the
  Wilson-Dirac operator owing to the $\gamma_5$-Hermiticity of the
  operator. In this case, the density of the singular values near the
  origin is proportional to the parity-breaking condensate
  \cite{Bitar:1997ic}.  } First, while the Dirac eigenvalues are
complex, the Dirac singular values are always real and
nonnegative. Second, while the scale of the small Dirac eigenvalues is
governed by the chiral condensate at small $\mu$ and by the BCS gap at
large $\mu$, the scale of the small Dirac singular values is governed
by the diquark condensate at any $\mu$.  From the viewpoint of
symmetries, the diquark source breaks $\U(1)_B$ (baryon number), and
therefore the spectrum of the Dirac singular values characterizes the
$\U(1)_B$ symmetry breaking.  This is in contrast to the spectrum of
the Dirac eigenvalues, which is unrelated to $\U(1)_B$ since neither
the quark mass (the source for the chiral condensate at low density)
nor the quark mass squared (the source for $\Delta^2$ at high density)
breaks $\U(1)_B$.

Our results for the Dirac singular values will make it possible to
measure the magnitude of the diquark condensate with high precision by
lattice QCD simulations at any density.  Together with earlier results
for the Dirac eigenvalues they lead to a more detailed understanding
of superfluidity and the BEC-BCS crossover in QCD-like theories, and
hopefully also of the physics of color superconductivity in
three-color QCD.

The structure of this paper is as follows.  In
section~\ref{sec:micro}, we introduce the microscopic theories
considered in this paper and write down the general expressions for
the partition functions.  In section~\ref{sec:singular}, we review
basic properties of the eigenvalues and singular values of the Dirac
operator and clarify their relation at nonzero chiral chemical
potential.  Special emphasis is given to the zero modes.  In
section~\ref{sec:bc}, we derive Banks-Casher-type relations for
two-color QCD ($\beta=1$), QCD at nonzero isospin chemical potential
($\beta=2$), and QCD with adjoint fermions ($\beta=4$).  In
sections~\ref{sec:eff}, \ref{sec:ss}, and \ref{sec:ls}, we concentrate
on two-color QCD ($\beta=1$) for simplicity and brevity.  The results
of these sections admit straightforward generalization to the other
QCD-like theories with $\beta=2$ and $4$.  In section~\ref{sec:eff} we
introduce three different low-energy effective Lagrangians with
diquark sources applicable at low, intermediate, and high density.
In sections~\ref{sec:ss} and \ref{sec:ls}, we derive Smilga-Stern-type
relations and Leutwyler-Smilga-type sum rules, respectively, using the
effective theories from section~\ref{sec:eff}.  In
section~\ref{sec:ls} we also construct chiral random matrix theories
that describe the spectrum of the Dirac singular values in the
$\epsilon$-regime and determine the form of the microscopic spectral
correlation functions.  Section~\ref{sec:conclusion} contains the
conclusions and an outlook.

In appendix~\ref{app:conv}, we summarize our definitions and
conventions.  Appendix~\ref{app:Z-pfaffian} describes the derivations
of the singular value representations of the partition functions for
the theories with $\beta=1$, $2$, and $4$.  In
appendix~\ref{app:complex}, we comment on the importance of the
positivity of the measure, which could be spoiled by the diquark
sources.  We also discuss QCD inequalities and derive constraints on
the symmetry-breaking pattern for positive definite measure.  In
appendix \ref{app:index}, we present careful derivations of the
anomaly equation and the extension of the index theorem to $\mu\ne0$,
paying special attention to the non-Hermitian nature of the Dirac
operator.  In appendix \ref{app:rho_pert} the derivation of
eq.~\eqref{eq:rho_pert} in the main text is outlined.  In appendix
\ref{app:isospin} we comment on the random matrix theory for QCD at
nonzero isospin chemical potential ($\beta=2$).  Finally, appendix
\ref{app:sumlow0} is devoted to the derivation of
eq.~\eqref{eq:sumlow0} in the main text.

\section{Microscopic theories}
\label{sec:micro}

In this section, we introduce QCD-like theories with the Dyson indices
$\beta=1$, $2$, and $4$, emphasizing the anti-unitary symmetries of
the Dirac operator and the global symmetries of the theories.  Since
the analysis of all three cases is similar, we first examine the
$\beta=1$ case in detail and then discuss $\beta=2$ and $\beta=4$
briefly without redundancy.  We take two-color QCD and QCD with
adjoint fermions as examples for the $\beta=1$ and $\beta=4$ cases,
respectively.  The same arguments are readily applicable to
$\Sp(2N_c)$ gauge theory ($\beta=1$) and $\SO(N_c)$ gauge theory
($\beta=4$).

Unless stated otherwise we always work in Euclidean space and at zero
temperature.  Our definitions and conventions are summarized in
appendix~\ref{app:conv}.

\subsection[Two-color QCD ($\beta=1$)]{\boldmath Two-color QCD
  ($\beta=1$)}
\label{sec:micro1}

The Dirac operator of two-color QCD in the presence of a chemical
potential is given by
\begin{align}
  \label{eq:dirac}
  D(\mu)=\gamma_\nu D_\nu+\mu\gamma_4
  \quad\text{with}\quad D_\nu=\partial_\nu+iA_\nu\,,
\end{align}
where $A_\nu=A_\nu^a\tau_a/2$ is the gauge field and the $\tau_a$ are
the generators of $\SU(2)$ color, i.e., the Pauli matrices.  For
$\mu=0$ the Dirac operator is anti-Hermitian, but for $\mu\ne0$ it no
longer has definite Hermiticity properties since $\mu\gamma_4$ is
Hermitian.  However, for two colors there is an anti-unitary
symmetry that can be expressed in two equivalent ways
\cite{Leutwyler:1992yt,Verbaarschot:1994qf,Halasz:1997fc,Kogut:1999iv},
\begin{subequations}
  \begin{align}
    \label{eq:symm_i}
    [C\tau_2K,iD(\mu)]&=0\quad\text{with}\quad(C\tau_2K)^2=\1\,,\\
    \label{eq:symm}
    [\gamma_5C\tau_2K,D(\mu)]&=0\quad\text{with}\quad(\gamma_5C\tau_2K)^2=\1\,,
  \end{align}
\end{subequations}
where $C=i\gamma_4\gamma_2$ is the charge conjugation matrix (see
appendix~\ref{app:conv}) and $K$ is the operator of complex
conjugation.  Hence the Dyson index is $\beta=1$ in this case and we
can choose a basis in which the Dirac operator is real.  This implies
that $\det D(\mu)$ is real, which also follows from $C\tau_2\gamma_5
D(\mu)C\tau_2 \gamma_5=D(\mu)^*$ as a consequence of \eqref{eq:symm}.

We introduce $N_f$ quark flavors\footnote{We assume $N_f<11$ to ensure
  asymptotic freedom.} described by Dirac spinors $\psi_f$
($f=1,\ldots,N_f$), which we collect in
$\psi=(\psi_1,\ldots,\psi_{N_f})^T$.  Each Dirac spinor can be split
into two Weyl spinors, which we collect in
$\psi_R=(\psi_{1R},\ldots,\psi_{N_fR})^T$ and
$\psi_L=(\psi_{1L},\ldots,\psi_{N_fL})^T$.  The fermionic part of the
Lagrangian, including mass term and diquark sources, is given by
\begin{align}
  \label{eq:Lf}
  \mL_f
  &=\bar\psi [D(\mu)+MP_L+M^\dagger P_R]\psi
  +\frac{1}{2}\psi^TC\tau_2(J_RP_R+J_LP_L)\psi
  +\frac{1}{2}\psi^\dagger C\tau_2(J_R^\dagger P_R+J_L^\dagger P_L)\psi^*\!,
\end{align}
where $P_{R/L}=(\1\pm\gamma_5)/2$ and $M$, $J_R$, and $J_L$ are
$N_f$-dimensional complex matrices in flavor space.  As a consequence
of the Pauli principle, it suffices to take $J_R$ and $J_L$ to be
antisymmetric since their symmetric parts drop out of the combinations
in \eqref{eq:Lf}.  For greater generality we have allowed for two
independent matrices $J_R$ and $J_L$.

We briefly summarize the symmetries of \eqref{eq:Lf} obtained in
\cite{Kogut:1999iv,Kogut:2000ek}, assuming $M=m\1$ and $J_R=-J_L=jI$,
where $m$ and $j$ are real and $I$ is defined in \eqref{eq:I}.  For
$\mu=0$, the Lagrangian in the absence of mass term and diquark source
is symmetric under $\SU(2N_f)$.\footnote{$\U(1)_B$ is contained in
  $\SU(2N_f)$. There is an additional $\U(1)_A$ symmetry which is
  anomalous and therefore not considered here, but see
  section~\ref{sec:eff_high}.} The mass term breaks this symmetry to
$\Sp(2N_f)$.  The diquark source transforms into the mass term under
an $\SU(2N_f)$ rotation, and therefore it brings nothing new at
$\mu=0$.  For $\mu\ne0$ and no sources, the chemical potential breaks
the $\SU(2N_f)$ symmetry explicitly to
$\SU(N_f)_R\times\SU(N_f)_L\times\U(1)_B$.  The mass term breaks this
symmetry to $\SU(N_f)_{L+R}\times\U(1)_B$, while the diquark source
breaks it to $\Sp(N_f)_R\times\Sp(N_f)_L$.  In the presence of both
mass term and diquark source the symmetry is broken to
$\Sp(N_f)_{L+R}$.

We now express $\mL_f$ in the so-called Nambu-Gor'kov formalism.
Defining
\begin{align}
  \label{eq:Psi}
  \Psi\equiv\begin{pmatrix}\psi\\\bar\psi^T\end{pmatrix},  
\end{align}
\eqref{eq:Lf} can be rewritten as 
\begin{align}
  \mL_f=\frac{1}{2}\Psi^TW\Psi
\end{align}
with
\begin{align}
  W =
  \begin{pmatrix}
    C\tau_2(J_RP_R+J_LP_L) & -D(\mu)^T-M^TP_L-M^*P_R\\
    D(\mu)+MP_L+M^\dagger P_R & -C\tau_2(J_R^\dagger P_L+J_L^\dagger P_R)
  \end{pmatrix}.
  \label{eq:W}
\end{align}
Since $J_R$, $J_L$, $C$, and $\tau_2$ are all antisymmetric, $W$ is
antisymmetric as well, and thus
\begin{align}
  \int \dd \Psi\,\exp\left(-\frac{1}{2}\int d^4x\,\Psi^TW\Psi\right)=\Pf (W)\,,
\end{align}
where $\Pf$ denotes the Pfaffian, which for an antisymmetric matrix of
even dimension $N$ is defined as%
\footnote{Later we will use the following basic properties of the
  Pfaffian: $\Pf(X)^2=\det(X)$, $\Pf(X^T)=(-1)^{N/2}\Pf(X)$,
  $\Pf(cX)=c^{N/2}\Pf(X)$ for $c\in\C$, and
  $\Pf(UXU^T)=\det(U)\Pf(X)$.  }
\begin{align}
  \Pf (X) \equiv \frac{1}{2^{N/2}(N/2)!}\sum_{\sigma}
  \sgn(\sigma)X_{\sigma(1)\sigma(2)}\dots X_{\sigma(N-1)\sigma(N)}\,.
\end{align}
The partition function is therefore
\begin{align}
  \label{eq:Z}
  Z=\Big\langle \! \Pf (W)\Big\rangle _\text{YM}\,,
\end{align}
where the subscript YM means that the average is over gauge fields
weighted by the pure gauge (Yang-Mills) action.  Since in this paper
we will almost always work in the chiral limit we set $M$ to zero in
\eqref{eq:W}.  As shown in appendix~\ref{app:Z1}, we then obtain from
\eqref{eq:Z}
\begin{align}
  \label{eq:ZJ}
  Z(J_L,J_R)&=\bigg\langle
    \big[\Pf(J_R) \Pf(J_L^\dagger)\big]^{n_R}
    \big[\Pf(J_R^\dagger) \Pf(J_L)\big]^{n_L}
    \notag\\
    &\qquad\times
    \sqrt{{\det}'(D(\mu)^\dagger D(\mu) +J_R^\dagger J_RP_R 
    + J_L^\dagger J_LP_L )}\,\bigg\rangle_\text{YM}\,,
\end{align}
where $n_R$ ($n_L$) denotes the number of right- (left-) handed zero
modes of $D^\dagger D$ and the prime on the determinant means that the
zero modes of $D^\dagger D$ are omitted.\footnote{The numbers $n_R$
  and $n_L$ are also equal to the number of right- and left-handed
  zero modes of $D$, see the discussion in
  section~\ref{sec:singular}.}  This expression is invariant under
$\SU(N_f)_R\times \SU(N_f)_L$, as it should be, since $\Pf(J_R) \to
\Pf(U_R^TJ_RU_R)=\det(U_R)\Pf(J_R)=\Pf(J_R)$ and likewise for
$\Pf(J_L)$.

We can also add a P- and CP-violating term $i\theta F\tilde F/32\pi^2$
to the Lagrangian, where $\tilde F_{\alpha \beta}^a =
\frac{1}{2}\epsilon_{\alpha \beta \gamma \delta}F_{\gamma \delta}^a$.
The topological charge $\nu$ of a gauge field configuration is given
by
\begin{align}
  \label{eq:nu_definition}
  \nu \equiv \frac1{32\pi^2}\int d^4x\,
  F_{\alpha\beta}^a \tilde F_{\alpha\beta}^a
\end{align}
and related to the number of zero modes of $D$ by
$\nu=n_R-n_L$.\footnote{Note that for $\mu\ne0$ the equality
  $\nu=n_R-n_L$ is violated on a gauge field set of measure zero, see
  appendix~\ref{app:index}.  Here (and also in the next two
  subsections) we exclude this possibility.}  The $\theta$-term
corresponds, for a fixed gauge field, to a term $i\nu\theta$ in the
action.  The partition function is then given by
\begin{align}
  \label{eq:Z_theta}
  Z(\theta)=\sum_{\nu=-\infty}^{\infty} \ee^{i\nu\theta}Z_\nu\,,
\end{align}
where $Z_\nu$ is computed by integrating only over gauge fields with
fixed topology $\nu$.  

For zero diquark sources and nonzero quark masses it is well known
that an axial rotation of the quark fields changes the fermionic
measure in such a way that the $\theta$-term can be traded for a
redefinition of the quark mass matrix, $M\to M\ee^{-i\theta/N_f}$
\cite{Leutwyler:1992yt}.\footnote{In \cite{Leutwyler:1992yt} $M\to
  M\ee^{i\theta/N_f}$ is used, which is due to different conventions,
  see also footnotes \ref{ft:conv_diff} and \ref{ft:conv_diff2}.} For
nonzero diquark sources and in the chiral limit, it follows from
\eqref{eq:ZJ} that the $\theta$-term can be traded for a redefinition
of the diquark sources, $J_R\to J_R\ee^{i\theta_R/N_f}$and $J_L\to
J_L\ee^{i\theta_L/N_f}$ with $\theta=(\theta_R-\theta_L)/2$.

Note that for some choices of the diquark sources and/or for a nonzero
value of $\theta$ the fermionic measure in the partition function is
not positive definite, which causes some subtleties that are discussed
in detail in appendix~\ref{app:complex}.

\subsection[QCD with isospin chemical potential ($\beta=2$)]{\boldmath
  QCD with isospin chemical potential ($\beta=2$)}
\label{sec:isospin}

We now consider $N_f=2$ QCD at nonzero isospin chemical potential
$\mu_I = 2\mu$ for an arbitrary number of colors $N_c\ge3$.  The Dirac
operator is given as in \eqref{eq:dirac}, except that the $\tau_a$ are
now the generators of $\SU(N_c)$ in the fundamental representation.
The chemical potential $\mu$ is assigned to the $u$-quark, while
$-\mu$ is assigned to the $d$-quark.  For $\mu\ne0$, the Dirac
operator is no longer anti-Hermitian, but because of
$D(\mu)^\dagger=-D(-\mu)$ the fermion determinant is real and
nonnegative: $\det D(\mu)D(-\mu)=\det
D(\mu)D(\mu)^\dagger\ge0$.\footnote{Note that $\det(-D^\dagger)=\det
  D^\dagger$ if $D^\dagger$ has no zero modes.  If it has zero modes
  we simply have $0=0$.}  Since there is no anti-unitary symmetry we
have $\beta=2$.

The fermionic part of the Lagrangian is given by
\begin{align}
\label{eq:Lf2}
  \mL&=\bar u(\gamma_\nu D_\nu+\mu\gamma_4)u
  +\bar d(\gamma_\nu D_\nu-\mu\gamma_4)d + (m u^{\dag}_R u_L + m
  d^{\dag}_R d_L + \text{h.c.}) 
  \notag \\
  & \quad + (\lambda^* u_L^\dagger d_R + \rho d_L^\dagger u_R  +
  \text{h.c.}) \notag\\
  &= \begin{pmatrix}\bar u & \;\bar d\end{pmatrix}
  \begin{pmatrix}D(\mu) + m P_L + m^* P_R & \lambda^* P_R+\rho^*P_L 
  \\
  \rho P_R + \lambda P_L & D(-\mu) + m P_L + m^* P_R \end{pmatrix}
  \begin{pmatrix}u\\d\end{pmatrix},
\end{align}
where $m$ is the degenerate mass of the two quarks, and $\rho$ and
$\lambda$ are ``pionic'' sources.  From this we obtain (see
appendix~\ref{app:iso})
\begin{align}
  \label{eq:app_main_result_beta=2}
  Z(\rho,\lambda) = \left\langle
    (-\rho\lambda^*)^{n_R}(-\rho^*\lambda)^{n_L}
  {\det}'\big(D^\dagger D + \rho \rho^*P_R + \lambda \lambda^*P_L \big)
  \right\rangle_\text{YM},
\end{align}
where the prime again indicates that the zero modes of $D^\dagger D$
are omitted.  A nonzero $\theta$-term can be introduced in
\eqref{eq:app_main_result_beta=2} by redefining the pionic sources as
$\rho\to\rho \ee^{i\theta_R/2}$ and $\lambda\to\lambda
\ee^{i\theta_L/2}$, where again $\theta=(\theta_R-\theta_L)/2$.

The symmetries of \eqref{eq:Lf2} with $\rho=-\lambda\in\R$ are as
follows.  The Lagrangian at $\mu=0$ in the absence of sources is
symmetric under $\SU(2)_L \times \SU(2)_R \times \U(1)_B$, which is
broken to $\SU(2) \times \U(1)_B$ by the mass term or the pionic
sources.\footnote{For the mass term we end up with $\SU(2)_{L+R}$,
  while for the pionic sources we end up with a different $\SU(2)$
  subgroup given by the condition $U_R^\dagger t_2U_L=t_2$, where
  $t_2$ is the second generator of $\SU(2)$.} There is also a
$\U(1)_A$ symmetry that is broken by the anomaly.  Without sources the
chemical potential breaks the $\SU(2)_L \times \SU(2)_R \times
\U(1)_B$ symmetry to $\U(1)_L \times \U(1)_R \times \U(1)_B$, where
$\U(1)_{L,R}$ is generated by $t_3\in\su(2)_{L/R}$.  The mass term
breaks this symmetry to $\U(1)_{L+R} \times \U(1)_B$, while the pionic
sources break it to $\U(1)_{L-R}\times\U(1)_B$.  With both mass term
and pionic sources the remaining symmetry is $\U(1)_B$.

\subsection[QCD with adjoint fermions ($\beta=4$)]{\boldmath QCD with
  adjoint fermions ($\beta=4$)} 
\label{sec:beta4}

Finally we consider QCD with fermions in the adjoint
representation.\footnote{The results obtained for the adjoint
  representation easily generalize to other real representations.}
The Dirac operator in the presence of a chemical potential is given by
\begin{align}
  D(\mu)=\gamma_\nu D_\nu + \mu\gamma_4
  \quad\text{with}\quad D_\nu=\partial_\nu \delta_{ab} + (f^c)_{ab}A_\nu^c\,,
\end{align}
where $(f^c)_{ab}=f_{abc}$ denotes the generators of the adjoint
representation (or structure constants).  For $\mu=0$ the Dirac
operator is anti-Hermitian, but for $\mu\ne0$ it again loses its
Hermiticity properties.  There is an anti-unitary symmetry that can be
expressed in two equivalent ways
\cite{Leutwyler:1992yt,Verbaarschot:1994qf,Halasz:1997fc,Kogut:1999iv},
\begin{subequations}
  \begin{align}
    \label{eq:symm_adj_i}
    [CK,iD(\mu)]&=0\quad\text{with}\quad(CK)^2=-\1\,,\\
    \label{eq:symm_adj}
    [\gamma_5CK,D(\mu)]&=0\quad\text{with}\quad(\gamma_5CK)^2=-\1\,.
  \end{align}
\end{subequations}
Hence $\beta=4$ and we can choose a basis in which the Dirac operator
is quaternion real.  The symmetry \eqref{eq:symm_adj} implies
$C\gamma_5 D(\mu)C \gamma_5=D(\mu)^*$, from which it follows that
$\det D(\mu)$ is real (and actually nonnegative because all
eigenvalues occur in quadruplets, see section~\ref{sec:ev}).  Because
of $(CK)^2=-\1$ one can show that for $\mu=0$ (but not for $\mu\ne0$)
the eigenvalues of the Dirac operator are twofold degenerate with
linearly independent eigenstates $\psi$ and $C\psi^*$ (Kramers
degeneracy).

Because the adjoint representation is real, it may be convenient to
describe the fermions in the partition function in terms of Majorana
fields.  However, Majorana fermions cannot be defined in Euclidean
space, and therefore we first write the Lagrangian in Minkowski space
and then analytically continue to Euclidean space by a Wick rotation
\cite{Leutwyler:1992yt,Halasz:1995qb}.  In Minkowski space, the
Lagrangian for $N_f=1$ Dirac fermions with diquark sources reads
\begin{align}
  \mL_M^{(N_f=1)}&=\bar\psi(i\gamma^\nu D_\nu- m -\mu\gamma^0)\psi
  +\frac12\big[\psi^TC(j_RP_R+j_LP_L)\psi+\text{h.c.}\big]\notag\\
  &=\frac12
  \begin{pmatrix}
    \bar\psi^T \\ \psi
  \end{pmatrix}^T
  \begin{pmatrix}
    -C(j_R^*P_L+j_L^*P_R) & i\gamma^\nu D_\nu-m-\mu\gamma^0\\
    C(-i\gamma^\nu D_\nu+m-\mu\gamma^0)C & C(j_RP_R+j_LP_L)
  \end{pmatrix}
  \begin{pmatrix}
    \bar\psi^T \\ \psi
  \end{pmatrix}
  \label{eq:Lf4}
  \notag \\
  &=\frac12
  \begin{pmatrix}
    \psi_c \\ \psi
  \end{pmatrix}^T
  \begin{pmatrix}
    C(j_R^*P_L+j_L^*P_R) & -C(i\gamma^\nu D_\nu-m-\mu\gamma^0)\\
    C(-i\gamma^\nu D_\nu+m-\mu\gamma^0) & C(j_RP_R+j_LP_L)
  \end{pmatrix}
  \begin{pmatrix}
    \psi_c \\ \psi
  \end{pmatrix},
\end{align}
where in the last line we have defined $\psi_c=C\bar\psi^T$.  The
partition function is thus given by
\begin{align}
  Z_M^{(N_f=1)}&=\ev{\Pf\begin{pmatrix}
    C(j_R^*P_L+j_L^*P_R) & -C(i\gamma^\nu D_\nu-m-\mu\gamma^0)\\
    C(-i\gamma^\nu D_\nu+m-\mu\gamma^0) & C(j_RP_R+j_LP_L)
  \end{pmatrix}}_\text{YM}.
\end{align}
This result generalizes trivially to general $N_f$.  The partition
function is then simply the expectation value of a product of $N_f$
Pfaffians, possibly with different masses.

For simplicity we now take the chiral limit.  After Wick rotation, the
partition function for $N_f$ flavors in Euclidean space becomes
\begin{align}
  Z_E=\ev{\Pf\begin{pmatrix}
    C(J_R^*P_L+J_L^*P_R) & CD(\mu)\\
    -CD(\mu)^\dagger & C(J_RP_R+J_LP_L)
  \end{pmatrix}}_\text{YM},
  \label{eq:Z4}
\end{align}
where $J_L$ and $J_R$ are now symmetric matrices of dimension $N_f$.
For $\beta=4$, $N_f$ does not have to be even.  In
appendix~\ref{app:real} we show that this leads to
\begin{align}
  \label{eq:app_main_result_beta=4}
  Z(J_L, J_R) = \left\langle
    \det(-J_R J_L^{\dag})^{n_R/2}\det(-J_R^{\dag}J_L)^{n_L/2}
  {\det}''(D^\dagger D + J_R^{\dag} J_R P_R + J_L^{\dag}J_L P_L)
  \right\rangle_\text{YM},
\end{align}
where ${\det}''$ indicates that the zero modes of $D^{\dag}D$ are
omitted and that each degenerate eigenvalue of $D^\dagger D$ is
counted only once.  It follows from the derivation in
appendix~\ref{app:real} that a nonzero $\theta$-term can be introduced
in \eqref{eq:app_main_result_beta=4} by redefining $J_R\to
J_R\ee^{i\theta_R/2N_fN_c}$ and $J_L\to J_L\ee^{i\theta_L/2N_fN_c}$,
where again $\theta=(\theta_R-\theta_L)/2$.

The symmetries of \eqref{eq:Lf4} extended to general $N_f$ with
degenerate masses and $J_R=-J_L=j\1$ with real $j$ are as follows
\cite{Kogut:2000ek}.  The Lagrangian at $\mu=0$ in the absence of
sources is symmetric under $\SU(2N_f)$, which is broken to $\SO(2N_f)$
by the quark mass or the diquark source.\footnote{As in the case of
  $\beta=1$, the diquark source transforms into the mass term under an
  $\SU(2N_f)$ rotation.} There is also the usual anomalous $\U(1)_A$
symmetry.  Without sources the chemical potential breaks the
$\SU(2N_f)$ symmetry to $\SU(N_f)_L \times \SU(N_f)_R \times \U(1)_B$.
The mass term breaks this symmetry to $\SU(N_f)_{L+R} \times \U(1)_B$,
while the diquark source breaks it to $\SO(N_f)_L \times \SO(N_f)_R$.
With both mass term and diquark source the remaining symmetry is
$\SO(N_f)_{L+R}$.

\section{Eigenvalues and singular values of the Dirac operator}
\label{sec:singular}

In this section we discuss the eigenvalues and singular values of the
Dirac operator and related quantities.  Two preliminary remarks are in
order.

First, some parts of the discussion rely on the index theorem.  In
appendix~\ref{app:index} we show that for a non-Hermitian Dirac
operator such as $D(\mu)$ the index theorem takes the form
\begin{align}
  \frac1{32\pi^2}\int d^4x\,F\tilde F
  =\frac12\big[\ind D(\mu)+\ind D(\mu)^\dagger\big]
\end{align}
and that 
\begin{align}
  \ind D(\mu)=\ind D(\mu)^\dagger\quad\text{almost surely,}
\end{align}
where $\ind D(\mu)=\dim\ker D_R-\dim \ker D_L$, see
\eqref{eq:mu_D_def} for the notation.  The meaning of ``almost
surely'' is that to have $\ind D(\mu)\ne\ind D(\mu)^\dagger$ the gauge
field needs to be fine-tuned, which corresponds to a gauge field set
of measure zero.  In this section we ignore this set of measure zero
and use the index theorem in its usual form.

Second, we implicitly assume a regularization (such as lattice QCD)
that allows us to count the number of eigenstates.  Some of our
arguments rely on the index theorem, which can be (and usually is)
violated by the regulator.  Therefore our results apply only after the
regulator has been removed.  We assume that the procedure of removing
the regulator does not invalidate the results that rely on the index
theorem.

\subsection{Eigenvalues}
\label{sec:ev}

The discussion of this section extends that of \cite{Halasz:1997fc}.
The eigenvalue equation for the Dirac operator is
\begin{align}
  D(\mu)\psi_n=\lambda_n\psi_n\,.
\end{align}
For simplicity we will omit the argument $\mu$ if no confusion is
likely to arise.

Let us first consider $\mu=0$.\footnote{The arguments for $\mu=0$
  apply to all values of the Dyson index $\beta$, except that for
  $\beta=4$ we have the additional Kramers degeneracy already
  mentioned in section~\ref{sec:beta4}.}  In that case the eigenvalues
$\lambda_n$ are purely imaginary, and because of $\{D,\gamma_5\}=0$
the nonzero eigenvalues come in pairs $\pm\lambda_n$.  If the
eigenstate corresponding to $\lambda_n$ is $\psi_n$, then the
eigenstate corresponding to $-\lambda_n$ is $\gamma_5\psi_n$.  There
can also be eigenvalues equal to zero, and the corresponding
eigenstates can be arranged to be eigenstates of $\gamma_5$.  In
general there are $n_R$ ($n_L$) right-handed (left-handed) zero modes
with eigenvalue $+1$ ($-1$) of $\gamma_5$.  The difference $n_R-n_L$
is equal to the topological charge $\nu$ of the underlying gauge field
configuration via the index theorem and stable under perturbations of
the gauge field.  All other possible zero modes are accidental in the
sense that they require a fine-tuning of the gauge field.  This
implies that generically there are only zero modes of one chirality.
In the remainder of this section we assume that there are no
accidental zero modes.

We now discuss what happens to the eigenvalues as $\mu$ is turned on
(for a fixed gauge field and in a finite volume) and first consider
$\beta=1$ (e.g., two-color QCD).  Since $D(\mu\ne0)$ is no longer
anti-Hermitian one would expect the eigenvalues to move into the
complex plane for any $\mu$.  However, using the symmetry
\eqref{eq:symm} one can show that the nonzero eigenvalues come either
in quadruplets $\lambda,-\lambda,\lambda^*,-\lambda^*$ (with
eigenstates $\psi,\gamma_5\psi,C\tau_2\gamma_5\psi^*,C\tau_2\psi^*$)
if $\lambda$ is complex or in pairs $\pm\lambda$ (with eigenstates
$\psi,\gamma_5\psi$) if $\lambda$ is purely real or purely imaginary.
Since at $\mu=0$ the nonzero eigenvalues are generically nondegenerate
(because of level repulsion due to interactions), a quadruplet cannot
be formed for infinitesimally small $\mu$.\footnote{The assumption of
  a finite volume is essential here.} What happens (see
figure~\ref{fig:flow}) is that as a function of $\mu$, the eigenvalues
move along the imaginary axis until two eigenvalues (and their
partners) become degenerate, i.e., the effect of $\mu$ overcomes the
level repulsion.  As $\mu$ is increased further, these four
eigenvalues move into the complex plane and form a quadruplet.
Another possibility is that a pair of originally imaginary eigenvalues
hits zero and then becomes a pair of real eigenvalues.  Finally, two
real eigenvalues (and their partners) can also merge and then become a
quadruplet.\footnote{All of these three possibilities have been
  verified in lattice simulations and random matrix studies.  We thank
  Jacques Bloch for performing the lattice simulations.}  The original
topological eigenvalues equal to zero (but not the accidental ones)
are stable under the perturbation by $\mu$.  The corresponding zero
modes change smoothly with $\mu$ and remain eigenstates of $\gamma_5$
\cite{Abrikosov:1980nx,Abrikosov:1981qb,AragaodeCarvalho:1980de,Schafer:1998up}.

\begin{figure}
  \centering
  \includegraphics{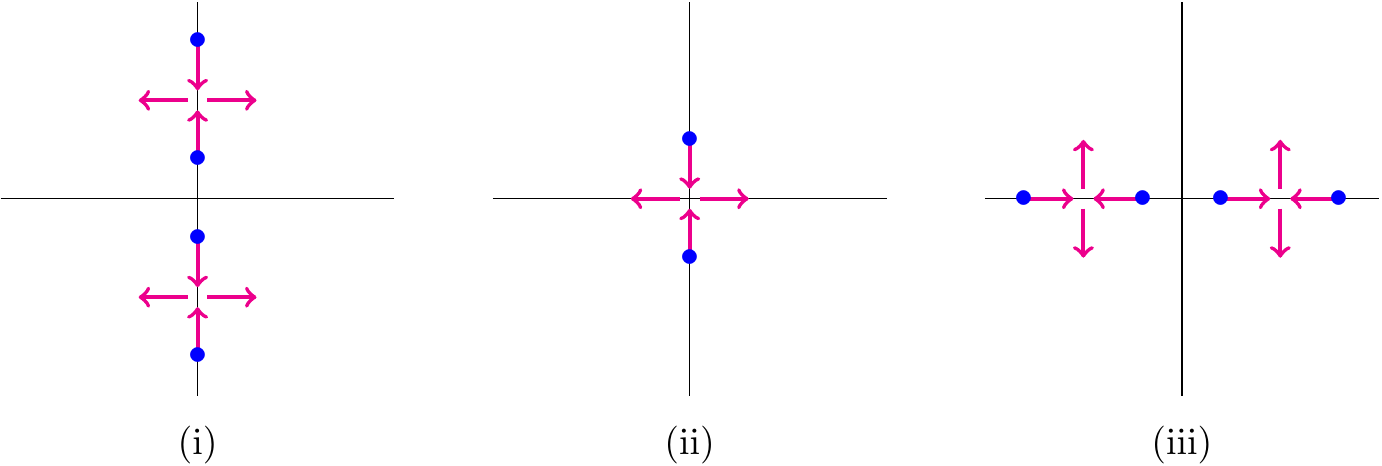}
  \caption{Flow of the Dirac eigenvalues for a fixed gauge field as a
    function of $\mu$ for $\beta=1$.  (i) Two imaginary eigenvalues
    (and their partners) merge and move into the complex plane to form
    a quadruplet. (ii) A pair of imaginary eigenvalues merges at zero
    and becomes a pair of real eigenvalues.  (iii) Two real
    eigenvalues (and their partners) merge and move into the complex
    plane to form a quadruplet.}
  \label{fig:flow}
\end{figure}

For $\beta=2$ there is no anti-unitary symmetry, and the eigenvalues
(which still come in pairs $\pm\lambda$ with eigenstates
$\psi,\gamma_5\psi$) move into the complex plane immediately as $\mu$
is switched on.  Exactly real or imaginary eigenvalues only occur
accidentally, i.e., if the gauge field is fine-tuned.  The topological
zero modes behave as in the case of $\beta=1$.

For $\beta=4$ it follows from the symmetry \eqref{eq:symm_adj} that
the nonzero eigenvalues come in quadruplets
$\lambda,-\lambda,\lambda^*,-\lambda^*$ (with eigenstates
$\psi,\gamma_5\psi,C\gamma_5\psi^*,C\psi^*$) if $\lambda$ is complex
and that the Kramers degeneracy is removed once $\mu$ is switched on.
Hence we do not need the mechanism of figure~\ref{fig:flow}(i) and the
eigenvalues move into the complex plane immediately.  As in the case
of $\beta=2$, exactly real or imaginary eigenvalues only occur
accidentally.  The topological zero modes are twofold degenerate for
both $\mu=0$ and $\mu\ne0$.  They change smoothly with $\mu$ and are
eigenstates of $\gamma_5$ with the same chirality.

\subsection{Singular values}
\label{sc:singu}

Let us now consider the operator $D^\dagger D$.  Since this operator
is Hermitian with nonnegative eigenvalues, we write its eigenvalue
equation in the form
\begin{align}
  \label{eq:sing}
  D^\dagger D\phi_n=\xi_n^2\phi_n\,.
\end{align}
The $\xi_n$ are real and nonnegative.  They are called the singular
values of $D$, the name coming from the singular value decomposition
of a non-Hermitian matrix.%
\footnote{In this paper we assume that the extension of the singular
  value decomposition to non-Hermitian operators is straightforward
  and skip the mathematical foundations.}  The operators $D^\dagger D$
and $DD^\dagger$ share all nonzero eigenvalues, since \eqref{eq:sing}
implies
\begin{align}
  DD^\dagger(D\phi_n)=\xi_n^2(D\phi_n)\,,
\end{align}
and similarly the other way around.  

At $\mu=0$ the $\lambda_n$ and $\xi_n$ are trivially related by
$\xi_n=|\lambda_n|$, and therefore the nonzero singular values are
twofold degenerate (for $\beta=1$ and 2) or fourfold degenerate (for
$\beta=4$).  At $\mu\ne0$ there is no simple relation between the
$\lambda_n$ and $\xi_n$, i.e., knowing only the $\lambda_n$ we cannot
compute the $\xi_n$, and vice versa.  As soon as $\mu$ is turned on
the twofold degeneracy (for $\beta=1$ and 2) is removed, and the
fourfold degeneracy (for $\beta=4$) is reduced to a twofold
degeneracy, see the argument after \eqref{eq:det'_beta=4}.

The operator $D^\dagger D$ commutes with $\gamma_5$, and therefore the
states $\phi_n$ have definite chirality, $\gamma_5\phi_n=\pm\phi_n$
(or can be so arranged if the singular values are degenerate).  Now
assume $\xi_n>0$ and define $\tilde\phi_n=\xi_n^{-1}D\phi_n$, for
which
\begin{align}
  DD^\dagger \tilde\phi_n&=\xi_n^{-1}DD^\dagger D\phi_n
  =\xi_nD\phi_n=\xi_n^2\tilde\phi_n\,,\\
  \gamma_5\tilde\phi_n&=\xi_n^{-1}\gamma_5D\phi_n
  =-\xi_n^{-1}D\gamma_5\phi_n=\mp\tilde\phi_n\,,
\end{align}
i.e., $\tilde\phi_n$ is an eigenstate of $DD^\dagger$ with the same
eigenvalue but chirality opposite to that of $\phi_n$.  Therefore the
number of right-handed (left-handed) nonzero modes of $D^\dagger D$
equals the number of left-handed (right-handed) nonzero modes of
$DD^\dagger$.  Since these operators coincide at $\mu=0$, the number
of right-handed and left-handed nonzero modes of $D^\dagger D$ is
equal at $\mu=0$.  As $\mu$ is turned on from zero, a right-handed (or
left-handed) nonzero mode of $D^\dagger D$ changes its form smoothly
but stays right-handed (or left-handed) since the eigenvalue of
$\gamma_5$ is discrete.  Therefore the numbers of right-handed and
left-handed nonzero modes of $D^\dagger D$ and $DD^\dagger$ are all
equal (assuming that there are no accidental zero modes).

If $D$ has an eigenvalue equal to zero, $D\psi=0$ trivially implies
$D^\dagger D\psi=0$, i.e., there is a corresponding singular value of
$D$ equal to zero, and the zero mode of $D^\dagger D$ is the same as
that of $D$.  If $\psi$ is not a zero mode of $D$, it cannot be a zero
mode of $D^\dagger D$ either since $\langle\psi|D^\dagger
D|\psi\rangle\ne0$.  As discussed in section~\ref{sec:ev}, all zero
modes of $D$ and therefore of $D^\dagger D$ generically have the same
chirality.  Using $D(\mu)^\dagger=-D(-\mu)$ and our earlier
observation that the number of zero modes is stable as a function of
$\mu$, we conclude that (i) the operator $DD^\dagger$ has the same
number of zero modes as $D^\dagger D$, (ii) the zero modes of
$DD^\dagger$ are equal to those of $D(-\mu)$, and (iii) the chirality
of the zero modes of $D^\dagger D$ and $DD^\dagger$ is the
same.\footnote{Conclusion (iii) is not necessarily valid if there are
  accidental zero modes, while (i) and (ii) are always valid, see
  appendix~\ref{app:index}.}

\begin{figure}
  \centering
  \includegraphics{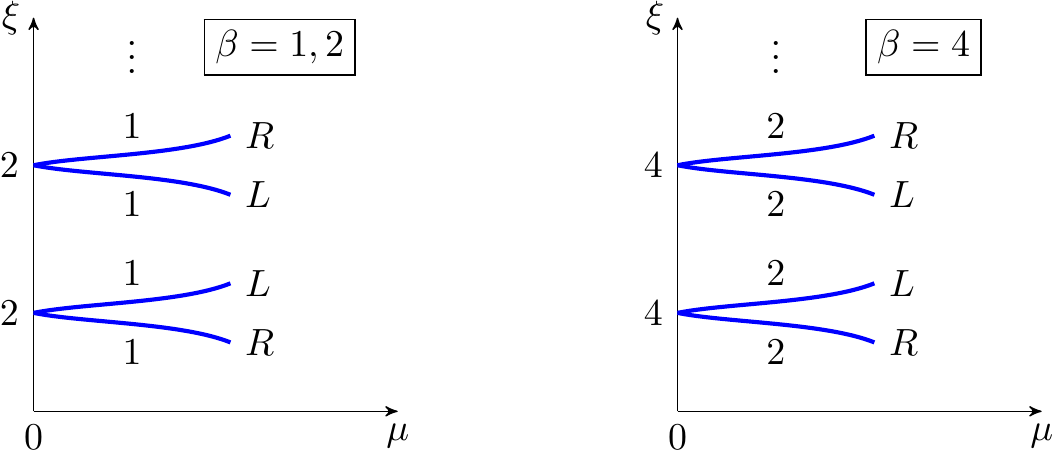}
  \caption{ Schematic flow of the singular values of $D$ for a fixed
    gauge field as a function of $\mu$ for the three symmetry classes.
    The numbers stand for the degeneracy of each singular value.}
  \label{fig:sv_flow}
\end{figure}

In figure \ref{fig:sv_flow} we schematically illustrate the singular
value flow in QCD-like theories with $\beta=1$, $2$, and $4$.  All
eigenstates of $D^\dagger D$ have definite chirality, as indicated by
$R$ and $L$ in the figure.  For $\beta=4$ the zero modes of $D$, and
therefore of $D^\dagger D$, are twofold degenerate for both $\mu=0$
and $\mu\ne0$.

\subsection{Dirac operator with chiral chemical potential}
\label{sec:D5}

In this subsection we derive a relation between the singular values
and the eigenvalues of the Dirac operator with chiral (or axial)
chemical potential defined by
\begin{align}
  \label{eq:D5}
  D_5(\mu)=\gamma_\nu D_\nu+\mu\gamma_4\gamma_5\,.
\end{align}
This operator was introduced in the context of the chiral magnetic
effect \cite{Fukushima:2008xe}.

Since $D_5(\mu)$ is anti-Hermitian its eigenvalues are purely
imaginary, and since it anticommutes with $\gamma_5$ its nonzero
eigenvalues come in pairs with opposite sign.  Now, for any
right-handed (left-handed) spinor $\phi_{R}$ ($\phi_{L}$) we have
\begin{align}
  D_5(\mu)^2\phi_R&=(\gamma_\nu D_\nu+\mu\gamma_4\gamma_5)
  (\gamma_\nu D_\nu+\mu\gamma_4\gamma_5)\phi_R\notag\\
  &=(\gamma_\nu D_\nu+\mu\gamma_4\gamma_5)
  (\gamma_\nu D_\nu+\mu\gamma_4)\phi_R\notag\\
  &=(\gamma_\nu D_\nu-\mu\gamma_4)
  (\gamma_\nu D_\nu+\mu\gamma_4)\phi_R\notag\\
  &=-D(\mu)^\dagger D(\mu)\phi_R
  \intertext{and similarly}
  D_5(\mu)^2\phi_L&=-D(\mu)D(\mu)^\dagger\phi_L\,,\\
  D_5(-\mu)^2\phi_R&=-D(\mu)D(\mu)^\dagger\phi_R\,,\\
  D_5(-\mu)^2\phi_L&=-D(\mu)^\dagger D(\mu)\phi_L\,,
\end{align}
which can also be expressed in terms of the decompositions
\begin{align}
  D_5(\mu)^2&=-D^\dagger DP_R-DD^\dagger P_L\,,\\
  D_5(-\mu)^2&=-D^\dagger DP_L-DD^\dagger P_R\,,\\
  \intertext{or equivalently}
  D^\dagger D&=-D_5(\mu)^2P_R-D_5(-\mu)^2P_L\,,\\
  DD^\dagger&=-D_5(\mu)^2P_L-D_5(-\mu)^2P_R\,.
\end{align}
Let us denote the singular values corresponding to the right-handed
(left-handed) nonzero modes of $D^\dagger D$ by $\xi_{Rn}$
($\xi_{Ln}$).  From the above arguments we conclude that the nonzero
eigenvalues of $D_5(\mu)$ and $D_5(-\mu)$ are given by the sets $\{\pm
i\xi_{Rn}\}$ and $\{\pm i\xi_{Ln}\}$, respectively.

Disregarding accidental zero modes, the zero modes of $D^\dagger D$
and $DD^\dagger$ have only one chirality.  If they are all
right-handed, the zero modes of $D_5(\mu)$ ($D_5(-\mu)$) are
right-handed and given by those of $D^\dagger D$ ($DD^\dagger$).  If
they are all left-handed, the zero modes of $D_5(\mu)$ ($D_5(-\mu)$)
are left-handed and given by those of $DD^\dagger$ ($D^\dagger D$).

Since $D_5(\mu)^{\dag}=-D_5(\mu)$ and $\{D_5(\mu),\gamma_5\}=0$, the
Banks-Casher relation for the chiral condensate can be extended to
nonzero chiral chemical potential without difficulty,
\begin{align}
  \label{eq:bc}
  \langle \bar \psi \psi \rangle &= \pi \rho_{5}(0)\,, 
\end{align}
where $\rho_{5}(\lambda)$ is the spectral density of $D_5(\mu)$ with 
fermionic weight ${\det}^{N_f}D_5(\mu)$.

\section{Banks-Casher-type relations}
\label{sec:bc}

In this section we derive Banks-Casher-type relations for the three
different theories with $\beta=1$, $2$, and $4$.  In each case, a
particular condensate is related to the density of the singular values
at the origin.  For our derivations to be correct it is important that
in the computation of the partition function (including source term
for the desired condensate) the fermionic measure is positive
definite.  If this requirement is not met a probabilistic
interpretation is not possible and a number of subtleties arise that
are discussed in some detail in appendix~\ref{app:complex}.  In the
present section we restrict ourselves to cases in which the measure is
positive definite.

\subsection[Two-color QCD ($\beta=1$)]{\boldmath Two-color QCD ($\beta=1$)}
\label{sec:bc2c}

For simplicity we take $J_R=j_RI$ and $J_L=j_LI$ with $I$ given in
\eqref{eq:I} and numbers $j_R$, $j_L$ that for the time being can be
complex.  It follows from \eqref{eq:ZJ} and the discussion after
\eqref{eq:Z_theta} that the measure is positive definite only if the
combination $\ee^{i\theta}(-j_Rj_L^*)^{N_f/2}$ is real and positive.
If we assume real sources and $\theta=0$, this condition is always
satisfied for $j_R=-j_L$ (corresponding to the scalar diquark
condensate).  If $N_f/2$ is odd, it is violated for $j_R=j_L$
(corresponding to the pseudoscalar diquark condensate).  In the
following we therefore set $j_R=-j_L=j$ with $j$ real,\footnote{We
  will comment on the case $j_R=j_L$ below.} which implies
$J_L^\dagger J_LP_L+J_R^\dagger J_RP_R=j^2\1_{N_f}$.  From
\eqref{eq:ZJ} we then obtain
\begin{align}
  \label{eq:Zj}
  Z(j)=\ev{{\det}^{N_f/2}(D^\dagger D+j^2)}_\text{YM}
  =\Big\langle\prod_n(\xi_n^2+j^2)^{N_f/2}\Big\rangle_\text{YM}\,.
\end{align}
Introducing the notation
\begin{align}
  \ev{O}_j=\frac1{Z(j)}\ev{O\,{\det}^{N_f/2}(D^\dagger D+j^2)}_\text{YM}
\end{align}
for expectation values in the presence of a diquark source, we define
the density of the nonzero\footnote{See appendix~\ref{app:sign} for a
  discussion of the zero-mode contribution.} singular values of the
Dirac operator by
\begin{align}
  \label{eq:rsv}
  \rsv(\xi)=\lim_{V_4\to\infty}\frac1{V_4}
  \Big\langle\sum_n\delta(\xi-\xi_n)\Big\rangle_{j=0}
  \quad\text{for}\quad\xi>0\,,
\end{align}
where $V_4$ is the space-time volume.  The scalar diquark condensate
can then be expressed in terms of this density at the origin,
\begin{align}
  \ev{\psi^TC\gamma_5\tau_2I\psi}&=\lim_{j\to0^+}\lim_{V_4\to\infty}
  \frac1{V_4}\frac\partial{\partial j}\ln Z(j)\notag\\
  &=\lim_{j\to0^+}\lim_{V_4\to\infty}\frac1{V_4}\frac{N_f}2
  \Big\langle\sum_n\frac{2j}{\xi_n^2+j^2}\Big\rangle_j\notag\\
  &=\frac{N_f}2\int_0^\infty d\xi\,\rsv(\xi)\lim_{j\to0^+}
  \frac{2j}{\xi^2+j^2} \notag\\
  &=\frac{N_f}2\pi\rsv(0)\,,
  \label{eq:bc2c}
\end{align}
which is similar to the Banks-Casher relation for the chiral
condensate in terms of the Dirac eigenvalue density at zero.  Our
analysis can be extended to nonzero quark masses.  Assuming for
simplicity $M=m\1$ with real $m$, we again obtain \eqref{eq:Zj}, but
with $D$ and $D^\dagger$ replaced by $D+m$ and $D^\dagger+m$.  This
means that \eqref{eq:bc2c} continues to hold, except that the singular
values are now those of $D+m$.

Some comments are in order.  First, \eqref{eq:bc2c} holds at $\mu=0$
and $\mu\ne0$, whereas the original Banks-Casher relation only holds
at $\mu=0$.  Second, in the derivation of \eqref{eq:bc2c} we have
tacitly dropped the contribution of the singular values equal to zero.
As discussed in appendix~\ref{app:complex}, this is only justified if
the measure is positive definite.  Third, the integral in the third
line of \eqref{eq:bc2c} needs a proper UV regularization.  This was
discussed carefully for the original Banks-Casher relation in
\cite{Leutwyler:1992yt} and works in exactly the same way here.  The
point is that the UV-divergent part disappears in the limit $j\to0^+$.
Fourth, we observe that setting $j_R=j_L=j$ we would have obtained
$\det W=\det(D^\dagger D+j^2)$ just as in \eqref{eq:Zj}, which
formally would have led to \eqref{eq:bc2c} but with the pseudoscalar
diquark condensate $\ev{\psi^TC\tau_2I\psi}$ on the l.h.s.\ instead.
However, for odd $N_f/2$ the measure is not positive definite for $j_R
= j_L$, which invalidates the Banks-Casher relation for the
pseudoscalar condensate, see appendix~\ref{app:sign}.  For even $N_f/2$
the measure is positive for both types of sources, and what condensate
is given by $\rsv(0)$ depends on the choice of sources we
add.\footnote{\label{fn:ising}More precisely, the magnitude of the
  condensate is given by $\rsv(0)$ and its orientation by the external
  sources.  A simple analog is the Ising model, where the direction of
  the spontaneous magnetization at zero temperature depends on the
  direction of the (infinitesimal) external magnetic field.  See also
  appendix~\ref{app:complex}.} In appendix~\ref{app:qcd_ineq} we show
that this is not in contradiction with QCD inequalities.  Finally, we
note that for two flavors and in the chiral limit, the relation
\eqref{eq:bc2c} was obtained earlier by Fukushima
\cite{Fukushima:2008su}, see also \cite{Hands:1999zv,Bittner:2000rf}.
Our result differs from that of \cite{Fukushima:2008su} by a factor of
$1/2$.  The contribution of the zero modes and the positive
definiteness of the measure were not addressed in
\cite{Fukushima:2008su}.

At $\mu=0$, the spectra of the eigenvalues and singular values are
identical, and so are the spectral densities at the origin. This
implies that the chiral condensate (for $m\to 0$) and the diquark
condensate (for $j\to 0$) are of the same magnitude, which is
consistent with the fact that under a global $\SU(2N_f)$ rotation
these condensates can be rotated into each other. As is well known,
$\mu\ne 0$ breaks this degeneracy.  As $\mu$ increases (for $m\ne 0$),
the diquark condensate remains exactly zero until a critical value
$\mu_c=m_\pi/2$ is reached, and then starts growing for $\mu>\mu_c$
\cite{Kogut:2000ek}. This behavior can be naturally understood by our
Banks-Casher-type relation. For $m\ne0$ the relation reads $\langle
\psi\psi\rangle \propto \rsv(0;m)$, where $\rsv(\lambda;m)$ stands for
the singular value density of $D(\mu)+m$, as remarked below
\eqref{eq:bc2c}. For sufficiently small $\mu$, all eigenvalues of
$D(\mu)$ are still localized near the imaginary axis, and the density
of the near-zero modes of $D(\mu)+m$ is zero.  As $\mu$ increases, the
eigenvalues spread out more, and for $\mu>\mu_c$ there is a nonzero
density of eigenvalues at $\pm m$. This signals a nonzero $\rsv(0;m)$,
i.e., the onset of diquark condensation.%
\footnote{A similar discussion starting from a different method can be
  found in \cite{Splittorff:2008sw}.}

It is also possible to express the partition function in terms of the
Dirac operator with chiral chemical potential $D_5(\mu)$ defined in
\eqref{eq:D5}.  This is most easily shown working backwards starting
from \eqref{eq:Zj},
\allowdisplaybreaks[4]
\begin{align}
  \label{eq:Z5}
  Z(j)&=\Big\langle\prod_n (\xi_n^2+j^2)^{{N_f}/2}\Big\rangle_\text{YM}
  =\Big\langle j^{N_f|\nu|}
  \Big[{\prod_n}'(\xi_{Rn}^2+j^2){\prod_n}'(\xi_{Ln}^2+j^2)\Big]^{N_f/2}
  \Big\rangle_\text{YM}
  \notag  \\
  &=\Big\langle{\det}^{{N_f}/{2}}\big( D_5(\mu)+j \big) 
  {\det}^{{N_f}/{2}}\big( D_5(-\mu)+j \big)
  \Big\rangle_\text{YM}\,,
\end{align}
where $|\nu|$ is the number of topological zero modes of $D^\dagger
D$, the primed products are only over nonzero singular values, and in
the last line we have used the relationships between the eigenvalues
of $D_5$ and the singular values derived in section~\ref{sec:D5}.

\subsection[QCD with isospin chemical potential ($\beta=2$)]{\boldmath
  QCD with isospin chemical potential ($\beta=2$)}
\label{sec:iso}

It follows from \eqref{eq:app_main_result_beta=2} that the measure is
positive definite only if the combination
$\ee^{i\theta}(-\rho\lambda^*)$ is real and positive.  We therefore
choose $\theta=0$ and $\rho=-\lambda=j$ with real $j$ (another choice
for which the measure is not positive definite is considered in
appendix~\ref{app:complex-1}).  The partition function is then
\begin{align}
  Z(j)
  =\big\langle\det(D^\dagger D+j^2)\big\rangle_\text{YM}\,,
  \label{eq:Zjiso}
\end{align}
and by a calculation analogous to section~\ref{sec:bc2c} we obtain the
pion condensate
\begin{align}
  \ev{\bar u\gamma_5 d-\bar d\gamma_5 u}=\pi\rsv(0)\,.
\end{align}
Similar comments as in section~\ref{sec:bc2c} apply.  In particular,
we have dropped the contributions of the zero modes (which is
justified because the measure is positive definite), and a proper UV
regularization is understood.  As in section~\ref{sec:bc2c} we can
also express the partition function in terms of $D_5(\mu)$ and obtain
the same result as in \eqref{eq:Z5}.

\subsection[QCD with adjoint fermions ($\beta=4$)]{\boldmath QCD with
  adjoint fermions ($\beta=4$)}

We now take $J_R=j_R\1$ and $J_L=j_L\1$.  The measure in
\eqref{eq:app_main_result_beta=4} is then positive definite only if
the combination $\ee^{i\theta/N_c}(-j_Rj_L^*)^{N_f}$ is real and
positive.  As in section~\ref{sec:bc2c} we therefore assume $\theta=0$
and set $j_R=-j_L=j$ with real $j$ (see appendix~\ref{app:complex-1}
for another choice).  The partition function is then
\begin{align}
  Z(j)
  &=\ev{{\det}^{N_f/2}(D^\dagger D+j^2)}_\text{YM}\,,
\end{align}
and a calculation analogous to section~\ref{sec:bc2c} leads to
\begin{align}
  \ev{\psi^TC\gamma_5\psi}=\frac{N_f}2\pi\rsv(0)\,.
\end{align}
Again, similar comments as in section~\ref{sec:bc2c} apply, and we
could also have expressed the partition function as in \eqref{eq:Z5}.

\section{\boldmath Chiral Lagrangians with diquark source $(\beta=1)$}
\label{sec:eff}

From now on we concentrate on two-color QCD ($\beta=1$) for simplicity
and brevity.  Results similar to those obtained in sections
\ref{sec:eff}--\ref{sec:ls} could also be derived for the theories
with $\beta=2$ and $\beta=4$ using the methods employed here, but we
will not pursue this here.

\subsection{Three different regimes: low, intermediate, and high density}
\label{sec:three}

We now construct the low-energy effective theory for two-color QCD at
nonzero density in the presence of a diquark source and in the chiral
limit.  There are actually three different effective theories,
applicable at low, intermediate, and high density.  In the following
we will refer to them as \L, \I, and \H, respectively.  The effective
theory \L\ is constructed under the assumption of maximal chiral
symmetry breaking at low density \cite{Kogut:1999iv,Kogut:2000ek},
while the effective theory \I\ is constructed assuming the conjectured
BEC-BCS crossover discussed in the introduction.\footnote{From this
  point of view, our results from the effective theory \I\ below
  should be viewed as predictions to be verified in future lattice
  simulations, which would then (dis)confirm the conjectured BEC-BCS
  crossover.  Other possibilities for effective theories in the
  intermediate-density regime have been considered earlier, see, e.g.,
  \cite{Lenaghan:2001sd,Harada:2010vy}.  In the model of
  \cite{Lenaghan:2001sd} certain vector mesons could become massless
  at nonzero density, while in \cite{Harada:2010vy} this does not
  happen.  Lattice studies on this issue are inconclusive
  \cite{Hands:2007uc,Hands:2010gd}.  Here, we assume that all vector
  mesons remain massive at all densities.} On the other hand, the
effective theory \H\ is constructed based on a rigorous weak-coupling
analysis at high density \cite{Kanazawa:2009ks}.

In this subsection we discuss how the three density regimes differ in
their patterns of spontaneous symmetry breaking and the number of
Nambu-Goldstone (NG) modes and comment on the connection between the
three regimes.  The effective theories themselves and mass formulas
for the NG modes will then be derived in the next three subsections,
and their domains of validity will be discussed in
section~\ref{sec:val}.

At very low density one can start from the pattern of chiral symmetry
breaking at \emph{zero} density,
\begin{align}
  \label{eq:SBP_low}
  \SU(2N_f)\to\Sp(2N_f)\,,
\end{align}
and treat the chemical potential and the diquark source as a small
perturbation.  This is the approach taken in
\cite{Kogut:1999iv,Kogut:2000ek}.  The NG modes of the theory are the
same as those of the zero-density theory, i.e., they are collected in
a field $\Sigma$ that parametrizes the coset space
$\SU(2N_f)/\Sp(2N_f)$.  $\SU(N)$ and $\Sp(N)$ have $N^2-1$ and
$N(N+1)/2$ generators, respectively, hence the number of NG modes in
this regime is $N_f(2N_f-1)-1$.  One should keep in mind, however,
that some of these modes acquire a mass as $\mu$ is increased.

At intermediate density, when $\mu$ can no longer be treated as a
small perturbation, we first recall that $\mu$ breaks the original
$\SU(2N_f)$ symmetry to $\SU(N_f)_L\times\SU(N_f)_R\times\U(1)_B$.  A
diquark condensate then breaks this symmetry to
$\Sp(N_f)_L\times\Sp(N_f)_R$ so that the symmetry-breaking pattern is
now
\begin{align}
  \label{eq:SBP}
  \SU(N_f)_L\times\SU(N_f)_R\times\U(1)_B\to\Sp(N_f)_L\times\Sp(N_f)_R\,.
\end{align}
The corresponding NG modes are $\Sigma_L,\Sigma_R\in\SU(N_f)/\Sp(N_f)$
and $V\in\U(1)$, and the total number of NG modes in this regime is
$N_f(N_f-1)-1$.

At very high density the $\U(1)_A$ anomaly is suppressed due to the
screening of instantons
\cite{Son:2001jm,Schafer:2002ty,Schafer:2002yy}.  We therefore need to
take the original $\U(1)_A$ symmetry of the action into account.  It
is no longer broken explicitly by the anomaly but spontaneously by the
diquark condensate so that the symmetry-breaking pattern is
\begin{align}
  \label{eq:SBP_high}
  \SU(N_f)_L\times\SU(N_f)_R\times\U(1)_B\times\U(1)_A
  \to\Sp(N_f)_L\times\Sp(N_f)_R\,.
\end{align}
The NG modes are the same as at intermediate density, except that
there is an additional NG mode corresponding to $\U(1)_A$ which we
call $\eta'$.  In other words, the $\eta'$ mass has become negligible.
Hence the total number of NG modes in this regime is $N_f(N_f-1)$.

Let us make some qualitative comments on the connection between the
three regimes, starting at zero density.  Assuming the diquark sources
to be infinitesimally small, there are $N_f(2N_f-1)-1$ massless NG
modes at $\mu=0$.  As $\mu$ is increased, $N_f(N_f-1)-1$ of them
remain massless while $N_f^2$ of them acquire a $\mu$-dependent mass
\cite{Kogut:2000ek}.  Starting from the effective theory at zero
density, these $N_f^2$ modes can be integrated out.  This yields the
effective theory at intermediate density, which does not depend
explicitly on $\mu$ but contains parameters (i.e., low-energy
constants) that have acquired a $\mu$-dependence through the
integrating-out of the massive modes.  A similar argument applies
starting at high density.  As $\mu$ is lowered, the $\U(1)_A$ anomaly
reappears so that the $\eta'$ becomes massive and can be integrated
out.  These comments are illustrated in figure~\ref{fig:NG} and will
be made more quantitative in the next subsections.

\begin{figure}
  \centering
  \includegraphics{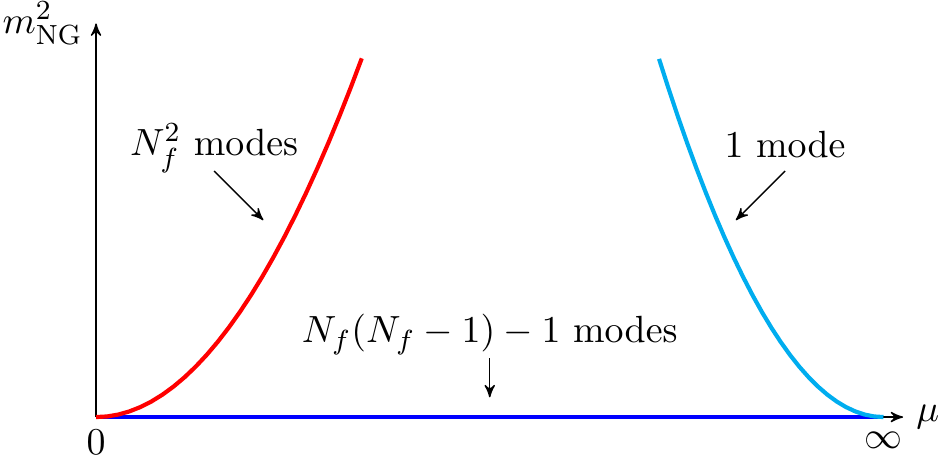}
  \caption{Schematic $\mu$-dependence of the masses of the NG modes.
    See text for details.}
  \label{fig:NG}
\end{figure}

Before proceeding, let us discuss a somewhat subtle issue regarding
the parity of the diquark condensate.  In general, when constructing
an effective theory, one starts from an \emph{assumption} of how the
symmetries of the theory are broken, i.e., this assumption is an input
to the effective theory.  In the construction of the effective
theories below, one assumption we have to put in is whether the scalar
or the pseudoscalar diquark condensate minimizes the ground-state
energy.  For nonzero mass and zero diquark source it was shown by QCD
inequalities that if a diquark condensate forms, it does so in the
scalar channel \cite{Kogut:1999iv}.  However, for zero mass and
nonzero diquark source QCD inequalities do not provide any information
on this question, see appendix~\ref{app:qcd_ineq}.  The assumption we
will make is that the diquark condensate again forms in the scalar
channel.  This assumption is based on instanton dynamics at high
density, see, e.g., the review \cite{Rapp:1999qa}. The instanton
vertex is $c\det(\psi_L^{\dag} \psi_R) + \text{h.c.}$ with $c>0$,
where we have $N_f$ legs each for $\psi_R$ and $\psi_L$
\cite{'tHooft:1976fv}.  Taking the expectation value with respect to
the diquark-condensed ground state, the contribution of the instanton
vertex to the energy is $c' \ev{\smash{\psi_L^\dag
    \psi_L^\dag}}^{N_f/2} \ev{\psi_R \psi_R}^{N_f/2}$ with $c'>0$.
For $N_f=4n+2$ with $n\in\N$ we obtain $c'(\ev{\smash{\psi_L^\dag
    \psi_L^\dag}} \ev{\psi_R \psi_R})^{2n} \times
\ev{\smash{\psi_L^\dag \psi_L^\dag}}\ev{\psi_R \psi_R}$, which is
negative for $\ev{\psi_L \psi_L} = -\ev{\psi_R \psi_R}$ but positive
for $\ev{\psi_L \psi_L} = \ev{\psi_R \psi_R}$. Therefore, the
positive-parity state is favored by instantons.  We assume that this
argument carries over to low density based on the conjectured BEC-BCS
crossover.  For $N_f = 4n$ the argument does not apply since the
contribution to the energy is $c'(\ev{\smash{\psi_L^\dag
    \psi_L^\dag}}\ev{\psi_R \psi_R})^{2n}$, which is not affected by a
relative sign between $\ev{\psi_L \psi_L}$ and $\ev{\psi_R \psi_R}$.
Although in the construction of the effective theories below we do not
distinguish between $N_f=4n$ and $N_f = 4n+2$, we will see that for
$N_f=4n$ the positive- and negative-parity states turn out to be
degenerate, while for $N_f=4n+2$ the pseudoscalar condensate is
suppressed or vanishes completely.  This is also consistent with our
analysis in section~\ref{sec:bc2c}: For $N_f=4n$ the microscopic
theory always has a positive definite measure so that we can derive
Banks-Casher-type relations for both the scalar and the pseudoscalar
diquark condensate, showing that their magnitudes are equal.  For
$N_f=4n+2$ the pseudoscalar diquark source leads to an indefinite
measure in the microscopic theory, and therefore we cannot conclude
anything about the relative magnitude of the two condensates.

\subsection{Effective theory \L\ at low density}
\label{sec:eff_low}

At low density the effective theory derived in
\cite{Kogut:1999iv,Kogut:2000ek} uses as degrees of freedom the NG
modes corresponding to the symmetry-breaking pattern
\eqref{eq:SBP_low} at zero density.  We briefly review and extend the
relevant results here.  In the chiral limit, the leading-order
effective Lagrangian is given by
\begin{align}
  \label{eq:Lefflow}
  \mL_\text{eff}^\L=\frac{F^2}2\tr(\nabla_\nu\Sigma\nabla_\nu\Sigma^\dagger)
  -\Phi_{\L}\re\tr(\bar J\Sigma)
\end{align}
with \allowdisplaybreaks[4]
\begin{align}
  \nabla_\nu\Sigma&=\partial_\nu\Sigma-\mu\delta_{\nu0}
  (B\Sigma+\Sigma B)\,,\\
  \nabla_\nu\Sigma^\dagger&=\partial_\nu\Sigma^\dagger
  +\mu\delta_{\nu0}(\Sigma^\dagger B+B\Sigma^\dagger)\,,\\
  B&=\begin{pmatrix}\1_{N_f} & 0 \\ 0 & -\1_{N_f}\end{pmatrix},
  \qquad \bar J=\begin{pmatrix} J_L & 0 \\ 0 & -J_R^\dagger \end{pmatrix}.
  \label{eq:Jbar}
\end{align}
The field $\Sigma$ parametrizes the coset space $\SU(2N_f)/\Sp(2N_f)$,
\begin{align}
  \label{eq:param_low}
  \Sigma=U\Sigma_dU^T\quad\text{with}\quad
  \Sigma_d=\begin{pmatrix}I\;&0\\0\;&-I\end{pmatrix}\quad\text{and}\quad
  U=\exp\left(\frac{i \pi^a T^a}{2F} \right),
\end{align}
where the $T^a$ ($a=1,\ldots,N_f(2N_f-1)-1$) are the generators of
$\SU(2N_f)/\Sp(2N_f)$.  They are Hermitian and satisfy $\tr(T^a)=0$
and $\tr(T^aT^b)=\delta_{ab}$.  A sum over the repeated index $a$ in
\eqref{eq:param_low} is understood.

There are four minor differences with respect to \cite{Kogut:2000ek}.
First, only the case of $J_R=-J_L$ was considered there.  Second, our
$\Phi_{\L}$ corresponds to their $G$.  Third, the analysis in
\cite{Kogut:2000ek} also includes a nonzero quark mass $m$.  We only
consider $m=0$, in which case the chiral condensate immediately
disappears as $\mu$ is switched on, while the diquark condensate
immediately assumes its full value, see section~12 of
\cite{Kogut:2000ek}.  Fourth, to be consistent with the rest of the
current paper, our convention for the diquark source and consequently
for $\Sigma_d$ differs from \cite{Kogut:2000ek}.  This follows from
requiring parity invariance of the microscopic Lagrangian (which
implies $J_L\leftrightarrow-J_R$ under parity), of the diquark
condensate (which determines the form of $\Sigma_d$), and of the
effective theory (which determines the form of $\bar J$ and the second
term in \eqref{eq:Lefflow}).  As a consistency check, we note that the
minimum of the action is obtained for $\Sigma=\Sigma_d$ and
$J_R=-J_L=jI$ with $j$ real and positive.

In \eqref{eq:Lefflow} there are two low-energy constants. $F$ is the
common decay constant of all NG modes, and the positive parameter
$\Phi_{\L}$ is the magnitude of the diquark condensate per flavor and
handedness in the absence of sources and at $\mu=0$,
\begin{align}
  \label{eq:PhiL}
  \Phi_{\L}=\frac1{N_f}\left|\ev{\psi^T_iC\tau_2I\psi_i}
  \right|_{\bar J=0,\,\mu=0}\qquad(i=L,R)\,.
\end{align}
It is important to note that $F$ and $\Phi_{\L}$ do \emph{not} depend
on $\mu$.

For $J_R=-J_L=jI$ the masses of the NG modes have been computed in
\cite[eq.~(101)]{Kogut:2000ek}, and in the chiral limit we have
$\phi=\alpha=\pi/2$ and $m_\pi=\sqrt{j\Phi}/F$ in these expressions.
Hence there are two types of NG modes,
\begin{subequations}
  \label{eq:GOR_low}
  \begin{align}
    \label{eq:GOR_low_a}
    \text{type 1: } &&& \text{mass}=\sqrt{j\Phi_\L/F^2} &&
    (N_f^2-N_f-1\text{ modes})\,,\\
    \label{eq:GOR_low_b}
    \text{type 2: } &&& \text{mass}=\sqrt{j\Phi_\L/F^2+(2\mu)^2} &&
    (N_f^2\text{ modes})\,.
  \end{align}
\end{subequations}
While the type-1 modes are massless in the $j\to0$ limit,
\eqref{eq:GOR_low_b} shows that $\mu$ is an explicit symmetry-breaking
parameter that makes the type-2 modes massive even for $j=0$.

Let us comment on the source $J_L=J_R$ for the pseudoscalar diquark
condensate.\footnote{This is equivalent to $J_L=-J_R$ with
  $\theta=N_f\pi/2$, see section~\ref{sec:micro1}, i.e., instead of
  studying the $J$-dependence we could equivalently study the
  $\theta$-dependence.  This statement applies at any density.}  For
$N_f=4n$ we define $u=\diag(i\1_{4n},\1_{4n})$.  Since $u\in\SU(8n)$
we can redefine $U\to uU$ in \eqref{eq:param_low}.  The measure is not
changed by this transformation, but in the term $\bar J\Sigma$ in
\eqref{eq:Lefflow} the sign of $J_L$ in \eqref{eq:Jbar} is flipped.
This means that the partition functions (and hence the energies) for
$J_L=-J_R$ and for $J_L=J_R$ are exactly equal.  This is consistent
with the observation that the instanton vertex does not prefer one
condensate over the other for $N_f=4n$, see the discussion at the end
of section~\ref{sec:three}.  For $N_f=2$ we find that a pseudoscalar
diquark source term drops out from \eqref{eq:Lefflow} since
$\re\tr(\bar J\Sigma)=\re\tr[\diag(I,I)U\diag(I,-I)U^T]=0$ for
$U\in\SU(4)$.\footnote{This can be shown using an explicit
  parameterization of the coset space $\SU(4)/\Sp(4)$, see, e.g.,
  \cite{Brauner:2006dv}.}  Hence the pseudoscalar diquark condensate
is zero (in this order of the low-energy expansion), which is again
consistent with the instanton-based argument in
section~\ref{sec:three}.  For $N_f=4n+2$ with $n\ge1$ the situation is
more complicated.  While we currently cannot make any definite
analytical statement, numerical experimentation indicates that for
$J_R=J_L=jI$ the minimum of the energy is larger than for
$J_R=-J_L=jI$.  If true, this would mean that by changing the diquark
source we can generate a nonzero pseudoscalar diquark condensate whose
magnitude is smaller than that of the scalar condensate.

\subsection{Effective theory \H\ at high density}
\label{sec:eff_high}

For technical reasons we now proceed to the regime of very high
density.  The effective chiral Lagrangian including mass term for this
case was derived in \cite{Kanazawa:2009ks}.  Here, we are interested
in diquark sources and therefore set the mass term to zero.

We first consider the case of $N_f\ge4$.  The symmetry-breaking
pattern is given in \eqref{eq:SBP_high}.  By forming linear
combinations of the generators of $\U(1)_B$ and $\U(1)_A$ we can
switch from $\U(1)_B\times\U(1)_A$ to $\U(1)_L\times\U(1)_R$.  We
parametrize the NG modes as
\begin{align}
  \label{eq:param_high}
  \Sigma_i&=U_i I U_i^T \quad\text{with}\quad
  U_i=\exp\left(\frac{i \pi_i^a T^a}{2\fh} \right) \quad (i=L,R)\,,\\
  \label{eq:param_high_LR}
  L&=\exp\left(\frac{i\pi_L^0}{\sqrt{N_f}\fh_0}\right), \qquad 
  R=\exp\left(\frac{i\pi_R^0}{\sqrt{N_f}\fh_0}\right).
\end{align}
The parameterization of the $\Sigma_i$ is similar to
\eqref{eq:param_low}, except that the $T^a$
($a=1,\ldots,N_f(N_f-1)/2-1$) are now the Hermitian generators of
$\SU(N_f)/\Sp(N_f)$, again satisfying $\tr(T^a)=0$ and
$\tr(T^aT^b)=\delta_{ab}$.\footnote{The $T^a$ are related to the $X^a$
  in \cite{Kanazawa:2009ks} by $T^a=X^a/\sqrt{N_f}$.}  Using
$U_iIU_i^T=U_i^2I$ \cite{Kogut:2000ek} 
we can also write $\Sigma_i=U_i^2I$.

The quarks transform under $\SU(N_f)_{L}\times \SU(N_f)_{R}\times
\U(1)_L\times \U(1)_R$ as
\begin{align}
  \psi_L\to \ee^{i\alpha_L}g_L\psi_L\,,\qquad
  \psi_R\to \ee^{i\alpha_R}g_R\psi_R\,,
\end{align}
where $g_i\in\SU(N_f)_i$ and $\ee^{i\alpha_i}\in\U(1)_i$ ($i=L,R$).
The NG modes therefore transform as
\begin{align}
  \Sigma_i\to g_i\Sigma_i g_i^T\quad (i=L,R)\,,\qquad
  L\to L\ee^{2i\alpha_L}\,,\qquad R\to R\ee^{2i\alpha_R}\,.
\end{align}
The transformation properties of $J_L$ and $J_R$ are determined by
requiring that the La\-gran\-gian \eqref{eq:Lf} be invariant under the
flavor symmetries.  This implies
\begin{align}
  J_L\to g_L^*J_Lg_L^\dagger\ee^{-2i\alpha_L}\,,\qquad J_R\to
  g_R^*J_Rg_R^\dagger\ee^{-2i\alpha_R}\,.
  \label{eq:transformation}
\end{align}
Therefore the invariant real combination linear in $J_L$ and $J_R$ is
uniquely determined to be
\begin{align}
  \label{eq:inv}
  \re\big[L\tr(J_L\Sigma_L)-R\tr(J_R\Sigma_R)\big]\,.
\end{align}
As for theory \L, we required parity invariance of the microscopic
theory (implying $J_L\leftrightarrow-J_R$), of the diquark condensate
(implying $\Sigma_L\leftrightarrow\Sigma_R$ and $L\leftrightarrow R$),
and of the effective theory (leading to the relative factor of $-1$ in
\eqref{eq:inv}).  The leading-order effective Lagrangian in the
presence of diquark sources and in the chiral limit is thus given by
\begin{align}
  \label{eq:lagrangian_high}
  \mL_\text{eff}^\H&=\bigg[\frac{N_f\fh_0^2}{2}\big(|\partial_0 L|^2
  +\vh_0^2|\partial_i L|^2 \big)+\frac{\fh^2}{2}\tr\big(
  |\partial_0 \Sigma_L |^2+\vh^2 |\partial_i \Sigma_L |^2 \big)
  + (L \leftrightarrow R)\bigg]\notag\\
  &\quad-\Phi_{\H} \re\tr(J_LL\Sigma_L-J_RR\Sigma_R)
  -\frac{2\fh_0^2}{N_f}m_\text{inst}^2\re\,(L^\dagger R)^{N_f/2}\,,
\end{align}
where $\fh_0,\fh$ and $\vh_0,\vh$ are low-energy constants that
correspond to the decay constants and velocities of the NG modes,
respectively.  The latter are generally different from unity (i.e.,
the speed of light) since Lorentz invariance is lost at $\mu\ne0$.
The minus sign in front of the positive low-energy constant
$\Phi_{\H}$ is chosen so that the minimum of the action is obtained
for $L=R=1$, $\Sigma_L=\Sigma_R=I$, and $J_R=-J_L=jI$ with $j$ real
and positive.  Note that all low-energy ``constants'' in
$\mL_\text{eff}^\H$ depend on $\mu$.  Their relation to physical
observables will be discussed in section~\ref{sec:match}.

In \eqref{eq:lagrangian_high} we have also included a term that
corresponds to the single-instanton contribution to the $\eta'$ mass,
parametrized by $m_\text{inst}$.  This term is symmetric under the
anomaly-free subgroup $(\Z_{2N_f})_A$ of $\U(1)_A$.  From the symmetry
point of view all terms of the form $\re(L^{\dag} R)^{nN_f/2}$ ($n
\geq 1$) are allowed and contribute to the $\eta'$ mass.
Microscopically, these terms correspond to $n$-instanton vertices. At
sufficiently large $\mu$, the instanton ensemble can be regarded as a
dilute gas \cite{Son:2001jm,Schafer:2002ty,Schafer:2002yy} which does
not form instanton molecules \cite{Yamamoto:2008zw}. The diluteness of
the instanton gas is parametrized by the dimensionless quantity
(proportional to the instanton density) $a \propto (\Lambda_{\rm
  QCD}/\mu)^{b(N_f)} \ll 1$ with $b(N_f)= (22-2N_f)/3$.  Since the
probability $\sim a^n$ that $n$ instantons ($n \geq 2$) are at the
same point is highly suppressed we can neglect the multi-instanton
vertices and only keep the one-instanton contribution.  Note that
$m_\text{inst}$ is a decreasing function of $\mu$, with
$m_\text{inst}\to0$ for $\mu\to\infty$
\cite{Son:2001jm,Schafer:2002ty,Schafer:2002yy}.

It is convenient to combine the $\U(1)_i$ field with
$\SU(N_f)_i/\Sp(N_f)_i$ by defining
\begin{align}
  \label{eq:Sigma_tilde}
  \tilde \Sigma_L=L \Sigma_L = \exp\left(\frac{i\pi_L^A T^A}{\fh_A}
  \right)I \quad\text{and}\quad (L \leftrightarrow R)\,,
\end{align}
where the $T^A$ ($A=0,\ldots,N_f(N_f-1)/2-1$) are now the generators
of $\U(N_f)/\Sp(N_f)$ with $T^0=\1/\sqrt{N_f}$ so that
$\tr(T^AT^B)=\delta_{AB}$.\footnote{We use the convention that
  uppercase indices corresponding to $\U(N_f)/\Sp(N_f)$ start at zero,
  while lowercase indices corresponding to $\SU(N_f)/\Sp(N_f)$ start
  at 1.}  We also defined $\fh_A=\fh$ for $A\ge1$.  To second order in
the $\pi$-fields we have
\begin{align}
  \label{eq:re_Sigma_tilde}
  \re\tilde \Sigma_i=\left(1-\frac{\pi_i^A\pi_i^B T^A T^B}
    {2\fh_A \fh_B}\right)I\,.
\end{align}
Assuming $J_R=-J_L=jI$ with $j$ real and positive, this yields a
Gell-Mann--Oakes--Renner (GOR) type mass formula for the $\pi_i^A$.
As in theory \L\ there are two types of NG modes,
\begin{subequations}
  \label{eq:GOR_high}
  \begin{align}
    \label{eq:GOR_high_a}
    \text{type 1: } &&& m_A=\sqrt{j\Phi_{\H}/\fh_A^2} &&
    (N_f^2-N_f-1\text{ modes})\,,\\
    \label{eq:GOR_high_b}
    \text{type 2: } &&& m_{\eta'}=\sqrt{j\Phi_\H/\fh_0^2+m_\text{inst}^2}
    && (1\text{ mode})\,.
  \end{align}
\end{subequations}
Note that there are two type-1 modes for each $A\ge1$, but only a
single one for $A=0$.  Note also the similarity with
\eqref{eq:GOR_low}.  Now $m_\text{inst}$ plays the role of the
symmetry-breaking parameter which makes the type-2 mode massive as
$\mu$ is lowered.

Let us now consider the case of $N_f=2$, in which $\Sigma_L$ and
$\Sigma_R$ are absent because $\SU(2)\sim\Sp(2)$.  Thus the effective
Lagrangian contains only the fields $L$ and $R$.  Since any $2\times
2$ antisymmetric matrix is proportional to $I$ we can write $J_L=j_L
I$ and $J_R=j_RI\ (j_L,\,j_R\in\C)$ without loss of generality.  These
terms transform as
\begin{align}
  \label{eq:transf2}
  j_L\to j_L \ee^{-2i\alpha_L}\,,\qquad j_R\to j_R\ee^{-2i\alpha_R}\,,
\end{align}
which follows from \eqref{eq:transformation} and
$g_i^*Ig_i^\dagger=(\det g_i^*)I=I$ ($i=L,R$). Therefore the invariant
real combination linear in $j_R$ and $j_L$ is given by
\begin{align}
  \re\big(j_LL-j_RR\big)\,.
\end{align}
Using the same parameterization of $L$ and $R$ as in
\eqref{eq:param_high_LR}, the effective Lagrangian for $N_f=2$ reads
\begin{align}
  \label{eq:lagrangian_high2}
  \mL_\text{eff}^\H=\fh_0^2\left[|\partial_0 L|^2+\vh_0^2|\partial_i L|^2 
    + (L \leftrightarrow R)\right] + 2\Phi_{\H} \re(j_LL-j_RR) 
  -\fh_0^2m_\text{inst}^2 \re(L^\dagger R)\,,
\end{align}
where the plus sign in front of $\Phi_\H$ and the factor of $-1$
between the two terms following it have been chosen so that the
minimum of the action is obtained for $L=R=1$ and $j_R=-j_L=j$ with
$j$ real and positive.  The GOR-type relation for this case is
identical to \eqref{eq:GOR_high} with $A=0$.

We again comment on the case of $J_L=J_R$.  Note first that for
$m_\text{inst}=0$ (i.e., at asymptotically high density) the fields
$L$ and $R$ in \eqref{eq:lagrangian_high} or
\eqref{eq:lagrangian_high2} can rotate independently, and hence the
left and right diquark condensates in the ground state can be rotated
separately by varying the directions of $J_L$ and $J_R$.  This is no
longer true when $m_\text{inst}\ne 0$.  Since the anomaly term favors
$L=R$ energetically, the fields $L$ and $R$ can no longer rotate
independently.  Now let us again discuss the various cases of $N_f$.
For $N_f=4n$ we can redefine $U_L\to uU_L$ in \eqref{eq:param_high}
with $u=\diag(i\1_{4n})\in\SU(4n)$, which flips the sign of $\Sigma_L$
and thus absorbs a sign flip of $J_L$ in \eqref{eq:lagrangian_high}.
So again the energies for $J_L=-J_R$ and $J_L=J_R$ are equal, in
agreement with the instanton-based argument in
section~\ref{sec:three}.  For $N_f=2$ and $j_R=-j_L>0$ the diquark
source term and the anomaly term in \eqref{eq:lagrangian_high2} are
minimized simultaneously at $L=R=1$, and thus the ground state is not
changed by the anomaly term. However, for $j_R=j_L>0$ there is a
competition between these two terms, and they cannot be minimized
simultaneously. Therefore the pseudoscalar diquark condensate can be
realized only if the diquark sources are strong enough to overcome the
penalty due to the anomaly term.  For $N_f=4n+2$ with $n\ge1$ the
situation is similar but a bit more complicated.  For $J_R=-J_L=jI$
both the diquark source term and the anomaly term in
\eqref{eq:lagrangian_high} can be minimized simultaneously.  For
$J_R=J_L=jI$ we suspect, although we currently cannot prove it
analytically, that this cannot be done.  A more quantitative study is
required to determine the magnitude of the pseudoscalar diquark
condensate in this case.

\subsection{Effective theory \I\ at intermediate density}
\label{sec:eff_int}

At intermediate density, the coupling constant is not small enough to
treat instantons as a dilute screened gas, and hence the $\U(1)_A$
anomaly can no longer be treated as a small perturbation.  In other
words, $m_\text{inst}$ increases as $\mu$ is lowered so that the
$\eta'$ mass in \eqref{eq:GOR_high_b} is not necessarily small,
implying that the $\eta'$ should be integrated out from the effective
theory.  Technically, this means that $L$ and $R$ should be replaced
by a single $\U(1)$ field $V$, i.e., the effective Lagrangian
\eqref{eq:lagrangian_high} for $N_f\ge4$ changes to
\begin{align}
  \mL_\text{eff}^\I&=N_ff_0^2\left[|\partial_0 V|^2
  +v_0^2|\partial_i V|^2 \right] +\frac{f^2}{2}\tr\left[|\partial_0
  \Sigma_L |^2 +v^2 |\partial_i \Sigma_L |^2 
  + (L \leftrightarrow R)\right]\notag\\
  &\quad -\Phi_{\I} \re\left[V\tr (J_L\Sigma_L-J_R\Sigma_R)\right]
  \label{eq:int}  
\end{align}
corresponding to the symmetry-breaking pattern \eqref{eq:SBP}.  We
have renamed the low-energy ``constants'' to distinguish them from
those of theory \H.  Note that they again depend on $\mu$.  In
section~\ref{sec:match} we will discuss how they are related to the
low-energy constants of theory \L\ and \H\ at low and high density,
respectively.

The NG modes $\Sigma_{L/R}$ are pa\-ra\-me\-trized as in
\eqref{eq:param_high} with $\fh$ replaced by $f$, while
\begin{align} 
  V&=\exp\left(\frac{i\pi_V^0}{\sqrt{2N_f}f_0}\right).
  \label{eq:param_V}
\end{align}
To second order in the $\pi$-fields we now have
\begin{align}
  \label{eq:SigmaV}
  \re (\Sigma_iV)=\left(1-\frac{(\pi_V^0)^2}{4N_ff_0^2}
    -\frac{\pi_V^0\pi_i^aT^a}{\sqrt{2N_f}f_0f}
    -\frac{\pi_i^a\pi_i^bT^aT^b}{2f^2}\right)I\,.
\end{align}
Assuming again $J_R=-J_L=jI$ with $j$ real and positive, we obtain a
GOR-type mass formula analogous to \eqref{eq:GOR_high_a},
\begin{align}
  \label{eq:GOR_int}
  m_A=\sqrt{j\Phi_\I/f_A^2}\qquad(N_f^2-N_f-1\text{ modes})\,,
\end{align}
where $f_A=f$ for $A\ge1$.  Note that we have only type-1 modes in
theory \I.

For $N_f=2$ the effective Lagrangian changes to
\begin{align}
  \label{eq:int2}
  \mL_\text{eff}^\I=2f_0^2\left[|\partial_0 V|^2                         
  +v_0^2|\partial_i V|^2 \right]+2\Phi_{\I}\re[(j_L-j_R)V]\,,
\end{align}
and the GOR-type relation for the single NG mode is
$m_0=\sqrt{j\Phi_{\I}/f_0^2}$ as in \eqref{eq:GOR_int}.

Finally, we again consider the case of $J_L=J_R$.  For $N_f=4n$ the
argument and the conclusion are exactly the same as in theory \H.  For
$N_f=2$ the pseudoscalar diquark source term in \eqref{eq:int2} drops
out trivially since $j_L=j_R$, and hence the pseudoscalar diquark
condensate is zero (in this order of the low-energy expansion) as in
theory \L.  For $N_f=4n+2$ we suspect, although we currently cannot
prove it analytically, that for $J_R=J_L=jI$ the minimum of the
ground-state energy is larger than for $J_R=-J_L=jI$.  If true, we are
led to the same conclusion as in theory \L.

\subsection{Domains of validity}
\label{sec:val}

In the following discussion we assume that the diquark sources are
infinitesimal.  In general, there are two conditions for an effective
theory formulated in terms of NG modes to be applicable: (i) the
masses of all NG modes must be much smaller than the mass scale
$m_\ell$ of the lightest non-NG particle, and (ii) the typical scale
$p$ of observables computed within the effective theory must also be
much smaller than $m_\ell$.  Of course, $m_\ell$ itself must be
nonzero.  To figure out the domains of validity of the three effective
theories at nonzero density we must determine the mass scale $m_\ell$
of each theory, which generically is a function of $\mu$.

For the effective theory \L\ we have $m_\ell(\L)\sim \Lambda$, where
$\Lambda$ is the mass of the lightest non-NG particle at zero density.
As $\mu$ is increased from zero, some of the NG modes of \L\ acquire a
mass proportional to $\mu$.  The effective theory \I\ is obtained from
\L\ by integrating out these modes so that for \I\ at low density we
have $m_\ell(\I)\sim\mu$.

The situation at high density is somewhat more complicated.  At
asymptotically high density the $\eta'$ is massless and
$\Delta\ll\mu$.  For the effective theory \H\ we have
$m_\ell(\H)\sim\Delta$.  There are two ways to see this.  First,
$\Delta$ plays the role of a constituent quark mass so that the
lightest non-NG particles (color singlet diquarks and mesons) weigh
about $2\Delta$ \cite{Son:2000by}.  Second, the higher-order vertices
in the effective Lagrangian are suppressed by $1/\Delta$, while loop
integrals are suppressed by $1/\fh_A$ \cite{Schafer:2003vz}. Since
$\fh_A\sim\mu$ \cite{Son:1999cm} and $\Delta\ll\mu$ the cutoff is
$\Delta$.  Let us now lower the density so that the $\eta'$ becomes
massive, but let $\mu$ be large enough so that we still have
$m_{\eta'}\ll\Delta\ll\mu$
\cite{Son:2001jm,Schafer:2002ty,Schafer:2002yy}.  The effective theory
\I\ is now obtained from \H\ by integrating out the $\eta'$ so that
for \I\ in this regime we have $m_\ell(\I)\sim m_{\eta'}$.  As the
density is lowered further there are two possible scenarios:
\begin{enumerate}
\item There could be a ``critical'' chemical potential $\mu_c$ at
  which $m_{\eta'}=\Delta$ and below which $m_{\eta'}>\Delta$.  In
  that case $m_\ell(\I)\sim\Delta$ for
  $\mu_\text{BCS}\lesssim\mu\lesssim\mu_c$, where $\mu_\text{BCS}$ is
  the chemical potential above which we are in the BCS
  regime.\footnote{Although there is no phase transition between the
    BEC and BCS regimes, we can define $\mu_\text{BCS}$ as the
    chemical potential above which the minimum of the dispersion
    relation of the fermionic quasiparticles changes from $p=0$ to $p
    \neq 0$, with $p$ being the momentum \cite{leggett2006quantum}.}
\item We could have $m_{\eta'}<\Delta$ for all
  $\mu\gtrsim\mu_\text{BCS}$.  Then $m_\ell(\I)\sim m_{\eta'}$ for all
  $\mu\gtrsim\mu_\text{BCS}$.
\end{enumerate}
Since we only know the functions $\Delta(\mu)$ and $m_{\eta'}(\mu)$ at
asymptotically large $\mu$ we do not know which of these two scenarios
is correct.\footnote{In these asymptotic functions we always have
  $m_{\eta'}<\Delta$, but this does not tell us anything about the
  regime where $\mu$ is not asymptotically large.} This can only be
decided by a full dynamical calculation, e.g., in lattice QCD.
However, the two scenarios can be combined in the statement
$m_\ell(\I)\sim\min(\Delta(\mu),m_{\eta'}(\mu))$ for
$\mu\gtrsim\mu_\text{BCS}$.\footnote{In the preceding arguments we
  have completely ignored gluons, even though they are lighter than
  the $\eta'$ at sufficiently large $\mu$ \cite{Schafer:2002yy}, since
  their interaction with NG modes is assumed to be negligibly small,
  see the discussion in \cite{Kanazawa:2009ks}.}

\begin{table}[t]
  \centering
  \begin{tabular}{|c||c|c|}
    \hline
    effective theory & condition (i) & condition (ii) \\ \hline \hline
    \L\ & $\mu\ll \Lambda$ & $p\ll \Lambda$\\ \hline
    \I\ (scenario 1) & --- &
    \begin{tabular}{ll}
      $p\ll\mu$ & for $\mu\lesssim \Lambda$\\
      $p\ll\Delta$ & for $\mu_\text{BCS}\lesssim\mu\lesssim\mu_c$\\
      $p\ll m_{\eta'}$ & for $\mu\gtrsim\mu_c$
    \end{tabular}\\ \hline
    \I\ (scenario 2) & --- &
    \begin{tabular}{ll}
      $p\ll\mu$ & for $\mu\lesssim \Lambda$\\
      $p\ll m_{\eta'}$ & for $\mu\gtrsim\mu_\text{BCS}$
    \end{tabular}\\ \hline
    \H\ & $m_{\eta'}\ll\Delta$ & $p\ll\Delta$ \\ \hline
  \end{tabular}
  \caption{Domains of validity of the three effective theories 
    \L, \I, and \H, see text for details.}
  \label{tab:val}
\end{table}

\begin{figure}[t]
  \centering
  \includegraphics{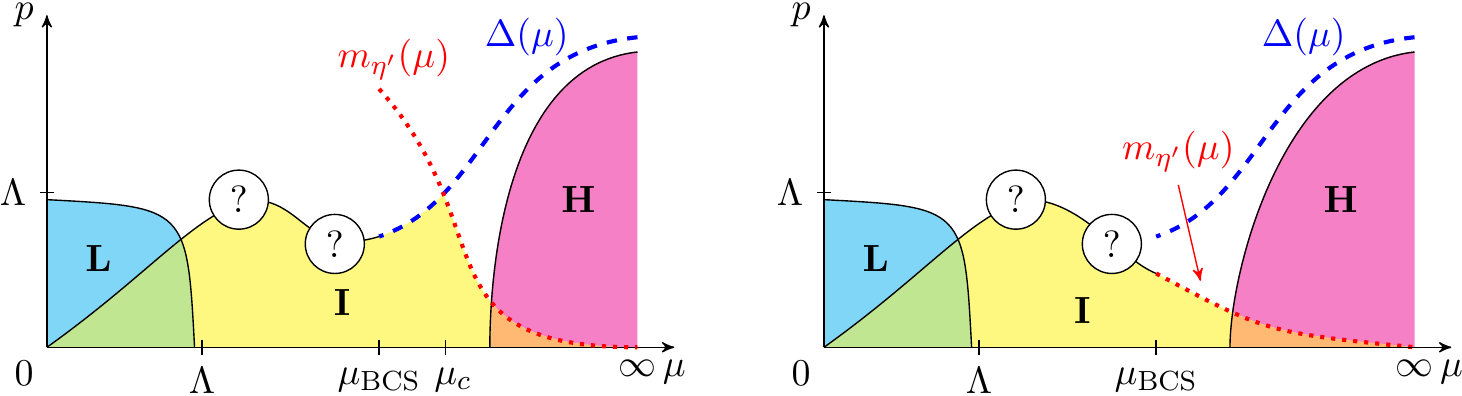}
  \caption{Domains of validity of the three effective theories \L, \I,
    and \H\ in scenario 1 (left) and scenario 2 (right), see text for
    details.}
  \label{fig:val}
\end{figure}

Our discussion is summarized in table~\ref{tab:val} and
figure~\ref{fig:val}.  Three comments are in order.  First, condition
(i) is always satisfied for the effective theory \I\ since the
$N_f(N_f-1)-1$ NG modes shown in figure~\ref{fig:NG} are always
massless.  In other words, \I\ is applicable at any $\mu$ as long as
the scale of the observable is sufficiently small.  Second, as the
``intermediate'' density is increased from ``low'' to ``high'', the
mass scale of the lightest non-NG mode changes from $\mu$ to
$m_{\eta'}$.  Finding the precise $\mu$-dependence of $m_\ell$ in the
intermediate region again requires a full dynamical calculation.  All
we currently know are the two limits $\mu$ and $m_{\eta'}$.  Third, in
the overlap regions of the different effective theories (i.e., the
green and orange areas in figure~\ref{fig:val}) one has a choice of
which theory to use, but this choice depends on the observable.  (This
is similar to the choice between $\SU(2)$ and $\SU(3)$ flavor chiral
perturbation theory in QCD as a function of the strange quark mass
\cite{Gasser:1984gg,Gasser:2007sg}.)  For example, in the regime
$\mu\ll \Lambda$ one could also use the effective theory \I, but this
would only work for observables with $p\ll\mu$, whereas the effective
theory \L\ could be used for observables with $p\ll \Lambda$.
Similarly, the effective theory \I\ could also be used in the regime
$m_{\eta'}\ll\Delta$.  This would only work for observables with $p\ll
m_{\eta'}$, whereas the effective theory \H\ could be used for
observables with $p\ll\Delta$.  While at first sight it may seem that
one should always work with the effective theory that allows for a
larger range of observables, it may be technically simpler to work
with the effective theory \I\ if one is only interested in an
observable for which this effective theory is valid.

\subsection{Matching of the low-energy constants}
\label{sec:match}

Let us now comment on the relation between low-energy constants and
physical observables, and on the matching of the low-energy constants
between the different effective theories.  Note that there are also
low-energy constants corresponding to higher-orders in the effective
Lagrangians, which we have not shown explicitly.

Let us start with theory \I, because this is simplest as the diquark
source is the only symmetry-breaking perturbation. In that case $f_0$
and $f$ are equal to the physical decay constants at a given $\mu$ in
the limit $J\to 0$, and
\begin{equation}
  \Phi_\I = \frac{1}{N_f}
  \left|\big\langle\psi_i^TC\tau_2I\psi_i\big\rangle\right|_{J=0}
  \qquad(i=L,R)\,.
\end{equation}
For nonzero $J$ there will be corrections (similar to the chiral
corrections in chiral perturbation theory) due to which the low-energy
constants will deviate from the physical quantities.  Next we consider
theory \L.  It has two symmetry-breaking perturbations, $\mu$ and $J$.
The low-energy constants do not depend on these external
parameters. Rather, they are equal to the decay constants and the
diquark condensate at $\mu=J=0$, see \eqref{eq:PhiL}. Now let us
recall that the theories \L\ and \I\ have overlapping domains of
validity at low density. All physical quantities should be independent
of which effective theory we use. As long as we work at any
\emph{finite} order of the low-energy expansion in theory $\L$, the
results thus obtained could be different from those of theory
\I. However, the discrepancy will disappear if we sum up the
contributions of the heavier NG modes (with mass $\sim \mu$) in theory
\L\ to all orders.%
\footnote{This is similar to what is encountered in weak-coupling
  perturbation theory.  In this case observables (such as the cross
  section) depend on the renormalization scale at any finite order of
  perturbation theory, but this dependence decreases as we go to
  higher orders \cite{Srednicki:2007qs}.}  We expect, for any fixed
$\mu\ll \Lambda$, the relation
\begin{equation}
  |\langle\psi\psi\rangle|_{J=0} = \Phi_\I 
  = \Phi_\L + \text{(Corrections due to the propagation of 
  type-2 modes)}\,,
\end{equation}
and likewise for $F$, $f_0$, and $f$.  The relations between the
low-energy constants of theory \L\ and \I\ can be made more precise by
explicitly integrating out the type-2 modes of theory \L. In the
course of this procedure, the (initially $\mu$-independent) low-energy
constants acquire a $\mu$-dependence in much the same way as the
low-energy constants of $\SU(2)$ chiral perturbation theory acquire a
dependence on the strange quark mass $m_s$ when kaons are integrated
out of $\SU(3)$ chiral perturbation theory
\cite{Gasser:1984gg,Gasser:2007sg,Gasser:2009hr,Ivanov:2011mr}.
According to these calculations, the corrections to the low-energy
constants at $O(p^2)$ ($F$ and $B$ in the standard notation) due to
the integrating-out of kaons become arbitrarily small when $m_s$ gets
small.  If we assume that this finding persists in our present
context, we expect
\begin{equation}
  \lim_{\mu\to 0}f_0, f = F 
  \qquad \text{and}\qquad 
  \lim_{\mu\to 0}\Phi_\I = \Phi_\L\,. 
\end{equation}
However, we do not expect this smooth matching to extend to the
low-energy constants of higher orders, because it is known that in
$\SU(2)$ chiral perturbation theory they receive corrections of the
form $(1/m_s)^n$ with $n>0$ and thus blow up as $m_s\to 0$.  Hence
they cannot reduce to the low-energy constants of $\SU(3)$ chiral
perturbation theory
\cite{Gasser:1984gg,Gasser:2007sg,Gasser:2009hr,Ivanov:2011mr}.

The discussion for theory \H\ and its matching with theory \I\
proceeds analogously. In theory \H, the role of $\mu$ in theory \L\ is
now played by $m_\text{inst}$, which in turn is a function of
$\mu$. For sufficiently high density the domains of validity of theory
\H\ and \I\ overlap, and for any fixed $\mu$ (provided that
$m_\text{inst}\ll\Delta$) we expect the relation
\begin{equation}
  |\langle\psi\psi\rangle|_{J=0} = \Phi_\I 
  = \Phi_\H + \text{(Corrections due to the propagation of
    $\eta'$)}
\end{equation}
to hold (and likewise for $f_0$, $\fh_0$ and $f$, $\fh$).  Based on
our discussion at low density we now expect the matching (at any fixed
$\mu$)
\begin{equation}
  \label{eq:match_IH}
  \lim_{m_\text{inst}\to 0}f_0, f = \tilde{f}_0, \tilde{f} 
  \qquad \text{and}\qquad 
  \lim_{m_\text{inst}\to 0}\Phi_\I = \Phi_\H\,.
\end{equation}
However, there is a subtlety specific to high density.  For
$\mu\to\infty$ the low-energy ``constants'' are actually infinite
since $f_0,f\sim\mu$ \cite{Son:1999cm} and $\Phi\sim\mu^2\Delta/g$
\cite{Schafer:1999fe}.  Therefore $\fh_0$, $\fh$, and $\Phi_\H$ cannot
be defined as constants at $\mu=\infty$.  Accordingly, theory \H\
cannot be defined at $\mu=\infty$ in the same way as theory \L\ could
be defined at $\mu=0$.\footnote{This is not meant as a negative
  statement since the asymptotic behavior of the low-energy constants
  in terms of $\mu$ is known.  Similar situations occur when
  considering the large-$N_c$ limit or scattering processes in the
  limit of high energies.} That is why we defined theory \H\ and its
low-energy ``constants'' at fixed (and finite) $\mu$.  This implies
that $m_\text{inst}$ is also fixed and cannot be considered as an
independent symmetry-breaking parameter anymore.  However, we can
start from theory \H\ at a given $\mu$ and formally integrate out the
$\eta'$ to obtain theory \I\ at the same value of $\mu$.  The
low-energy ``constants'' of \I\ thus acquire a dependence on
$m_\text{inst}$, and we can now formally send $m_\text{inst}\to 0$,
still at the same fixed $\mu$.  This is how \eqref{eq:match_IH} should
be understood.

At the end of this section, let us comment on possible extensions.  We
could have performed a more comprehensive analysis with nonzero
diquark sources, which would give a small mass proportional to $\sqrt
j$ to the NG modes.  However, in this paper we are only interested in
the limit $j\to0$, and therefore we have not performed such an
analysis.\footnote{But see section~\ref{sec:ss_conn} in which we are
  forced to consider the case of small but nonzero $j$ to understand
  an apparent discontinuity of a very particular observable.}  In
principle one could also add explicit quark masses.  This makes the
analysis even more complicated since the coset space one should use to
construct the effective theory now depends not only on $\mu$ but also
on the quark masses.  From the arguments presented in this section it
should be clear how to proceed, but we do not pursue this issue
further.

\section{\boldmath Smilga-Stern-type relations $(\beta=1)$}
\label{sec:ss}

In \cite{Smilga:1993in} Smilga and Stern computed the slope of the
density of Dirac eigenvalues at the origin in the QCD vacuum
($\beta=2$) using effective-theory techniques.  Their result was
confirmed by partially quenched chiral perturbation theory for
degenerate \cite{Osborn:1998qb} and nondegenerate
\cite{Zyablyuk:1999aj} masses.  Later it was generalized to theories
with $\beta=1$ and $4$ at $\mu=0$ \cite{Toublan:1999hi}.  In this
section we adapt the method of \cite{Smilga:1993in} to the singular
values at $\mu\ne 0$, i.e., we compute the slope of the singular value
density of the Dirac operator in two-color QCD at nonzero $\mu$
($\beta=1$), using the effective theories constructed in
section~\ref{sec:eff}.  We will obtain three different results at
infinite, intermediate, and zero density, respectively.  In
section~\ref{sec:ss_conn} we will discuss the relation between these
results as a function of $\mu$.  Throughout this section we work in
the chiral limit for simplicity.

\subsection{Infinite density}
\label{sec:ss_high}

For technical reasons we now start at infinite density so that we can
set $m_\text{inst}=0$ in \eqref{eq:lagrangian_high}.  As in earlier
sections we set $J_R=-J_L=J$, where $J$ is an antisymmetric $N_f\times
N_f$ matrix that has $N_f(N_f-1)/2$ independent components.  We can
decompose $J$ as
\begin{align}
  \label{eq:J}
  J=I\sum_Aj_At^A=jI+I\sum_aj_at^a\,,
\end{align}
where the $t^A$ are the generators of $\U(N_f)/\Sp(N_f)$ and the $j_A$
are real parameters with $j_0=j\sqrt{N_f}$.  Such a decomposition is
possible since the dimension of $\U(N_f)/\Sp(N_f)$ is $N_f(N_f-1)/2$
and thus equal to the number of degrees of freedom of $J$, and since
$It^A$ is antisymmetric for all $A$ (which follows from
$t^AI=I(t^A)^T$ \cite{Kogut:2000ek}).  As before, the sum over $A$
starts at 0, while the sum over $a$ starts at 1.  The $t^A$ are
identical to the $T^A$ defined below \eqref{eq:Sigma_tilde}, but we
denote them by a different symbol (in agreement with the notation of
\cite{Toublan:1999hi}) since they are used in a different context: The
$T^A$ are used to parametrize the NG modes of the effective theory
living in $\U(N_f)_L/\Sp(N_f)_L\times\U(N_f)_R/\Sp(N_f)_R$, while the
$t^A$ are used to parametrize the source $J$, which exists already in
the microscopic theory.  The choice of $\U(N_f)/\Sp(N_f)$ to
parametrize $J$ is natural since it is the space in which the diquark
condensate aligns depending on the choice of $J$.

For $N_f\ge4$ we now consider the scalar susceptibility
\begin{align}
  \label{eq:Kab}
  K_{ab}(j)=\lim_{V_4\to\infty}\frac1{V_4}\partial_{j_a}\partial_{j_b}
  \ln Z(J)\Big|_{\text{all }j_a=0}\,,
\end{align}
which we will calculate both from the microscopic theory (two-color
QCD) and from the low-energy effective theory.

Let us start on the QCD side.  Assuming that at high density there are
no zero modes, \eqref{eq:ZJ} yields
\begin{align}
  Z(J)=\ev{{\det}^{1/2}(D^\dagger D+J^\dagger J)}_\text{YM}
  =\Big\langle\prod_n{\det}^{1/2}(\xi_n^2+J^\dagger J)\Big\rangle_\text{YM}\,.
\end{align}
After a bit of algebra we obtain from \eqref{eq:Kab}
\begin{align}
  K_{ab}(j) &= \delta_{ab}\lim_{V_4 \rightarrow \infty} \frac{1}{V_4}
  \bigg\langle \sum_n\frac{\xi_n^2-j^2}{(\xi_n^2+j^2)^2} \bigg\rangle_j 
  = \delta_{ab}\int_0^{\infty} d\xi \,\rsv(\xi;j)
  \frac{\xi^2-j^2}{(\xi^2+j^2)^2}\,,
  \label{eq:KQCD}
\end{align}
where $\rsv(\xi;j)$ is defined as in \eqref{eq:rsv} but with nonzero
$j$.  

On the low-energy effective theory side we need to differentiate the
log of the effective partition function
\begin{align}
  \label{eq:Zeff_high}
  Z_\text{eff}^\H=\int \dd\tilde\Sigma_L\,\dd\tilde\Sigma_R\,
  \exp\bigg(-\int d^4x\,\mL_\text{eff}^\H\bigg)
\end{align}
with respect to $j_a$ and $j_b$.  Using \eqref{eq:lagrangian_high}
with $m_\text{inst}=0$ and \eqref{eq:re_Sigma_tilde} we obtain
\begin{align}
  \label{eq:Keff}
  K_{ab}(j)=&\sum_{ABCD}
  \frac{\Phi_\H^2}{4\fh_A \fh_B \fh_C \fh_D}
  \tr(t^aT^AT^B)\tr(t^bT^CT^D) \notag\\
  &\qquad \times \frac{1}{V_4} \left\langle \int d^4x\, d^4y \,
  (\pi_L^A \pi_L^B + \pi_R^A \pi_R^B)(x)(\pi_L^C \pi_L^D + \pi_R^C
  \pi_R^D)(y) \right\rangle_j \notag\\
  =&\sum_{AB} \frac{\Phi_\H^2}{8\fh_A^2 \fh_B^2}
  \tr(t^a \{T^A, T^B\})\tr(t^b \{T^A, T^B\})\notag\\
  &\qquad\times\left\langle \int d^4x \, \Big(\pi_L^A \pi_L^B(x)
    \pi_L^A \pi_L^B(0) + \pi_R^A \pi_R^B(x)  
    \pi_R^A \pi_R^B(0)\Big) \right\rangle_j^{\! \connect},
\end{align}
where ``\connect'' denotes the connected part of the correlation
function.  To obtain \eqref{eq:Keff} we have done the contractions
$A=C$, $B=D$ and $A=D$, $B=C$ and symmetrized in $A,B$.  The
contraction $A=B$, $C=D$ corresponds to disconnected diagrams and
yields zero.\footnote{The contribution of this contraction is
  proportional to $\sum_b\tr(t^aT^bT^b)\propto\tr(t^a)=0$ since
  $\sum_bT^bT^b$ is the difference of the quadratic Casimir operators
  of $\SU(N_f)$ and $\Sp(N_f)$ and hence proportional to $\1$.}  The
dependence of \eqref{eq:Keff} on $j$ is contained in the masses of the
NG modes.  Evaluating the connected part in one-loop approximation%
\footnote{This is a valid approximation as we are interested in the
  infrared limit of the theory.}  we find that it diverges for
$j\to0$,
\begin{align}
  \left\langle \int d^4x \, \pi_L^A \pi_L^B(x) \pi_L^A \pi_L^B(0)
  \right\rangle_j^{\! \connect}
  &=\int \frac{d^4p}{(2\pi)^4}\,\frac1{(p^2+m_A^2)(p^2+m_B^2)}
  \sim\frac1{16\pi^2}\ln\left(\frac{\tilde\Lambda} j\right),
  \label{eq:log1}
\end{align}
where we have used \eqref{eq:GOR_high} and $\tilde\Lambda$ is the
momentum cutoff of the integral, in which we have also absorbed
$\fh_{A,B}$ and $\Phi_\H$.  Thus the dependence of the integral on $A$
and $B$ has disappeared, and \eqref{eq:Keff} becomes
\begin{align}
  K_{ab}(j)\sim
  Q_{ab}\frac{\Phi_\H^2}{16\pi^2}\ln\left(\frac{\tilde\Lambda} j\right)
\end{align}
with
\begin{align}
  \label{eq:Q}
  Q_{ab}=\sum_{AB} \frac{1}{4\fh_A^2 \fh_B^2}\tr(t^a \{T^A, T^B\})
  \tr(t^b \{T^A, T^B\})\,,
\end{align}
where we have included a factor of 2 for the left- and right-handed NG
modes in the loop.  To evaluate $Q_{ab}$ we consider three cases:
\begin{enumerate}
\item For $A=B=0$, the contribution to $Q_{ab}$ vanishes since
  $\tr(t^a)=0$,
  \begin{align}
    Q_{ab}^{(1)}=0\,.
  \end{align}
\item For $A=0,\ B\neq 0$ and $A\neq 0,\ B=0$, 
  the contribution to $Q_{ab}$ reads
  \begin{equation}
    \label{eq:Q2}
    Q_{ab}^{(2)}=\frac2{4\fh_0^2 \fh^2}\frac4{N_f}\sum_c
    \tr(t^aT^c)\tr(t^bT^c)
    =\frac{2\delta_{ab}}{N_f \fh_0^2 \fh^2}\,,
  \end{equation}
  where in the first equation the factor of 2 reflects the two
  possibilities above and in the second equation we have used
  $\tr(t^aT^c)=\delta_{ac}$.  
\item For $A\neq 0$ and $B\neq0$, the contribution to $Q$ can be
  obtained from \cite[eq.~(48)]{Toublan:1999hi} by replacing
  $2N_f \rightarrow N_f$ and multiplying by 8 to correct for the
  difference in the normalization of the generators (which in
  \cite{Toublan:1999hi} is $\tr(t^at^b)=\delta_{ab}/2$).  This yields
  \begin{equation}
    \label{eq:Q3}
    Q_{ab}^{(3)}=\delta_{ab}\frac{(N_f-4)(N_f+2)}{2N_f\fh^4}\,.
  \end{equation}
\end{enumerate}
Summing up these contributions, $Q_{ab}$ is given by
\begin{equation}
  Q_{ab}=Q_{ab}^{(1)}+Q_{ab}^{(2)}+Q_{ab}^{(3)}
  =\delta_{ab}\left[\frac{(N_f-4)(N_f+2)}{2N_f \fh^4}
    +\frac{2}{N_f \fh_0^2 \fh^2}\right]\,,
\end{equation}
and our final result for the scalar susceptibility from the effective
theory is
\begin{equation}
  \label{eq:Keff2}
  K_{ab}(j)\sim\delta_{ab}\left[\frac{(N_f-4)(N_f+2)}{2N_f \fh^4}
    +\frac{2}{N_f \fh_0^2 \fh^2} \right]
  \frac{\Phi_\H^2}{16\pi^2} \ln\left(\frac{\tilde\Lambda}{j}\right).
\end{equation}

Now let us compare \eqref{eq:KQCD} and \eqref{eq:Keff2}.  First of
all, note that the constant part $\rsv(0)$ in \eqref{eq:KQCD} does not
contribute to $K_{ab}$ since
\begin{align}
  \int_0^{\infty} d\xi\,\frac{\xi^2-j^2}{(\xi^2+j^2)^2}=0\,.
\end{align}
Hence only the difference $\rsv(\xi)-\rsv(0)$ is relevant.  To
reproduce the singularity $\sim \ln(\tilde\Lambda/j)$ in
\eqref{eq:Keff2}, we must have
\begin{equation}
  \rsv(\xi)-\rsv(0)=C\xi \quad\text{for}\quad \xi>0
\end{equation}
in the vicinity of $\xi=0$, where $C=\rsv'(0)$.  Now,
\begin{equation}
  \label{eq:log2}
  \int_0^{\tilde\Lambda} d\xi\,\frac{C\xi(\xi^2-j^2)}{(\xi^2+j^2)^2}
  \sim C\ln\left(\frac{\tilde\Lambda}{j} \right),
\end{equation}
and thus the slope of the singular value density at the origin is
given by
\begin{equation}
  \label{eq:slope_high}
  \rsv'(0)=
  \left[\frac{(N_f-4)(N_f+2)}{2N_f \fh^4}+\frac{2}{N_f \fh_0^2 \fh^2} \right]
  \frac{\Phi_\H^2}{16\pi^2}\,.
\end{equation}
Note that $\Phi_\H$, $\fh_0$, and $\fh$ are functions of $\mu$.  To
what extent this result is still valid at $\mu<\infty$ will be
discussed in section~\ref{sec:ss_conn}.

For $N_f=2$ the Smilga-Stern method used above does not work since in
\eqref{eq:J} we then have $J=jI$ so that $K_{ab}(j)$ cannot be defined
as in \eqref{eq:Kab}.  A similar problem occurs in the derivation of
the slope of the Dirac eigenvalue density, where the Smilga-Stern
method fails for $N_f=1$.  In that case the slope could still be
computed using partially quenched perturbation theory
\cite{Toublan:1999hi}, and it was found that the result obtained from
the Smilga-Stern method remains valid for $N_f=1$.  It is therefore
tempting to speculate that \eqref{eq:slope_high} remains valid for
$N_f=2$, but to confirm this we would have to compute $\rsv(\xi)$ in
partially quenched perturbation theory.  Such a rather complicated
calculation is deferred to future work.

Let us add two comments here.  First, we could relax the assumption
$J_R=-J_L$, and in particular we could set $J_R$ (or $J_L$) to zero.%
\footnote{This results in a projection on the topologically trivial
  sector (see \eqref{eq:ZJ}), which is immaterial in the $p$-regime.}
This would give us the slope of the density of the left-handed (or
right-handed) singular values, which is $1/2$ of the full slope
because the factor of 2 mentioned after \eqref{eq:Q} would be absent.
Second, we have now computed $\rsv(0)$ and $\rsv'(0)$ and therefore
obtained information on the singular value density near zero.  An
analytical result can also be computed for asymptotically large $\xi$,
which, owing to asymptotic freedom, can be described by the free
theory without coupling to the gauge field. Thus the whole singular
value spectrum can be understood at least qualitatively by an
interpolation of two tractable limits.  We obtain (for an arbitrary
number $N_c\ge2$ of colors)
\begin{align}
  \label{eq:rho_pert}
  \rsv(\xi)\to \frac{N_c}{2\pi^2}\,\xi(\xi^2+2\mu^2)
  \quad\text{for}\quad \xi\to\infty\,.
\end{align}
For $\mu\to 0$ this reduces to twice the Dirac eigenvalue density in
the free limit, as expected. An outline of the derivation is given in
appendix \ref{app:rho_pert}.

\subsection{Intermediate density}
\label{sec:ss_int}

The calculation at intermediate density is very similar to that at
infinite density.  Since the fundamental microscopic theory is
unchanged, equations \eqref{eq:J} through \eqref{eq:KQCD} also remain
unchanged.  On the effective theory side, \eqref{eq:Zeff_high} is
replaced by
\begin{align}
  Z_\text{eff}^\I=\int\dd\Sigma_L\,\dd\Sigma_R\,\dd V\,
  \exp\left(-\int d^4x\,\mL_\text{eff}^\I\right).
\end{align}
We could now go through a similar calculation as in
section~\ref{sec:ss_high}, using \eqref{eq:int} through
\eqref{eq:SigmaV}.  However, it is easier to note that the only
difference is the replacement of the two $\U(1)$ fields $L$ and $R$ by
a single $\U(1)$ field $V$.  The only contributions of the $\U(1)$
fields to $Q_{ab}$ are in $Q_{ab}^{(2)}$, and it follows from the
calculation in section~\ref{sec:ss_high} that we can obtain
$Q_{ab}^{(2)}$ for the present case by dividing the result in
\eqref{eq:Q2} by 2.  Everything else remains unchanged so that the
slope for $N_f\ge4$ is now given by
\begin{align}
  \label{eq:slope_int}
  \rsv'(0)=
  \left[\frac{(N_f-4)(N_f+2)}{2N_f f^4}+\frac{1}{N_f f_0^2 f^2} \right]
  \frac{\Phi_\I^2}{16\pi^2}\,.  
\end{align}
Again, $\Phi_\I$, $f_0$, and $f$ are functions of $\mu$.  It is
tempting to speculate that \eqref{eq:slope_int} remains valid for
$N_f=2$.

\subsection{Zero density}
\label{sec:ss_zero}

Let us now consider strictly zero chemical potential.  Again,
equations \eqref{eq:J} through \eqref{eq:KQCD} remain unchanged,
but the coset space of the effective theory is now
$\SU(2N_f)/\Sp(2N_f)$.  It follows from the calculation in
section~\ref{sec:ss_high} that $Q_{ab}$ is now entirely given by
$Q_{ab}^{(3)}$ and that we can obtain $Q_{ab}^{(3)}$ for the present
case from the result in \eqref{eq:Q3} by replacing $N_f\to2N_f$, $\fh\to
F$, and dividing by 2 since the left- and right-handed modes are
already contained in $\SU(2N_f)/\Sp(2N_f)$.  This yields for $N_f\ge2$
\begin{align}
  \label{eq:slope_zero}
  \rsv'(0)=\frac{(N_f-2)(N_f+1)}{N_f F^4}\frac{\Phi_\L^2}{16\pi^2}\,,
\end{align}
where $\Phi_\L$ and $F$ are now independent of $\mu$.  To what extent
this result is still valid at nonzero $\mu$ will be discussed in the
next subsection.

\subsection{Relation between the three results}
\label{sec:ss_conn}

In the previous three subsections we have obtained three different
results for the slope $\rsv'(0)$ for $\mu=\infty$, intermediate $\mu$,
and $\mu=0$, respectively. At first glance it does not seem possible
to interpolate them smoothly, and we thus encounter a puzzle: How are
the three results related?  What actually happens to the spectrum if
$\mu$ is continuously changed?  Below we argue that there is no puzzle
here. To simplify the presentation we divide our discussion into three
parts, the first one for three-color QCD at $\mu=0$ as an instructive
model case for our problem, the second one for the low-density region,
and the third one for the high-density region of two-color QCD. The
arguments are analogous, however.

\subsubsection[Zero density $(\beta=2)$: A journey from $N_f=2$ to $3$]
{\boldmath Zero density $(\beta=2)$: A journey from $N_f=2$ to $3$}
\label{sc:zero_beta=2_test}

So far we have argued that there are three effective theories for
dense two-color QCD and that as a function of $\mu$ they change
smoothly from one to another. This situation has an exact counterpart
in three-color QCD for $N_f=3$ at $\mu=0$, where the strange quark
mass serves as a ``knob'' to interpolate between the chiral
perturbation theories for $N_f=2$ and $N_f=3$.  Therefore we study
this simpler case first before considering the more exotic case of
dense two-color QCD.

The original Smilga-Stern relation \cite{Smilga:1993in}, derived for
the Dirac eigenvalue density (not the singular value density) in
three-color QCD $(\beta=2)$ at $\mu=0$ with $N_f$ flavors in the
chiral limit, reads
\begin{align}
  \label{eq:ss_orig}
  \rho(\lambda) = \frac{\Sigma}{\pi} + 
  \frac{\Sigma^2}{32\pi^2F^4}\frac{N_f^2-4}{N_f}|\lambda| + o(\lambda)
\end{align}
with the chiral condensate $\Sigma$ and the pion decay constant $F$.
The coefficient of $|\lambda|$ depends on $N_f$.  For example, the
slope vanishes for $N_f=2$ but is nonzero for $N_f=3$. It is not clear
from this expression alone how the slope changes if we add a nonzero
strange quark mass to change the number of light flavors continuously.

The generalization of \eqref{eq:ss_orig} to nonzero degenerate masses
was given in \cite{Osborn:1998qb}%
\footnote{There is a typo in \cite[Eq.~(84)]{Osborn:1998qb}. The
  second term in the square brackets of that equation must be
  multiplied by $\pi$, as is evident from
  \cite[Eq.~(83)]{Osborn:1998qb}.  } and later extended to
nondegenerate sea quark masses in \cite{Zyablyuk:1999aj}.  Therefore
we can employ the results of \cite{Zyablyuk:1999aj} to study the
density $\rho(\lambda)$ and the slope $\rho'(\lambda)$ for $N_f=3$
with a massive strange quark. Setting $(m_1,m_2,m_3)=(0,0,m)$ in
\cite[Eq.~(17)]{Zyablyuk:1999aj} we plot $\rho(\lambda)$ and
$\rho'(\lambda)$ in figure~\ref{fig:slope_crsov} as a function of
$\lambda/m$.

\begin{figure}[t]
  \centering
  \includegraphics[height=51mm]{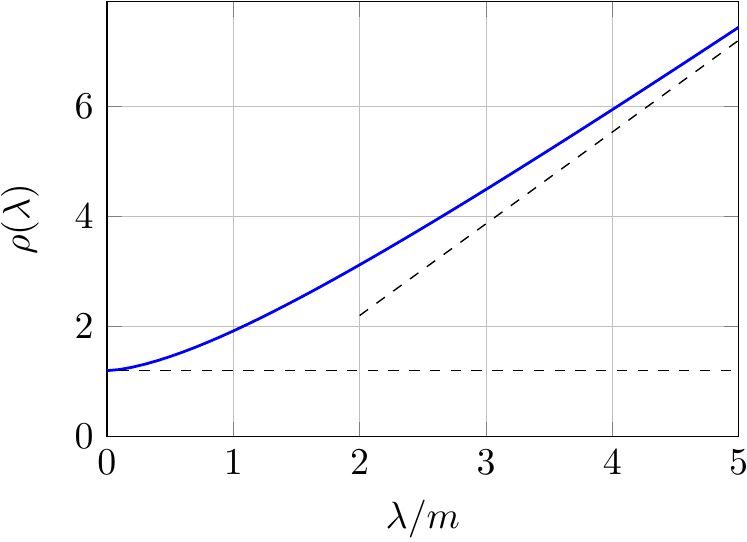}\hfill
  \includegraphics[height=51mm]{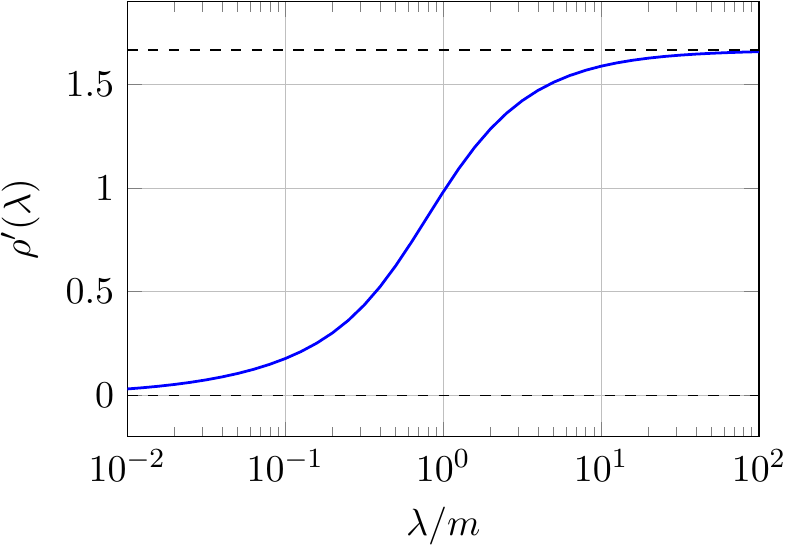}
  \caption{Left: Dirac eigenvalue density in three-color QCD at
    $\mu=0$ for two massless flavors and one flavor with mass $m$, in
    units of $m\Sigma^2/32\pi^2F^4$.  The intercept with the vertical
    axis is arbitrary due to renormalization \cite{Zyablyuk:1999aj}.
    Right: Slope of the density in units of ${\Sigma^2}/{32\pi^2F^4}$.
    The dotted lines in both plots correspond to the slopes $0$
    ($N_f=2$) and ${5}/{3}$ ($N_f=3$).}
  \label{fig:slope_crsov}
\end{figure}

The curves nicely interpolate between two limits: For $\lambda\ll m$
the strange quark is heavy relative to the probed scale and we get the
slope for $N_f=2$: $\rho'(\lambda)\propto(N_f^2-4)/{N_f}=0$.  In the
limit $m\ll\lambda$ $(\ll \Lambda\sim 4\pi F)$ the strange quark is
light relative to the probed scale and we get the slope for $N_f=3$:
$\rho'(\lambda)\propto(N_f^2-4)/{N_f}={5}/{3}$.  The transition occurs
smoothly around $\lambda \sim m$.  Thus we can draw the conclusion
that no contradiction arises from different values of the slopes for
$N_f=2$ and $N_f=3$, because at nonzero $m$ they correspond to
different domains of the spectrum.

This finding can be interpreted within partially quenched chiral
perturbation theory as follows. In this method we add valence flavors
and compute the spectral density from the valence quark mass
dependence of the chiral condensate \cite{Verbaarschot:2004gj}.  We
therefore deal with two classes of mesons, one being made of only sea
quarks, and the other being made of valence quarks (and sea
quarks). If the latter (``valence mesons'') are much heavier than the
former (``sea mesons''), i.e., $\lambda\gg m$, then all three sea
flavors contribute to the valence quark mass dependence of the chiral
condensate, implying $N_f=3$. Conversely, if $\lambda\ll m$, the
heavier sea mesons are decoupled, reducing the computation to $N_f=2$.
The transition between these two cases occurs around $\lambda\sim m$,
i.e., when the masses of the sea and valence mesons are roughly equal.
  
\subsubsection{Low density}
\label{sec:match_low}

In this subsection we discuss the relation between the results
\eqref{eq:slope_int} and \eqref{eq:slope_zero} for $\mu\ll\Lambda$.
The chemical potential plays exactly the same role as the strange
quark mass in the previous subsection, both acting as explicit
symmetry-breaking parameters.  This analogy is the basis of our
following argument.

We first note that $\rsv(\xi)$ can be
computed in partially quenched chiral perturbation theory, starting from
\begin{align}
  \label{eq:pq}
  Z_{N_f+2|2}(j;j_v,j_v')=\bigg\langle{\det}^{N_f/2}(D^\dagger D+j^2)
    \frac{\det(D^\dagger D+j_v^2)}{\det(D^\dagger D+j_v'^2)}
    \bigg\rangle_\text{YM}
\end{align}
and setting $j_v=j_v'=i\xi+\eps$ (with $\eps\to0^+$) at the end of the
calculation.  We will not actually perform this computation but use
\eqref{eq:pq} for qualitative estimates.  In comparison to the usual
setting \cite{Osborn:1998qb,Toublan:1999hi}, $j$ and $j_v,j_v'\sim\xi$
correspond to the sea and valence quark masses, respectively. In the
following $j$ is always assumed to be infinitesimal. Our main concern
is the competition between $\mu$ and $j_v$.

We can now replace $Z_{N_f+2|2}(j;j_v,j_v')$ by an effective partition
function formulated in terms of NG modes. We have two options, either
the partially quenched extension of theory \L, or that of theory
\I. Let us discuss them separately.

\begin{itemize}
\item Theory \I: For the partially quenched extension of theory \I\ to
  be valid, the condition (i) of section~\ref{sec:val} must be
  satisfied for all NG modes, i.e., their masses must be much smaller
  than $m_\ell$. The new ingredient here is that, in addition to the
  NG modes discussed in section~\ref{sec:eff_int}, we now also have NG
  modes containing the valence quarks corresponding to $j_v$ and
  $j_v'$, which we call valence NG modes.  At low density their masses
  are of order $\sqrt{j_v\Lambda}\sim \sqrt{\xi\Lambda}$, see
  \eqref{eq:GOR_low_a} with $F\sim\Phi_\L^{1/3}\sim\Lambda$, and hence
  the condition is
  \begin{align}
    \xi \ll \frac{\mu^2}{\Lambda}\,.
  \end{align}
  In this domain the slope $\rsv'(\xi)$ is given by
  \eqref{eq:slope_int} at leading order of the low-energy
  expansion.
\item Theory \L: For the partially quenched extension of theory \L\ to
  be valid, the masses of all NG modes must again be much smaller than
  $m_{\ell}$. This time the masses of the valence NG modes are of
  order $\sqrt{j_v\Lambda}\sim \sqrt{\xi\Lambda}$ and
  $\sqrt{j_v\Lambda+\mu^2}\sim\sqrt{\xi\Lambda+\mu^2}$, see
  \eqref{eq:GOR_low} with $F\sim\Phi_\L^{1/3}\sim\Lambda$. The
  condition is therefore
  \begin{align}
    \xi \ll \Lambda\,.
  \end{align}
  This is not end of the story, however: Theory \L\ is more
  complicated than theory \I, because we have two scales $\mu$ and
  $\sqrt{j_v\Lambda}$ (analogous to $m$ and $\lambda$ in section
  \ref{sc:zero_beta=2_test}) whose ratio controls the final result in
  a nontrivial way.  Based on our experience in section
  \ref{sc:zero_beta=2_test} we expect the following.
  \begin{itemize}
  \item For $\mu\ll\sqrt{j_v\Lambda}\sim\sqrt{\xi\Lambda}$ : All sea
    NG modes contribute, and the result for $\rsv'(\xi)$ agrees with
    the result \eqref{eq:slope_zero} at $\mu=0$
  \item For $\mu\gg\sqrt{j_v\Lambda}\sim\sqrt{\xi\Lambda}$ : The NG
    modes with masses of order $\mu$ decouple from the computation
    of $\rsv(\xi)$, and theory \L\ reduces to theory \I\ in which the
    heavy modes have been integrated out. The slope thus agrees with
    \eqref{eq:slope_int} from theory \I.
  \end{itemize}
\end{itemize}

Putting everything together, we see that in the regime $\mu\ll\Lambda$
the results from theory \I\ and \L\ for the slope $\rsv'(\xi)$ are
valid in the following domains:
\begin{alignat}{2}
  & \text{\eqref{eq:slope_int} from \I}: &
  &\xi\ll\frac{\mu^2}\Lambda\,,\\
  & \text{\eqref{eq:slope_zero} from \L}: \qquad &
  \frac{\mu^2}\Lambda&\ll\xi\ll\Lambda\,.
\end{alignat}
These findings are illustrated in figure~\ref{fig:slope} (left).  The
slope is first given by \eqref{eq:slope_int}, and we conjecture that
it changes smoothly to the value given by \eqref{eq:slope_zero}.  To
avoid confusion we point out that for $\mu=0$ the window in which the
slope is given by \eqref{eq:slope_int} shrinks to zero so that the
slope at the origin is given by \eqref{eq:slope_zero}. 
\begin{figure}[t]
  \centering
  \includegraphics{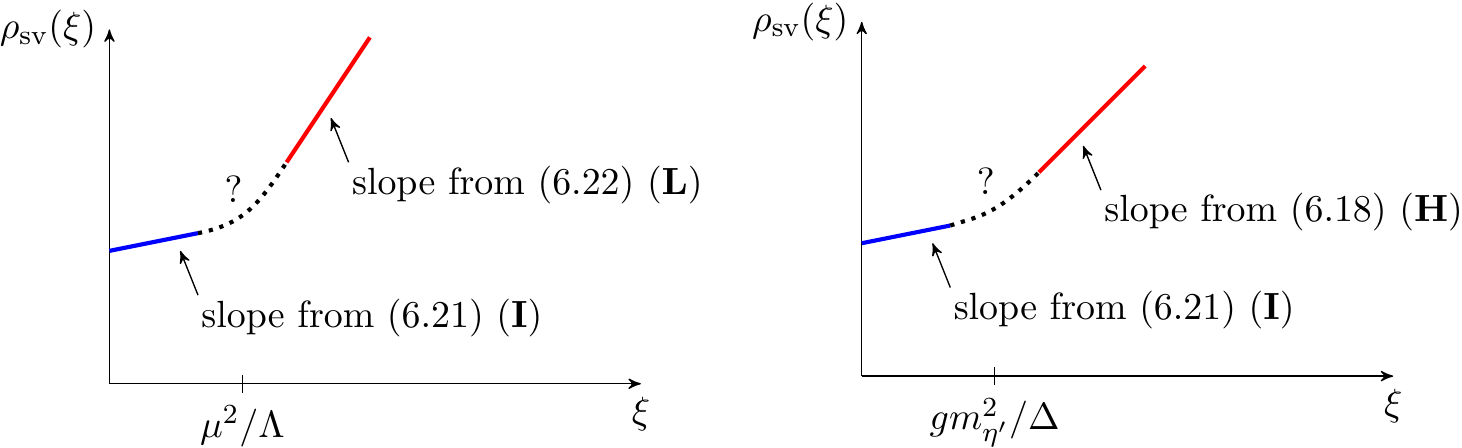}
  \caption{The behavior of the singular value density near zero.
    Left: low density ($\mu\ll\Lambda$), right: high density (so that
    $m_{\eta'}\ll\Delta$), see text for details.}
  \label{fig:slope}
\end{figure}

We now present a nontrivial cross-check of our conclusion that does
not hinge on partially quenched chiral perturbation theory.  Let us
return to our discussions in sections
\ref{sec:ss_high}--\ref{sec:ss_zero} without valence quarks and
consider taking the limits $\mu\to 0$ and $j\to 0$ while keeping the
condition $\mu^2\gg j\Lambda$.  In this case the two types of NG modes
have masses of order $\sqrt{j\Lambda}$ (lighter) and $\mu$ (heavier),
respectively.  The one-loop integral \eqref{eq:log1} in the effective
theory can be evaluated for the cases of two lighter, two heavier, or
one lighter and one heavier NG modes circulating around the loop.  For
these cases we obtain
\begin{align}
  \int^\Lambda d^4p\,\frac1{(p^2+j\Lambda)^2}
  &\sim\ln\frac\Lambda j\,,\\
  \int^\Lambda d^4p\,\frac1{(p^2+\mu^2)^2}
  &\sim\ln\frac{\Lambda^2}{\mu^2}\,,\\
  \int^\Lambda d^4p\,\frac1{(p^2+j\Lambda)(p^2+\mu^2)}
  &\sim\frac{\mu^2\ln\frac{\Lambda^2}{\mu^2}
  -j\Lambda\ln\frac\Lambda j}{\mu^2-j\Lambda}
  \sim\ln\frac{\Lambda^2}{\mu^2}\quad\text{for}\quad\mu^2\gg j\Lambda\,.
\end{align}
Therefore the sum of all one-loop contributions to $K_{ab}$ is given
by
\begin{align}
  \label{eq:oneloop}
  \alpha\ln \frac\Lambda j+\beta\ln\frac{\Lambda^2}{\mu^2}
\end{align}
with prefactors $\alpha$ and $\beta$ that also contain traces of the
generators.  The two infrared singularities in \eqref{eq:oneloop}
(generated by $j\to 0$ and $\mu\to 0$, respectively) must be matched
by corresponding singularities in the microscopic theory.  Motivated
by figure~\ref{fig:slope}, let us assume that $\rsv(\xi)$ can be
approximated by a straight line with slope $\alpha$ for $\xi<\xi_c$ and by
another straight line with slope $\alpha+\beta$ for $\xi>\xi_c$, with $\xi_c$
an unknown function of $\mu$.  Then \eqref{eq:KQCD} becomes
\begin{align}
  \int_0^\Lambda d\xi&\,\rsv(\xi)\,\frac{\xi^2-j^2}{(\xi^2+j^2)^2}
  \notag\\
  &= \int_0^{\xi_c} d\xi\,(\alpha\xi+\text{const.})\,
  \frac{\xi^2-j^2}{(\xi^2+j^2)^2}
  +\int_{\xi_c}^\Lambda d\xi\,\big[(\alpha+\beta)\xi+\text{const.}\big]\,
  \frac{\xi^2-j^2}{(\xi^2+j^2)^2} \notag\\
  &\sim  \alpha\ln\frac{\xi_c}j+(\alpha+\beta)\ln\frac\Lambda{\xi_c}
  =\alpha\ln\frac\Lambda j+\beta\ln\frac\Lambda{\xi_c}\,.
  \label{eq:sing2}
\end{align}
Matching \eqref{eq:oneloop} and \eqref{eq:sing2} yields 
\begin{align}
  \label{eq:bound_low}
  \xi_c\sim\frac{\mu^2}\Lambda\,,
\end{align}
in agreement with the argument based on partially quenched chiral
perturbation theory.

\subsubsection{High density}
\label{sc:large_mu_relat}

We now clarify the relation between \eqref{eq:slope_high} from theory
\H\ and \eqref{eq:slope_int} from theory \I\ at large $\mu$.  The
arguments are analogous to those at small $\mu$, except that
$m_\text{inst}$ now plays the role of $\mu$ as an external
symmetry-breaking parameter. First of all we require that the masses
of all NG modes (sea and valence) must be sufficiently below $m_\ell$.
At high density the masses of the valence NG modes in the partially
quenched theory are of order
$\sqrt{j_v\Delta/g}\sim\sqrt{\xi\Delta/g}$ and
$\sqrt{j_v\Delta/g+m_\text{inst}^2}\sim\sqrt{\xi\Delta/g+m_\text{inst}^2}$,
see \eqref{eq:GOR_high} with $\fh_A\sim\mu$ \cite{Son:1999cm} and
$\Phi_\H\sim\mu^2\Delta/g$ \cite{Schafer:1999fe}, where $g$ is the
running coupling constant.  Using the relevant results for $m_\ell$ in
table~\ref{tab:val} we obtain the following bounds on the values of
$\xi$ below which $\rsv(\xi)$ can be computed from the partially
quenched extensions of the effective theories \I\ or \H\ in the regime
where $m_{\eta'}\ll\Delta$:
\begin{alignat}{2}
  \label{eq:xiI_high}
  \xi&\ll\frac{gm_{\eta'}^2}\Delta \qquad && \text{for theory \I}\,,\\
  \label{eq:xiH}
  \xi&\ll g\Delta && \text{for theory \H}\,.
\end{alignat}
Thus $\rsv'(\xi)$ is given by \eqref{eq:slope_int} from theory \I\ in
the range \eqref{eq:xiI_high}. We note that the scale
$gm_{\eta'}^2/\Delta$ goes to zero rapidly as $\mu\to \infty$.

On the other hand, theory \H\ has two scales, $\sqrt{j_v\Delta/g}$ and
$m_\text{inst}$.  A rerun of the arguments at small $\mu$ then shows
that in the regime where $m_{\eta'}\ll\Delta$ the slope $\rsv'(\xi)$
is given by the results \eqref{eq:slope_high} and \eqref{eq:slope_int}
in the following domains,
\begin{alignat}{2}
  & \text{\eqref{eq:slope_int} from \I}: &
  \xi&\ll\frac{gm_{\eta'}^2}\Delta\,,\\
  & \text{\eqref{eq:slope_high} from \H}: \qquad &
  \frac{gm_{\eta'}^2}\Delta&\ll\xi\ll g\Delta\,.
\end{alignat}
This is illustrated in figure~\ref{fig:slope} (right).  For
$\mu\to\infty$ the window in which the slope is given by
\eqref{eq:slope_int} shrinks to zero so that the slope at the origin
is given by \eqref{eq:slope_high}.

The second argument presented in section~\ref{sec:match_low} works in
exactly the same way here and leads to
\begin{align}
  \label{eq:bound_high}
  \xi_c\sim\frac{gm_{\eta'}^2}\Delta
\end{align}
as expected.

Note that the result \eqref{eq:slope_int} is actually valid for all
$0<\mu<\infty$.  We can replace $\rsv'(0)$ by $\rsv'(\xi)$ in
\eqref{eq:slope_int} for sufficiently small $\xi$, the upper bound of
which is given in \eqref{eq:bound_low} and \eqref{eq:bound_high} at
small and large $\mu$, respectively. At intermediate density we do not
have an estimate for the upper bound because $m_\ell$ is unknown in
this region.

So far we have explained what we believe is the most reasonable
behavior of the singular value density at nonzero $\mu$ based on the
partial quenching technique and the analogy to three-color QCD at
$\mu=0$ with a heavy strange quark. For a solid proof of our
conjecture shown in figure \ref{fig:slope} one would have to compute
the slope in partially quenched chiral perturbation theory explicitly,
but this is beyond the scope of this paper. It would be interesting to
check our new Smilga-Stern-type relations by lattice simulations. This
is possible in principle as the infamous sign problem is absent in
this theory.

\section{\boldmath Finite-volume analysis: Leutwyler-Smilga-type sum
  rules and random matrix theories $(\beta=1)$}
\label{sec:ls}

\subsection[The $\eps$-regime]{\boldmath The $\eps$-regime}
\label{sec:eps}

In this section we study two-color QCD with diquark sources in a
finite volume $V_4=L^4$ (and again in the chiral limit).  As in QCD
there is a regime, the so-called $\eps$-regime \cite{Gasser:1987ah},
in which the kinetic terms in the effective chiral Lagrangian can be
neglected so that the theory becomes zero-dimensional and the
partition function is dominated by the zero-momentum modes of the NG
particles.  The condition for the $\eps$-regime is
\begin{align}
  \label{eq:eps}
  \frac1{m_\ell}\ll L\ll\frac1{m_\text{NG}}\,,
\end{align}
where $m_\ell$ is again the mass scale of the lightest non-NG particle
and $m_\text{NG}$ is the mass scale of the NG particles that are
included in the effective theory.  The first inequality in
\eqref{eq:eps} means that the contribution of the non-NG particles to
the partition function can be neglected, while the second inequality
means that the Compton wavelength of the NG particles is larger than
the size of the box, which in turn implies that the functional
integral over the NG fields can be replaced by a zero-mode integral
over the coset space parametrized by them.  For the three different
effective theories in section~\ref{sec:eff} the values of $m_\ell$ are
given in section~\ref{sec:val}, and the $m_\text{NG}$ are given in
\eqref{eq:GOR_low}, \eqref{eq:GOR_high}, and \eqref{eq:GOR_int}.  Note
that for zero diquark sources and finite $L$ the second inequality in
\eqref{eq:eps} is always satisfied in theory \I\ since
$m_\text{NG}=0$.  Note also that while the domains of validity of the
effective theories overlap (see section~\ref{sec:val}), this is not
the case for the corresponding $\eps$-regimes: At low density it
follows from \eqref{eq:eps} that $1/\mu\ll L$ for \I\ and $L\ll1/\mu$
for \L, and these conditions are mutually exclusive.  Similarly, at
high density we have $1/m_{\eta'}\ll L$ for \I\ and $L\ll1/m_{\eta'}$
for \H.

At $\mu=0$, it is well known that there exists a scale $E_T$ below
which the eigenvalue spectrum of the Dirac operator obeys chiral
random matrix theory \cite{Osborn:1998qb,Damgaard:1998xy}.  The scale
$E_T$ is called Thouless energy, borrowing the nomenclature from
mesoscopic physics.  The equivalence between random matrix theory and
the zero-momentum limit of the partially quenched effective theory
shows that $E_T$ can be understood as the energy above which the
condition \eqref{eq:eps} no longer holds in the partially quenched
theory and the modes with nonzero momentum start to contribute.

We now comment on the Thouless energy for the singular value spectrum
at $0<\mu<\infty$, based on the partially quenched extension of the
three effective theories introduced in section~\ref{sec:match_low}.
For simplicity we let $j=0$ in \eqref{eq:pq} and concentrate on the
``spectral mass'' $j_v$.  We assume that the condition $L\gg 1/m_\ell$
is satisfied for all cases considered below.

\begin{itemize}
\item {Theory \I} $(0<\mu<\infty)$: From \eqref{eq:GOR_int} we find
  that the masses of the valence NG modes are given by
  $m_\text{vNG}^2\sim j_v\Phi_\I/f^2\sim \xi\Phi_\I/f^2$, where
  all low-energy constants depend on $\mu$ implicitly.%
  \footnote{We are sloppy about the distinction between $f$ and $f_0$
    here, but they are of the same order of magnitude so that the
    distinction does not change our discussion.}  For $m_\text{vNG}\ll
  1/L$ the $j_v$-dependence of the partition function is governed by
  the NG modes with zero momentum.  Thus the Thouless energy is
  determined by
  \begin{align}
    \sqrt{E_T\Phi_\I/f^2} = \frac{1}{L}
    \qquad \to \qquad 
    E_T = \frac{f^2}{L^2\Phi_\I} \,.
  \end{align}
\item {Theory \H} $(m_{\eta'}\ll \Delta)$: From \eqref{eq:GOR_high} we
  have $m_\text{vNG}^2 \sim j_v\Phi_\H/\fh^2\sim j_v\Delta/g$ (see
  section \ref{sc:large_mu_relat}) and
  $j_v\Phi_\H/\fh^2+m_\text{inst}^2 \sim j_v\Delta/g+m_\text{inst}^2$.
  Assuming $m_\text{inst}\ll 1/L$ we obtain
  \begin{align}
    \sqrt{E_T\Delta/g} = \frac{1}{L}
    \qquad \to \qquad E_T = \frac{g}{L^2\Delta}\,.
  \end{align}
\item {Theory \L} $(\mu\ll\Lambda)$: From \eqref{eq:GOR_low} we have
  $m_\text{vNG}^2\sim j_v\Phi_\L/F^2$ and
  $j_v\Phi_\L/F^2+(2\mu)^2$, where this time the low-energy
  constants are $\mu$-independent.  Assuming $\mu\ll 1/L$ we obtain
  \begin{align}
    \sqrt{E_T\Phi_\L/F^2} = \frac{1}{L}
    \qquad \to \qquad E_T = \frac{F^2}{L^2\Phi_\L}\,.
  \end{align}
\end{itemize}

Thus in all cases $E_T\propto 1/\sqrt{V_4}$. Essentially, this is due
to the fact that the diquark source enters the effective Lagrangian
linearly at any $\mu$.  This completes our discussion of the Thouless
energy.

In the $\eps$-regime we can compute exact sum rules for the inverse
singular values of the Dirac operator.  This will be done in the
following three subsections for the three different density regimes.
For technical reasons we start with intermediate density this time.
We will also derive chiral random matrix theories that allow us to
compute microscopic correlation functions of the singular
values.\footnote{In addition, a chiral random matrix theory for QCD
  with isospin chemical potential ($\beta = 2$) will be derived in
  appendix~\ref{app:isospin}.} The sum rules are simply moments of
these correlation functions. The results for $\rsv(\xi)$ from the
random matrix theories are valid in the range $\xi\ll E_T$.

\subsection{Intermediate density}
\label{sec:ls_int}

In analogy to the analysis of Leutwyler and Smilga
\cite{Leutwyler:1992yt} we first project the partition function onto
sectors of fixed topological charge and then expand it in powers of
the diquark sources.  This is done both in the microscopic theory and
in the effective theory.  Matching the coefficients of the sources
then yields sum rules for the inverse singular values.

On the QCD side, we start from \eqref{eq:ZJ} and assume that there are
no accidental zero modes.  Without loss of generality we therefore set
$n_L=0$ and $n_R=\nu\ge0$.\footnote{As usual, for $\nu<0$ the final
  results of this section remain valid, except that $\nu$ must be
  replaced by $|\nu|$.}  We also introduce the notation
\begin{align}
  \ev{O}_\nu=\frac
    {\ev{O\,{\det}'(D^\dagger D)^{N_f/2}}_{\text{YM},\nu}}
    {\ev{{\det}'(D^\dagger D)^{N_f/2}}_{\text{YM},\nu}}\,,
\end{align}
where the subscript $\nu$ indicates that the average is only over
gauge fields with fixed topological charge $\nu$ and the prime, as
always, means that the zero modes are omitted.  Using
\begin{align}
  {\det}'\big(
    D^\dagger D+J_L^\dagger J_LP_L+J_R^\dagger J_RP_R
  \big)
  ={\prod_n}'\det(\xi_{Ln}^2+J_L^\dagger J_L)(\xi_{Rn}^2+J_R^\dagger J_R)
\end{align}
the partition function for fixed topology is given by
\begin{align} 
  \label{eq:Znu_JLJR}
  \frac{Z_\nu(J_L,J_R)}{\ev{{\det}'(D^\dagger D)^{N_f/2}}_{\text{YM},\nu}}
  =\big[\Pf(J_L^\dagger)\Pf(J_R)\big]^\nu \left\langle{\prod_n}'
  {\det}^{1/2}\left(\1+\frac{J_L^\dagger J_L}{\xi_{Ln}^2}\right)
  \left(\1+\frac{J_R^\dagger J_R}{\xi_{Rn}^2}\right) \right\rangle_\nu\,.
\end{align}
Using the formula
\begin{align}
  \det(\1+\eps)=1+\tr\eps+\frac{1}{2} \big[(\tr
  \eps)^2-\tr(\eps^2)\big] + O(\eps^3) 
\end{align}
we expand $Z_\nu$ in powers of the diquark sources,
\begin{align}
  &\frac{Z_\nu(J_L,J_R)}{\ev{{\det}'(D^\dagger D)^{N_f/2}}_{\text{YM},\nu}
    \big[\Pf(J_L^\dagger)\Pf(J_R)\big]^\nu}
  \notag
  \\
  & = 1 
  +\bigg[   \frac{\tr J_L^\dagger J_L}{2}
  \bigg\langle{{\sum_n}'}\frac{1}{\xi_{Ln}^2}\bigg\rangle_\nu 
  -\frac{\tr (J_L^\dagger J_L)^2}{4}
  \bigg\langle{\sum_{n}}'\frac{1}{\xi_{Ln}^4}\bigg\rangle_\nu 
  +\frac{(\tr J_L^\dagger J_L)^2}{8}
  \bigg\langle\bigg({\sum_{n}}'\frac{1}{\xi_{Ln}^2}\bigg)^2\bigg\rangle_\nu
  \notag
  \\
  & \qquad\qquad  +(L\leftrightarrow R)\bigg]  + 
  \frac{\tr J_L^\dagger J_L\tr J_R^\dagger J_R}{4}
  \bigg\langle\bigg({\sum_{n}}'\frac{1}{\xi_{Ln}^2}\bigg)
  \bigg({\sum_{m}}'\frac{1}{\xi_{Rm}^2}\bigg)\bigg\rangle_\nu
  +O(J^6)\,.
  \label{eq:Z_micro_expanded}
\end{align}

On the low-energy effective theory side, let us first consider the
case $N_f\ge4$.  We start from \eqref{eq:int} and neglect the kinetic
terms to obtain the finite-volume partition function
\begin{align}
  \label{eq:ZeffI}
  Z^\text{eff}(J_L,J_R)=\int d\Sigma_L\, d\Sigma_R\, dV\,
  \exp\big[V_4\Phi_\I\re\tr(J_L\Sigma_L-J_R\Sigma_R)V\big]\,,
\end{align}
where the integrals over $\Sigma_{L,R}$ and $V$ are no longer
functional integrals but simple integrals over $\SU(N_f)/\Sp(N_f)$ and
$\U(1)$, respectively.  It is convenient to define
\begin{align}
  \label{eq:Jtilde}
  \tilde J_i=J_iV_4\Phi_\I\qquad (i=L,R)\,.
\end{align}
As explained after \eqref{eq:Z_theta}, a nonzero $\theta$-angle can be
introduced by redefining, e.g., $J_L\to J_L\ee^{-i\theta/N_f}$ and
$J_R\to J_R\ee^{i\theta/N_f}$.  To project onto topological sectors we
note that the inversion of \eqref{eq:Z_theta} is
\begin{align}
  \label{eq:Z_nu}
  Z_\nu=\frac1{2\pi}\int_0^{2\pi}d\theta\,\ee^{-i\nu\theta}Z(\theta)\,,
\end{align}
which in the present case gives
\begin{align}
  Z_\nu^\text{eff}(J_L,J_R)&=\int d\Sigma_L\,d\Sigma_R\,dV\,
  \frac{d\theta}{2\pi}\,
  \exp\big[-i\nu\theta+\re\tr(\tilde J_L\ee^{-i\theta/N_f}\Sigma_L
  -\tilde J_R\ee^{i\theta/N_f}\Sigma_R)V\big]\notag\\
  &=\int d\Sigma_L\,d\Sigma_R\,dV\,\frac{d\theta}{2\pi}\,
  \exp\big[-iN_f\nu\theta+\re\tr(\tilde J_L\ee^{-i\theta}\Sigma_L
  -\tilde J_R\ee^{i\theta}\Sigma_R)V\big]\notag\\
  &=\int d\Sigma_L\,d\Sigma_R\,dL\,dR\,~
  (LR^\dagger)^{\frac12N_f\nu}\,
  \exp\big[\re\tr(\tilde J_LL\Sigma_L-\tilde J_RR\Sigma_R)\big]\,,
  \label{eq:Zint}
\end{align}
where in the second step we have redefined $\theta\to N_f\theta$ and
used the periodicity of the $\theta$-integral to extend the
integration region from $[0,2\pi/N_f]$ back to $[0,2\pi]$.  In the
last step we have introduced the new $\U(1)$ integration variables
\begin{align}
  L=\ee^{-i\theta}V\quad\text{and}\quad R=\ee^{i\theta}V\,.
\end{align}
Parametrizing $\Sigma_i=U_iIU_i^T$ with $U_i\in\SU(N_f)/\Sp(N_f)$ as
in \eqref{eq:param_high} we note that
\begin{align}
  \re\tr(\tilde J_RR\Sigma_R)
  =\re\tr(\tilde J_R^\dagger R^\dagger\Sigma_R^\dagger)
  =\re\tr(-\tilde J_R^\dagger R^\dagger U_R^*IU_R^\dagger)\,.
\end{align}
Redefining the integration variables $R\to R^\dagger$ and $U_R\to
U_R^*$ and combining $L$ and $R$ with $\Sigma_L$ and $\Sigma_R$ as in
\eqref{eq:Sigma_tilde} we thus obtain from \eqref{eq:Zint}
\begin{align}
  Z_\nu^\text{eff}(J_L,J_R)&=\int d\Sigma_L\,d\Sigma_R\,dL\,dR\,
  (LR)^{\frac12N_f\nu}\,
  \exp\big[\re\tr(\tilde J_LL\Sigma_L+\tilde J_R^\dagger R\Sigma_R)\big]
  \notag\\
  &=\int d\tilde\Sigma_L\,d\tilde\Sigma_R\,
  {\det}^{\nu/2}(\tilde \Sigma_L\tilde\Sigma_R)
  \exp\big[\re\tr(\tilde J_L\tilde\Sigma_L
  +\tilde J_R^\dagger\tilde\Sigma_R)\big]\,.
  \label{eq:Ztmp}
\end{align}
Recalling $\tilde\Sigma_i=U_iIU_i^T$ with $U_i\in\U(N_f)/\Sp(N_f)$ we
note that we can extend the integration to be over $U_i\in\U(N_f)$
because the generators of $\Sp(N_f)$ leave $I$ invariant and therefore
drop out of the combination $U_iIU_i^T$.  We now define, for a complex
antisymmetric matrix $X$ of even dimension $N_f$, the function
\cite{Dalmazi:2001hd}
\begin{align}
  \label{eq:f}
  g_\nu(X)=\int_{\U(N_f)}dU\,(\det  U)^\nu
  \exp\big[\re\tr(X^\dagger UIU^T)\big]\,,
\end{align}
in terms of which we have
\begin{align}
  \label{eq:Zff}
  Z_\nu^\text{eff}(J_L,J_R)=g_\nu(\tilde J_L^\dagger) g_\nu(\tilde
  J_R)\,.
\end{align}
The expansion of the integral \eqref{eq:f} in powers of $X$ was
computed in \cite{Dalmazi:2001hd}.  Adapting eq.~(40) of that
reference to our case, we have with $\alpha=N_f+\nu-3$
\begin{align}
  \label{eq:gnu}
  g_\nu(X)&=\mathcal N\,(\Pf X)^\nu
  \left[1+A_\alpha\tr X^\dagger X+B_\alpha(\tr X^\dagger X)^2
  -C_\alpha\tr(X^\dagger X)^2+O(X^6)\right]
\end{align}
with coefficients
\begin{align}
  A_\alpha=\frac{1}{2(\alpha+2)}\,,\quad 
  B_\alpha=\frac{\alpha+1}{8\alpha(\alpha+2)(\alpha+3)}\,,\quad
  C_\alpha=\frac{1}{4\alpha(\alpha+2)(\alpha+3)}
\end{align}
and a normalization factor $\mathcal N$ that depends on $N_f$ and
$\nu$ but is not important for our present purposes.  ($\mathcal N$
can be determined from \cite{Smilga:1994tb} if desired, and for
$\nu=0$ we have $\mathcal N=1$ for all $N_f$.)
Equations~\eqref{eq:Zff} and \eqref{eq:gnu} imply that in the limit
$J_{L/R}\to0$ only the $\nu=0$ sector survives.  This is analogous to
QCD at $\mu=0$, where in the chiral limit the topological
susceptibility vanishes \cite{Crewther:1977ce}.  We thus obtain
\begin{align}
  \frac{Z_\nu^\text{eff}(J_L,J_R)}{\mathcal N^2
    \big[\Pf(\tilde J_L^\dagger)\Pf(\tilde J_R)\big]^\nu}
  &=1+\big[A_\alpha\tr\tilde J_L^\dagger \tilde J_L
  +B_\alpha(\tr\tilde J_L^\dagger \tilde J_L)^2
  -C_\alpha\tr (\tilde J_L^\dagger \tilde J_L)^2+
  (L\leftrightarrow R)\big] \notag\\
  &\quad +A^2_\alpha(\tr\tilde J_L^\dagger \tilde J_L)
  (\tr\tilde J_R^\dagger \tilde J_R)
  +O(\tilde J^6)\,.
  \label{eq:Znu_eff}
\end{align}

We can now match the right-hand sides of \eqref{eq:Z_micro_expanded}
and \eqref{eq:Znu_eff} to obtain sum rules for the inverse singular
values,
\begin{subequations}
  \label{eq:sum_N_f>2}
  \begin{align}
    \label{eq:sum_N_f>2-1}
    \bigg\langle{\sum_{n}}'\frac{1}{\xi_{Ln}^2}\bigg\rangle_\nu
    &=\bigg\langle{\sum_{n}}'\frac{1}{\xi_{Rn}^2}\bigg\rangle_\nu
    =2\tilde A_\alpha\,,\;\;\:
    \bigg\langle\bigg({\sum_{n}}'\frac{1}{\xi_{Ln}^2}\bigg)^2\bigg\rangle_\nu
    =\bigg\langle\bigg({\sum_{n}}'\frac{1}{\xi_{Rn}^2}\bigg)^2\bigg\rangle_\nu
    =8\tilde B_\alpha\,,\\
    \bigg\langle{\sum_{n}}'\frac{1}{\xi_{Ln}^4}\bigg\rangle_\nu
    &=\bigg\langle{\sum_{n}}'\frac{1}{\xi_{Rn}^4}\bigg\rangle_\nu
    =4\tilde C_\alpha\,,\;\;\:
    \bigg\langle\bigg({\sum_{m}}'\frac{1}{\xi_{Rm}^2}\bigg)
    \bigg({\sum_{n}}'\frac{1}{\xi_{Ln}^2}\bigg)\bigg\rangle_\nu
    =4\tilde A_\alpha^2\,,
    \label{eq:sum_N_f>2-2}
  \end{align}
\end{subequations}
where $\tilde A_\alpha=(V_4\Phi_\I)^2A_\alpha$, $\tilde
B_\alpha=(V_4\Phi_\I)^4B_\alpha$, and $\tilde
C_\alpha=(V_4\Phi_\I)^4C_\alpha$.  These sum rules imply
\begin{align}
  \label{eq:ls_4}
  \bigg\langle{\sum_{n}}'\frac{1}{\xi_{n}^2}\bigg\rangle_\nu
  =4\tilde A_\alpha\,,\quad
  \bigg\langle\bigg({\sum_{n}}'\frac{1}{\xi_{n}^2}\bigg)^2\bigg\rangle_\nu
  =8\tilde A^2_\alpha+16\tilde B_\alpha\,,\quad
  \bigg\langle{\sum_{n}}'\frac{1}{\xi_{n}^4}\bigg\rangle_\nu
  =8\tilde C_\alpha\,.
\end{align}
Note that the conditions
\begin{align}
  \label{eq:cond}
  \bigg\langle\bigg({\sum_{n}}'\frac{1}{\xi_{n}^2}\bigg)^2\bigg\rangle_\nu
  \ge\bigg\langle{\sum_{n}}'\frac{1}{\xi_{n}^2}\bigg\rangle_\nu^2
  \quad\text{and}\quad 
  \bigg\langle\bigg({\sum_n}'\frac{1}{\xi_{Ln}^2}
  -{\sum_n}' \frac{1}{\xi_{Rn}^2}\bigg)^2\bigg\rangle_\nu
  =16\tilde B_\alpha-8\tilde A_\alpha^2\ge0
\end{align}
both lead to the inequality $2B_\alpha\geq A^2_\alpha$, which is
satisfied nontrivially for all $N_f\ge4$ since
\begin{align}
  2B_\alpha-A^2_\alpha=\frac{1}{2\alpha(\alpha+2)^2(\alpha+3)}>0\,.
\end{align}
The second condition in \eqref{eq:cond} also makes it clear that in
general the left- and right-handed sums are different (for a fixed
gauge field).  Higher-order sum rules can be computed by expanding the
partition functions to higher order in the diquark sources.

Let us now consider $N_f=2$ and for simplicity take $j_R$ and $j_L$
real.  In that case \eqref{eq:Z_micro_expanded} becomes
\begin{align}
  \frac{Z_\nu(j_L,j_R)}{(-j_Lj_R)^\nu}&=1+j_L^2j_R^2
  \bigg\langle\bigg({\sum_{n}}'\frac{1}{\xi_{Ln}^2}\bigg)
  \bigg({\sum_{m}}'\frac{1}{\xi_{Rm}^2}\bigg)\bigg\rangle_\nu\notag\\
  &\quad+\bigg[j_L^2\bigg\langle{\sum_n}'\frac{1}{\xi_{Ln}^2}\bigg\rangle_\nu
  +j_L^4\bigg\langle{\sum_{m<n}}'\frac{1}{\xi_{Lm}^2\xi_{Ln}^2}\bigg\rangle_\nu
  +(L\leftrightarrow R)\bigg]+O(j^6)\,.
  \label{eq:Z_micro_twoflavor}\end{align}
The finite-volume partition function obtained from
\eqref{eq:int2} in the $\varepsilon$-regime is
\begin{align}
  Z^\text{eff}(j_L,j_R)&=\int_{\U(1)}dV\,
  \exp\big[-2\big(\tilde j_L-\tilde j_R\big)\re V\big]
\end{align}
with $\tilde j_i=j_iV_4\Phi_\I$.
Introducing a $\theta$-angle as before we thus have
\begin{align}
  Z_\nu^\text{eff}(j_L,j_R)&=\int dV\,\frac{d\theta}{2\pi}\,
  \exp\big[-i\nu\theta-2\re(\tilde j_L\ee^{-i\theta/2}
  -\tilde j_R\ee^{i\theta/2})V\big]\notag\\
  &=\int_{\U(1)} dL\,L^\nu\exp\big[-2\tilde j_L\re L\big]
  \int_{\U(1)} dR\,(R^\dagger)^\nu\exp\big[2\tilde j_R\re R\big]\notag\\
  &=I_\nu(-2\tilde j_L)I_\nu(2\tilde j_R)\,,
  \label{eq:Zeff2}
\end{align}
in analogy with the steps that led to \eqref{eq:Zint}.  In the last
line we have recognized the integral representation of the modified
Bessel function $I_\nu$.  Expanding in the diquark sources gives
\begin{align}
  \frac{Z_\nu^\text{eff}(j_L,j_R)}{(\nu!)^{-2}
    (-\tilde j_L\tilde j_R)^\nu}
  =1+\frac{\tilde j_L^2+\tilde j_R^2}{\nu+1}
  +\frac{\tilde j_L^2\tilde j_R^2}{(\nu+1)^2}
  +\frac{\tilde j_L^4+\tilde j_R^4}{2(\nu+1)(\nu+2)}
  +O(\tilde j^6)\,,
  \label{eq:Z_expanded_twoflavor}
\end{align}
and matching \eqref{eq:Z_micro_twoflavor} and
\eqref{eq:Z_expanded_twoflavor} yields the sum rules
\begin{subequations}
  \begin{gather}
    \label{eq:ls_2}
    \bigg\langle{\sum_n}'\frac{1}{\xi_{Rn}^2}\bigg\rangle_{\!\!\nu}
    =\bigg\langle{\sum_n}'\frac{1}{\xi_{Ln}^2}\bigg\rangle_{\!\!\nu}
    =\frac{(V_4\Phi_\I)^2}{\nu+1}\,,
    \quad 
    \bigg\langle\bigg({\sum_m}'\frac{1}{\xi_{Rm}^2}\bigg)
    \bigg({\sum_n}'\frac{1}{\xi_{Ln}^2}\bigg)\bigg\rangle_{\!\!\nu}
    =\frac{(V_4\Phi_\I)^4}{(\nu+1)^2}\,,\\
    \bigg\langle{\sum_{m<n}}'\frac{1}{\xi_{Rm}^2\xi_{Rn}^2}\bigg\rangle_\nu
    =\bigg\langle{\sum_{m<n}}'\frac{1}{\xi_{Lm}^2\xi_{Ln}^2}\bigg\rangle_\nu
    =\frac{(V_4\Phi_\I)^4}{2(\nu+1)(\nu+2)}\,.
  \end{gather}
\end{subequations}
Note that the sum rules in \eqref{eq:ls_2} follow from those in
\eqref{eq:sum_N_f>2} by setting $N_f=2$.

We observe that there is a ``decoupling rule'' for $N_f \geq 2$,
\begin{align}
  \bigg\langle\bigg(\sum_m\frac{1}{\xi_{Rm}^2}\bigg)
  \bigg(\sum_n\frac{1}{\xi_{Ln}^2}\bigg)\bigg\rangle_\nu
  =\bigg\langle\sum_n\frac{1}{\xi_{Rn}^2}\bigg\rangle_\nu
  \bigg\langle\sum_n\frac{1}{\xi_{Ln}^2}\bigg\rangle_\nu\,.
\end{align}
A similar decoupling of the left- and right-handed modes is expected
to hold for higher moments as well since the finite-volume partition
function factorizes, see \eqref{eq:Zff}.  It is therefore sufficient
to consider only one of the factors.  Considering $N_f \geq 4$ for the
time being, we divide \eqref{eq:Znu_JLJR} and \eqref{eq:Zff} by $[\Pf
J_L^\dagger]^\nu$ and then take the limit $J_L\to0$ to obtain
\begin{align}
  \int_{\U(N_f)}dU\,(\det U)^\nu
  \exp\big[V_4\Phi_\I\re\tr(J_R^\dagger UIU^T)\big]
  \propto[\Pf J_R]^\nu\left\langle{\prod_n}'{\det}^{1/2}
  \left(\1+\frac{J_R^\dagger J_R}{\xi_{Rn}^2}\right)\right\rangle_\nu
\end{align}
modulo an unimportant normalization factor.  This has the same form as
the mass-dependent finite-volume partition function at $\mu=0$ for
$N_f \geq 2$ \cite{Smilga:1994tb},
\begin{align}
  Z_\nu^\text{eff}(\mathcal M)
  &=\int_{\U(2N_f)}dU\,(\det U)^\nu\exp\left[\frac{1}{2}{V_4\Sigma}
  \re\tr(\mathcal M^\dagger UIU^T)\right]\notag\\
  &\propto(\Pf\mathcal M)^\nu\left\langle\prod_{\lambda_n>0}{\det}^{1/2}
  \left(\1+\frac{\mathcal M^\dagger \mathcal M}{\lambda_{n}^2}\right)
  \right\rangle_\nu \quad\text{with}\quad 
  \mathcal M=\begin{pmatrix}0&M\\-M^T&0\end{pmatrix},
  \label{eq:Zm}
\end{align}
where $M$ is the quark mass matrix, $\Sigma$ is the absolute value of
the chiral condensate, the $i\lambda_n$ are the Dirac eigenvalues at
$\mu=0$, and the normalization factor was omitted again.  This implies
the correspondence
\begin{align}
  \big\{\Phi_\I,\; J_R,\; N_f,\; \xi_{Rn}\big\}_{\mu}
  \longleftrightarrow
  \big\{\tfrac12\Sigma,\; \mathcal M,\; 2N_f,\; \lambda_n(>0)\big\}_{\mu=0}
  \label{eq:rule}
\end{align}
and similarly for $R\leftrightarrow L$.  For $N_f=2$ this
correspondence remains valid since the $\U(2)$-integral in
\eqref{eq:Zm} then gives $I_\nu(-mV_4\Sigma)$ with the quark mass $m$,
which is to be compared with $I_\nu(2j_RV_4\Phi_\I)$ in
\eqref{eq:Zeff2}.

It is well known that to lowest order in the $\eps$-regime the system
can alternatively by described by chiral random matrix theory.  For
our present case, i.e., two-color QCD at intermediate density in the
chiral limit and in the presence of diquark sources, we obtain the
chiral random matrix theory\footnote{Random matrix theories for
  singular values are also considered in a very different context in
  \cite{Burda2010,Burda:2011ky}.}
\begin{align}
  Z_\nu^\text{RMT}(\hat J_L,\hat J_R)=g_\nu^\text{RMT}(\hat J_L^\dagger)
  g_\nu^\text{RMT}(\hat J_R)
\end{align}
with
\begin{align}    
  g_\nu^\text{RMT}(\hat J)&=\int dA\, \ee^{-N\tr(A^TA)}
  \Pf\begin{pmatrix}
    \hat J^\dagger\otimes\1_N & \1_{N_f}\otimes A \\
    \1_{N_f}\otimes(-A^T) & \hat J\otimes\1_{N+\nu}
  \end{pmatrix}\notag\\
  &=\int dA\, \ee^{-N\tr(A^TA)}\prod_{f=1}^{N_f/2}\hat j_f^\nu\,
  \det\big(AA^T+|\hat j_f|^2 \big)\,,
  \label{eq:Zrmt}
\end{align}
where $A$ is a real matrix of dimension $N\times(N+\nu)$.  In the
second line, we have applied an orthogonal transformation $O$ to bring
the antisymmetric matrix $\hat J$ to the standard form
\begin{align}
  \hat J=O\begin{pmatrix}
    0&\diag(\hat j_1,\ldots,\hat j_{N_f/2})\\
    -\diag(\hat j_1,\ldots,\hat j_{N_f/2})&0
  \end{pmatrix}O^T\,,
\end{align}
where the $\hat j_f$ can be complex.  For $\nu<0$ we have to replace
$\hat j_f^\nu$ by $(\hat j_f^*)^{|\nu|}$ and $AA^T$ by $A^TA$ in
\eqref{eq:Zrmt}.  Note that the second line of \eqref{eq:Zrmt} clearly
exhibits the correspondence of the random matrix theory for our
present case and for two-color QCD at $\mu=0$ with nonzero masses and
without diquark sources \cite{Verbaarschot:1994qf}.

Going through the standard steps of converting the random matrix
theory to a sigma-model we find in the limit $N\to\infty$ that
$g_\nu^\text{RMT}(\hat J)=g_\nu(\sqrt2N\hat J)$, which shows that the
random-matrix partition function is equivalent to the finite-volume
partition function,\footnote{This equivalence is expected to extend to
  the microscopic correlations of the singular values, although this
  still needs to be proven using the partially quenched theory,
  similar to \cite{Basile:2007ki}.} provided that the dimensionless
random-matrix diquark sources and singular values (i.e., the square
roots of the eigenvalues of $A^TA$) are related to the physical
quantities by
\begin{equation}
  \label{eq:J_scale_high}
  \hat J_i=J_iV_4\Phi_\I/\sqrt2N
  \qquad\text{and}\qquad   \hat\xi_i=\xi_i V_4\Phi_\I/\sqrt 2N
  \qquad(i=L,R)\,.  
\end{equation}

Note that the partition function only factorizes in the chiral limit.
If we include quark masses the random-matrix partition function is
\begin{align}
  \label{eq:Zrmt_m}
  Z_\nu^\text{RMT}(\hat J_L,\hat J_R,\hat M)
  =\int dA_L\,dA_R\,\ee^{-N\tr(A_L^TA_L+A_R^TA_R)}
  \Pf\begin{pmatrix}
    \hat J_L & A_L & -\hat M^T & 0\\
    -A_L^T & \hat J_L^\dagger & 0 & -\hat M^\dagger\\
    \hat M & 0 & -\hat J_R^\dagger & -A_R\\
    0 & \hat M^* & A_R^T & -\hat J_R
  \end{pmatrix},
\end{align}
where $A_L$ and $A_R$ are again real $N\times(N+\nu)$ matrices.  This
is a natural extension of the random matrix theory constructed at high
density in the absence of diquark sources
\cite{Kanazawa:2009en,Kanazawa:2011tf}.  How the dimensionless
random-matrix quark masses are related to the dimensionful masses
depends on the density.  At high density we have $\hat
M=M\sqrt{3V_4/N}\Delta/2\pi$ \cite{Kanazawa:2009en}.  What sets the
scale for the masses at intermediate density is a dynamical question
that cannot be answered with the methods we employ here.

For $\mu=0$, the microscopic spectral correlations of the Dirac
eigenvalues have been computed in chiral random matrix theory
\cite{Nagao:1995pz,Nagao:2000cb}.  The microscopic scale is defined by
$z=\lambda V_4\Sigma$, and as an example we quote the microscopic
spectral density \cite{Verbaarschot:1994ia}
\begin{align}
  \label{eq:rhos}
  \rho^{N_f,\,\nu}_s(z)=
  \frac{z}{2}\big[J_{a}^2(z)-J_{a+1}(z)J_{a-1}(z)\big]
  +\frac{1}{2}J_{a}(z)\Big[1-\int_0^z dw\,J_{a}(w)\Big]\,,
\end{align}
where $a=2N_f+|\nu|$ and $J$ denotes the Bessel function.  According
to the correspondence \eqref{eq:rule}, the microscopic density of the
Dirac singular values for a given chirality,
\begin{align}
  \rsvm^{N_f,\,\nu}(x)=\lim_{V_4\to\infty} \Big\langle
  \sum_n\delta(x-x_n)\Big\rangle_\nu\quad\text{with}\quad
  x_n=\xi_{Rn}V_4\Phi_\I\text{ or }\xi_{Ln}V_4\Phi_\I\,,
\end{align}
is given by
\begin{align}
  \rsvm^{N_f,\,\nu}(x)=2\rho_s^{\frac12N_f,\,\nu}(2x)
\end{align}
for $N_f \geq 2$, and analogously for higher-order correlation
functions.  By construction all sum rules derived in this section are
moments of suitable microscopic correlation functions.  We have
checked numerically that this is indeed true for the moments of the
microscopic density of the singular values.

Note that for $N_f=2$ we could not obtain sum rules for, e.g.,
$\ev{\sum_n1/\xi_{Rn}^4}_\nu$ or $\ev{\sum_n1/\xi_{Ln}^4}_\nu$ from
the expansion of the partition function.  However, we can obtain them
as moments of the microscopic density.  It follows from the Taylor
expansion of \eqref{eq:rhos} that the above two sums diverge for
$\nu=0$ and $1$, which is consistent with the observation that
formally setting $N_f=2$ in \eqref{eq:sum_N_f>2-2} gives a negative or
infinite result for $\nu=0$ or $1$, respectively.  But for $\nu\ge2$
these sums converge, and the result is identical to the one in
\eqref{eq:sum_N_f>2-2} with $N_f=2$.  Since in the random matrix
theory there is nothing special about $N_f=2$ we expect that all sum
rules derived for $N_f\ge4$ remain valid for $N_f=2$, if convergent.

\subsection{High density}
\label{sec:ls_high}

At asymptotically high density topology is suppressed and
$m_\text{inst}=0$.  In that case we expect to obtain the $\nu=0$
subset of the results of the preceding subsection.  However, as long
as $m_\text{inst}$ is still nonzero the $\nu\ne0$ sectors are not
completely suppressed.  These expectations will be confirmed below.

On the QCD side nothing changes, i.e., we match to
\eqref{eq:Z_micro_expanded}.  On the low-energy effective theory side,
let us again start with $N_f\ge4$.  Neglecting the kinetic terms in
\eqref{eq:lagrangian_high} gives
\begin{align}
  Z^\text{eff}(J_L,J_R)&=\frac1{I_0(\kappa)}
  \int d\Sigma_L\,d\Sigma_R\,dL\,dR\,
  \exp\big[\re\tr(\tilde J_LL\Sigma_L-\tilde J_RR\Sigma_R)
  +\kappa\re(L^\dagger R)^{N_f/2}\big]
\end{align}
with $\tilde J_i=J_iV_4\Phi_\H$ and
$\kappa=2V_4\fh_0^2m_\text{inst}^2/N_f\ge0$.\footnote{Although $\tilde
  f_0\sim\mu$ we have $\kappa\to0$ for $\mu\to\infty$ since
  $m_\text{inst}^2$ goes to zero much faster than $1/\mu^2$
  \cite{Schafer:2002ty}.} The integrals over $L,R$ are over $\U(1)$,
while those over $\Sigma_{L,R}$ are over $\SU(N_f)/\Sp(N_f)$.  The
normalization factor $1/I_0(\kappa)$ has been added to ensure
$Z_\text{eff}(0,0)=1$ as in \eqref{eq:ZeffI}. Introducing a
$\theta$-angle as before and projecting onto topological sectors using
\eqref{eq:Z_nu} we obtain
\begin{align}
  Z_\nu^\text{eff}(J_L,J_R) 
  &=\frac1{I_0(\kappa)}
  \int d\Sigma_L\,d\Sigma_R\,dL\,dR\,\frac{d\theta}{2\pi}\,\notag\\
  &\quad\times\exp\big[-i\nu\theta
  +\re\tr(\tilde J_L\ee^{-i\theta/N_f}L\Sigma_L
  -\tilde J_R\ee^{i\theta/N_f}R\Sigma_R)
  +\kappa\re(L^\dagger R)^{N_f/2}\big]\notag\\
  &=\int d\Sigma_L\,d\Sigma_R\,dL\,dR\,(LR^\dagger)^{\frac12N_f\nu}
  \exp\big[\re\tr(\tilde J_LL\Sigma_L-\tilde J_RR\Sigma_R)\big]\notag\\
  &\quad\times\frac1{I_0(\kappa)}\int\frac{d\theta}{2\pi}\,
  \exp\big(-i\nu\theta+\kappa\cos\theta\big)\notag\\
  &=\big[Z_\nu^\text{eff}(J_L,J_R)\text{ of theory \I}
  \big]_{\Phi_\I\to\Phi_\H}
  \times \frac{I_{\nu}(\kappa)}{I_0(\kappa)}\,,
  \label{eq:ZeffH}
\end{align}
where in the second line we have first redefined $L\to
L\ee^{i\theta/N_f}$, $R\to R\ee^{-i\theta/N_f}$ and then
$\ee^{-2i\theta/N_f}\to \ee^{-2i\theta/N_f}LR^\dagger$, and in the
last line we have compared with \eqref{eq:Zint}.  From
\eqref{eq:ZeffH} we can draw several conclusions.  First, all
Leutwyler-Smilga-type sum rules in theory \H\ are identical to those
of theory \I\ (except for the replacement $\Phi_\I\to\Phi_\H$) since
the relative factor $I_{\nu}(\kappa)/I_0(\kappa)$ is independent of
$J_{L/R}$ and therefore drops out when computing expectation values in
sectors of fixed topology.  Second, in the limit $\kappa\to\infty$ the
relative factor goes to 1 for any $\nu$, while for finite $\kappa$ the
sectors with $\nu\ne0$ are suppressed and disappear completely for
$\kappa\to0$ as expected.  Third, even in the presence of the anomaly
the finite-volume partition function still factorizes into a left- and
a right-handed part as in \eqref{eq:Zff}.

For $N_f=2$ the argument goes through in exactly the same way.

Since the sum rules for theories \I\ and \H\ are identical, it is
natural to expect that the random matrix theory and the microscopic
correlation functions of the singular values for \H\ are identical to
those of \I, with $\Phi_\I\leftrightarrow\Phi_\H$ (again, this would
have to be proven using the partially quenched theory).  The only
difference is in the summation over topological charge.  If, e.g., one
wants to sum the microscopic spectral density over all sectors, one
needs to take into account the $\nu$-dependent relative factor in
\eqref{eq:ZeffH}, i.e.,
\begin{align}
  \rsvm^\I(x,\theta)=\frac{\sum\limits_\nu\rsvm^\nu(x)\,
    \ee^{i\nu\theta}\, Z_\nu^\I} 
  {\sum\limits_\nu^{} \ee^{i\nu\theta}\, Z_\nu^\I}\quad\text{and}\quad
  \rsvm^\H(x,\theta)=\frac{\sum\limits_\nu\rsvm^\nu(x)\,
    \ee^{i\nu\theta}\, Z_\nu^\I\, I_{\nu}(\kappa)}
  {\sum\limits_\nu^{} \ee^{i\nu\theta}\, Z_\nu^\I\, I_{\nu}(\kappa)}\,,
\end{align}
where we have suppressed the arguments $J_L$ and $J_R$ in $Z_\nu^\I$
and $\rsvm$.  Note that in the limit of zero diquark sources
$Z_\nu^\I=0$ for $\nu\ne0$ so that $\rsvm(x,\theta)=\rsv^0(x)$ for
both \I\ and \H.

\subsection{Low density}

On the low-energy effective theory side, we use \eqref{eq:Lefflow} and
for simplicity take $J_R=-J_L=jI$ with real $j$.  Neglecting the
kinetic terms we obtain the finite-volume partition function
\begin{align}
  Z^\text{eff}(\mu,j)=\int d\Sigma\,
  \exp\big[\mu^2F^2V_4\tr(\Sigma B\Sigma^\dagger B)
  +jV_4\Phi_\L\re\tr(\ee^{-i\theta/N_f}\Sigma_d\Sigma)\big]\,,
  \label{eq:Zeff_low}
\end{align}
where the integration is over $\SU(2N_f)/\Sp(2N_f)$, the $B^2$-term
has been absorbed in the normalization, and a $\theta$-angle has been
introduced by $J_R\to J_R\ee^{i\theta/N_f}$ and $J_L\to
J_L\ee^{-i\theta/N_f}$.  The expansion of \eqref{eq:Zeff_low} in powers
of $j$ for arbitrary $N_f$ is formally possible, but the analytical
calculation of the expansion coefficients is a challenging
mathematical problem which we do not address here.  Instead, we will
obtain partial results for the special case $N_f=2$.  Because of the
group isomorphisms $\SU(4)\simeq\SO(6)$ and $\Sp(4)\simeq\SO(5)$ we
can regard the coset as $\SO(6)/\SO(5)\simeq S^5$
\cite{Smilga:1994tb}.  This approach has been explored in detail
\cite{Brauner:2006dv}, and in the following we use the formulation of
that reference.  Adapting the unnumbered equation just before eq.~(7)
of \cite{Brauner:2006dv} to our case, we have modulo a $j$-independent
normalization factor
\begin{align}
  Z^\text{eff}(\mu,j)=\int_{S^5} d\vec n\,
  \exp\left[z(n_2^2+n_4^2)+2wn_4\cos\frac{\theta}{2}\right]
\end{align}
with
\begin{align}
  z = 8\mu^2F^2V_4\quad\text{and}\quad w = 2jV_4\Phi_\L\,.
\end{align}
Projecting onto fixed topology using \eqref{eq:Z_nu} we obtain
\begin{align}
  Z_\nu^\text{eff}(\mu,j)&=\int_{S^5} d\vec n\,
  \ee^{z(n_2^2+n_4^2)} \int_{\U(1)}d\theta\,
  \ee^{-i\nu\theta+2wn_4\cos\tfrac\theta2}\notag\\
  &=\int_{S^5} d\vec n\, \ee^{z(n_2^2+n_4^2)}\,
  I_{2\nu}(2wn_4)\notag\\
  &=\int_{S^5} d\vec n\,
  \ee^{z(n_2^2+n_4^2)}\,\frac{(wn_4)^{2\nu}}{(2\nu)!}
  \bigg[1+\frac{(wn_4)^2}{2\nu+1}+O(j^4)\bigg]\,.
  \label{eq:Z2eff}
\end{align}
As before, we always assume $\nu\ge0$.  For $\nu<0$ we have to replace
$\nu\to|\nu|$.

On the QCD side we again use \eqref{eq:Z_micro_expanded}, which now
becomes
\begin{align}
  Z_\nu(\mu,j)\propto j^{2\nu}\bigg[1+j^2\bigg\langle
  {\sum_n}'\frac1{\xi_n^2}\bigg\rangle_\nu+O(j^4)\bigg]\,,
  \label{eq:Z2qcd}
\end{align}
where we omitted a $j$-independent normalization factor.  Comparing
the ratio of the $j^{2\nu}$ and $j^{2\nu+2}$ terms in \eqref{eq:Z2eff}
and \eqref{eq:Z2qcd} we obtain the sum rule
\begin{align}
  \label{eq:sumlownu}
  \bigg\langle {\sum_n}'\frac1{\xi_n^2}\bigg\rangle_\nu
  =\frac{4}{2\nu+1} (V_4\Phi_\L)^2
  \frac{{\displaystyle\int_{S^5}} d\vec n\,
  \ee^{z(n_2^2+n_4^2)}\,n_4^{2\nu+2}}
  {{\displaystyle\int_{S^5}} d\vec n\, \ee^{z(n_2^2+n_4^2)}\,n_4^{2\nu}}\,.
\end{align}
For $z=0$, \eqref{eq:sumlownu} reproduces the sum rule in
\cite{Smilga:1994tb}, as it should.%
\footnote{In this comparison we need to be careful. First, our
  $\Phi_\L$ corresponds to $\Sigma/2$ in \cite{Smilga:1994tb}.
  Second, all singular values are doubly degenerate (for $\beta=1$) at
  $\mu=0$, which means that the sum rule in \cite{Smilga:1994tb}
  should be compared to our result divided by 2.} In the opposite
limit $z\to\infty$, the r.h.s.\ converges to $2(V_4\Phi_\L)^2/(\nu+1)$
in agreement with \eqref{eq:ls_2} (with
$\Phi_\I\leftrightarrow\Phi_\L$).  The analytical calculation of the
r.h.s.\ for finite $z$ is nontrivial, and we only consider the special
case $\nu=0$, for which we obtain
\begin{align}
  \label{eq:sumlow0}
  \bigg\langle {\sum_n}' \frac1{\xi_n^2}\bigg\rangle_0 = 
  2(V_4\Phi_\L)^2\left(\frac z{\ee^z-z-1}+1-\frac2z\right)\,,
\end{align}
the plot of which is shown as a function of $z$ in figure
\ref{fig:sum_rule_L}.  The proof of this result is given in
appendix~\ref{app:sumlow0}.  The sum rules for the right- and
left-handed singular values are also given by \eqref{eq:sumlow0} but
with the r.h.s.\ divided by 2.  Higher-order sum rules can be obtained
as usual by expanding \eqref{eq:Z2eff} and \eqref{eq:Z2qcd} to higher
orders in $j$.

\begin{figure}
  \centering
  \includegraphics{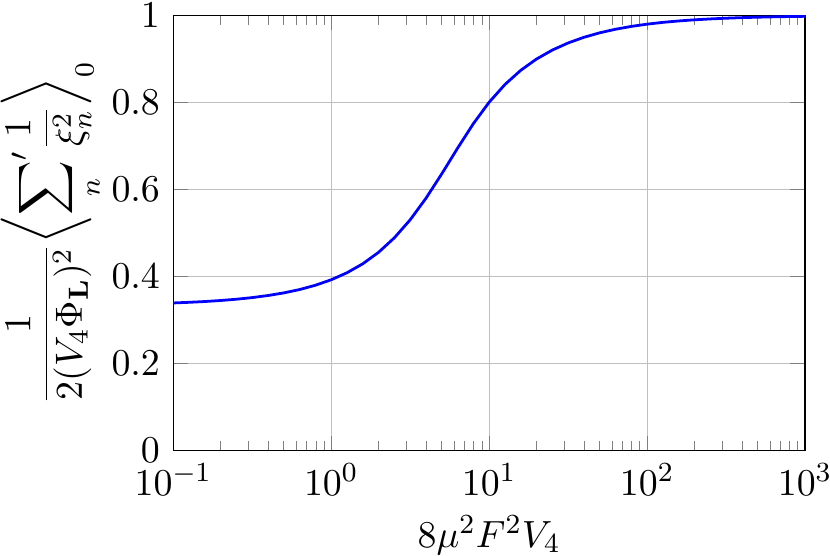}
  \caption{$\mu$-dependence of the spectral sum rule
    \eqref{eq:sumlow0} in theory \L\ in the sector $\nu=0$.  The curve
    converges to $1/3$ as $\mu\to 0$ and to $1$ as $\mu\to\infty$.}
  \label{fig:sum_rule_L}
\end{figure}

As before, to lowest order in the $\eps$-regime we can also describe
the system by a random matrix theory, which at low density we can
obtain by adding diquark sources to the known two-matrix model
\cite{Akemann:2008mp,Akemann:2010tv}, resulting in
\begin{multline}
  Z_\nu^\text{RMT}(\hat \mu, \hat J_L,\hat J_R,\hat M) =
  \\
  \label{eq:Zrmt_low}
  \int dC\,dD\,\ee^{-2N\tr(C^TC+D^TD)}
  \Pf\bep
    \hat J_L & C-\hat\mu D & -\hat M^T & 0\\
    -C^T+\hat\mu D^T  & \hat J_L^\dagger & 0 & -\hat M^\dagger\\
    \hat M & 0 & -\hat J_R^\dagger & -C-\hat\mu D\\
    0 & \hat M^* & C^T+\hat\mu D^T & -\hat J_R
  \eep.
\end{multline}
For earlier approaches, see \cite{Vanderheyden:2000ti,Klein:2004hv}.
In \eqref{eq:Zrmt_low}, $C$ and $D$ are again real $N\times(N+\nu)$
matrices.  Converting \eqref{eq:Zrmt_low} to a sigma-model we find in
the limit $N\to\infty$ that the random matrix parameters are related
to physical quantities by
\begin{align}
  \hat\mu^2=2\mu^2F^2V_4/N\,,\quad
  \hat M=MV_4\Phi_\L/2N\,,\quad
  \hat J_i=J_iV_4\Phi_\L/2N\,,\quad
  \hat \xi_i=\xi_iV_4\Phi_\L/2N\,,
\end{align}
where in the chiral limit the $\xi_L$ ($\xi_R$) are the singular
values of $C-\hat\mu D$ ($C+\hat\mu D$).  Note that for $\hat\mu\ne1$
the partition function does not factorize even in the chiral limit.
However, for $\hat\mu=1$ (``maximum non-Hermiticity'')
\eqref{eq:Zrmt_low} reduces to \eqref{eq:Zrmt_m}, and if the chiral
limit is taken it factorizes again.  The computation of the
microscopic correlation functions of the eigenvalues and/or singular
values from this random matrix theory is a complicated mathematical
task which we do not attempt here.

In section~\ref{sec:eps} we have seen that the $\eps$-regimes of the
effective theories \L\ and \I\ do not overlap. Nevertheless, as just
noted, the random matrix theory \eqref{eq:Zrmt_m} for \I\ is the
$\hat\mu\to1$ limit of the random matrix theory \eqref{eq:Zrmt_low}
for \L.  This is consistent with the observation that the sum of the
sum rules for $1/\xi_{Rn}^2$ and $1/\xi_{Ln}^2$ in \eqref{eq:ls_2} is
the $z\to\infty$ limit of \eqref{eq:sumlownu} for all $\nu$ (with
$\Phi_\I\leftrightarrow\Phi_\L$).  We are therefore tempted to
conjecture that all sum rules of \I\ are the $z\to\infty$ limits of
the sum rules for \L.

\section{Conclusions and outlook}
\label{sec:conclusion}

In this paper, we have studied the singular values of the Dirac
operator in QCD-like theories with Dyson indices $\beta=1$, $2$, and
$4$ at nonzero chemical potential $\mu$.  We pointed out that the
Dirac singular values are always real and nonnegative and that the
scale of the singular value spectrum is set by the diquark condensate
at any $\mu$.  This is in contrast to the Dirac eigenvalues, which
spread into the complex plane at nonzero $\mu$ and whose scale is set
by the chiral condensate at small $\mu$ \cite{Akemann:2007rf} and by
the BCS gap at large $\mu$
\cite{Kanazawa:2009ks,Kanazawa:2009en,Akemann:2010tv}.  We derived
Banks-Casher-type relations for all three values of $\beta$ and then
concentrated on the $\beta=1$ case, for which we identified three
different low-energy effective theories with diquark sources at low,
intermediate, and high density within the whole BEC-BCS crossover
region, and clarified how they are related to each other from the
point of view of integrating out heavy degrees of freedom.  We derived
exact results, such as Smilga-Stern-type relations and
Leutwyler-Smilga-type sum rules, which (together with the
Banks-Casher-type relation) concern the connection between the
singular value spectrum and diquark condensation.  We have also
identified the $\epsilon$-regimes of the effective theories and
constructed the corresponding chiral random matrix theories, from
which the microscopic spectral correlation functions of the singular
values can be determined.  Our results can in principle be tested in
future lattice QCD simulations.  This should provide a value of the
diquark condensate at any density, by which the conjectured BEC-BCS
crossover could be confirmed numerically.

It is evident from our results that the existence of a nonzero diquark
condensate at any chemical potential implies an accumulation of the
Dirac singular values at the origin at any quark density.  In the case
of the QCD vacuum, near-zero Dirac eigenvalues responsible for chiral
symmetry breaking are believed to originate from instantons, as
illustrated by the instanton liquid model \cite{Schafer:1996wv}.  One
might thus naively expect that the accumulation of near-zero singular
values in dense QCD is also attributable to instantons. However, this
is not the case.  Although instanton effects are presumably important
at small and intermediate density, they are suppressed at sufficiently
high density where the one-gluon exchange interaction is more
important for the formation of diquark pairing. The presence of the
Fermi surface is crucial in this mechanism.

Let us discuss some possible extensions of the present work.  First,
here we have concentrated on nonzero diquark sources without quark
masses, but the generalization to include quark masses looks
straightforward. In that way we can study not only the diquark
condensate but also the chiral condensate or the BCS gap as a function
of the chemical potential.  Second, the results obtained in
sections~\ref{sec:eff} through \ref{sec:ls} could also be generalized
to theories with $\beta=2$ and $\beta=4$.  Third, it would be very
interesting to generalize our results to the color-superconducting
phases of three-color QCD.  Unfortunately, this is not straightforward
since the diquark source is no longer gauge invariant, though the
magnitude of the diquark condensate is.  The object $D(\mu)^\dagger
D(\mu)$ obviously exists in three-color QCD and can be studied, but it
is currently unclear to us how its spectrum is related to physical
observables.  It would also be interesting to find out whether (and if
so, how) the gauge invariant four-quark condensate is related to the
Dirac eigenvalue or singular value spectrum in three-color QCD.

We conclude with a short discussion of a phenomenon analogous to the
conjectured BEC-BCS crossover in QCD-like theories studied in this
paper, i.e., the conjectured hadron-quark continuity in three-color
QCD at nonzero baryon density \cite{Schafer:1998ef}.  Recently, the
existence of a bound state of baryons, the H dibaryon (with mass
$m_H$), was observed in lattice QCD simulations at zero density
\cite{Beane:2010hg,Inoue:2010es}. This implies (for three degenerate
flavors) that, as we go to nonzero baryon density, first BEC of H
dibaryons occurs, since due to its binding energy a bosonic H dibaryon
has a smaller excitation energy $m_H/2 - \mu_B$ (per baryon) than a
baryon. In this BEC state $\U(1)_B$ and chiral symmetry are broken
spontaneously, and thus this state has the same symmetry-breaking
pattern as the color-flavor-locked phase at high density, where both
$\U(1)_B$ and chiral symmetry are broken by the diquark condensate.
Therefore these two phases can be continuously connected without any
phase transition (hadron-quark continuity) as first conjectured by
Sch\"afer and Wilczek \cite{Schafer:1998ef}.  An explicit realization
of this conjecture was given within generalized Ginzburg-Landau theory
\cite{Hatsuda:2006ps,Yamamoto:2007ah,Schmitt:2010pf} and the
Nambu--Jona-Lasinio model \cite{Abuki:2010jq}, where it was shown that
the QCD axial anomaly can lead to a crossover between the hadronic and
the CFL phase.  Although it is not yet clear what happens physically
between these two regions, there could be successive changes of
states, from BEC of dibaryons to BCS pairing of dibaryons to BEC of
diquarks and finally to BCS pairing of diquarks.

In the real world, flavor symmetry is explicitly broken due to the
heavy strange quark mass, and the existence of the H dibaryon as well
as the scenario above may be modified.  Actually, it is empirically
believed that a nuclear liquid-gas phase transition takes place as the
baryon density increases. In order to understand high-density matter
in the real world, it would thus be crucial to take into account the
effects of flavor symmetry breaking.  Unfortunately, these effects
cannot be directly studied in lattice QCD simulations even for
QCD-like theories since, e.g., nondegenerate quark masses in two-color
QCD give rise to the sign problem. It is an important future problem
to figure out how one can study flavor symmetry breaking at nonzero
density on the lattice.

\acknowledgments We thank J.\ Bloch, K.\ Maeda, and H.\ Suzuki for
helpful discussions and comments.  TK is supported by the Alexander
von Humboldt foundation.  TW is supported by DFG (SFB/TR-55) and by
the EU (ITN STRONGnet).  NY is supported by a Japan Society for the
Promotion of Science (JSPS) Postdoctoral Fellowship for Research
Abroad.

\appendix

\section{Definitions and conventions}
\label{app:conv}

Unless stated otherwise we always work in Euclidean space.  The
$\gamma$-matrices are Hermitian and satisfy
$\{\gamma_\mu,\gamma_\nu\}=2\delta_{\mu\nu}$ ($\mu,\nu=1,\ldots,4$).
We choose the chiral representation given by
\begin{align}
  \gamma_i=\begin{pmatrix}0&-i\sigma_i\\i\sigma_i&0\end{pmatrix},\qquad
  \gamma_4&=\begin{pmatrix}0&\1\\ \1&0\end{pmatrix},\qquad
  \gamma_5=\gamma_1\gamma_2\gamma_3\gamma_4=
  \begin{pmatrix}\1&0\\0&-\1\end{pmatrix},
\end{align}
where the $\sigma_i$ ($i=1,2,3$) are the usual Pauli matrices.  
A Dirac spinor $\psi$ can be written in terms of two Weyl spinors
$\psi_R$ and $\psi_L$,
\begin{align}
  \psi=\begin{pmatrix}\psi_R\\\psi_L\end{pmatrix},
\end{align}
and we have
\begin{align}
  \bar\psi=\psi^\dagger\gamma_4=
  \big( \psi_L^\dagger ~ \psi_R^\dagger \big)\,.
\end{align}
The projection operators on the right- and left-handed sectors are
\begin{align}
  P_R=\frac12(\1+\gamma_5)\quad\text{and}\quad P_L=\frac12(\1-\gamma_5)\,,
\end{align}
and we have, with a slight abuse of notation,
\begin{equation}
  \begin{aligned}
    P_R\psi&=\psi_R\,,\qquad & \bar\psi P_R&=\psi_L^\dagger\,,\\
    P_L\psi&=\psi_L\,, & \bar\psi P_L&=\psi_R^\dagger\,.
  \end{aligned}
\end{equation}
The charge conjugation matrix satisfies
\begin{align}
  C\gamma_\mu C^{-1}=-\gamma_\mu^T\quad\text{and}\quad C^T=-C\,.
\end{align}
We adopt the choice $C=i\gamma_4\gamma_2$, for which we have
\begin{align}
  C^{-1}&=C^\dagger=C\,,\\
  [C,\gamma_1]&=\{C,\gamma_2\}=[C,\gamma_3]=\{C,\gamma_4\}=[C,\gamma_5]=0\,.
\end{align}
We also define the $N_f$-dimensional antisymmetric matrix
\begin{align}
  \label{eq:I}
  I=\begin{pmatrix}
    0&-\1_{N_f/2}\\
    \1_{N_f/2}&0
  \end{pmatrix}.
\end{align}

\section{Partition functions with diquark sources}
\label{app:Z-pfaffian}

In this appendix we derive the singular value representations of the
partition functions with diquark sources for the theories with
$\beta=1$, $2$, and $4$, taking exact zero modes into account and
showing how the positivity of the path integral measure is determined.

\subsection[Two-color QCD $(\beta=1)$]{\boldmath Two-color QCD $(\beta=1)$}
\label{app:Z1}

We first consider two-color QCD.  We need to evaluate the Pfaffian of
the operator $W$ in \eqref{eq:W} in the chiral limit, regarded as an
infinite-dimensional antisymmetric matrix,
\begin{align}
  \Pf(W) = 
  \Pf \begin{pmatrix}
    C\tau_2(J_RP_R+J_LP_L) & -D(\mu)^T 
    \\
    D(\mu) & -C\tau_2(J_R^\dagger P_L+J_L^\dagger P_R)
  \end{pmatrix}.
\end{align}
Note that the transpose $D(\mu)^T$ includes transposition of the
space-time indices, in addition to the color and spinor indices.
Below we will give two treatments of $\Pf(W)$. They lead to the same
form for the partition function, which underscores the correctness of
the result.

\subsubsection{Rigorous derivation}

To evaluate the Pfaffian it is useful to employ a specific functional
basis.  Here we use the eigenfunctions of $D^\dagger D$ and
$DD^\dagger$ introduced in section~\ref{sc:singu},%
\footnote{Note that these eigenfunctions are ordinary $c$-number
  functions and not Grassmannian.}
\begin{subequations}
  \label{eq:def_phi}
  \begin{alignat}{2}
    \label{eq:def_phi_}
    D^\dagger D \phi_n &= \xi_n^2\phi_n\,, \qquad &
    \int d^4x\, \phi_m^\dagger \phi_n &= \delta_{mn} \,,
    \\
    \label{eq:def_tphi_}
    DD^\dagger \tilde{\phi}_n &= \xi_n^2 \tilde \phi_n\,, &
    \int d^4x\, \tilde\phi_m^\dagger \tilde\phi_n &= \delta_{mn}\,,
  \end{alignat}
\end{subequations}
where for $\xi_n>0$ we have
\begin{align}
  \label{eq:rel_bases}
  \phi_n = \frac{1}{\xi_n}D^\dagger \tilde \phi_n
  \quad\text{and}\quad
  \tilde\phi_n = \frac{1}{\xi_n}D \phi_n\,.
\end{align}
Then the fields $\psi$ and $\bar\psi$ in
\eqref{eq:Psi} can be expanded in the bases $\{\phi_n\}$ and
$\{\tilde\phi_n^\dagger\}$, respectively.  We need the help of the
following lemma to proceed.
\begin{lem}
\label{eq:lemma}
  Without loss of generality, we can assume
  \begin{subequations}
    \begin{align}
      \phi_n & = p_n C\tau_2 \phi_n^*\,, 
      \label{eq:phi_parity} \\
      \tilde \phi_n & = - p_n C\tau_2 \tilde \phi_n^*
      \label{eq:tphi_parity}
    \end{align}
  \end{subequations}
  for some $p_n\in\C$ with $|p_n| = 1$.
\end{lem}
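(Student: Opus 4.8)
The plan is to use the anti-unitary symmetry \eqref{eq:symm_i} to equip each eigenspace of $D^\dagger D$ (and of $DD^\dagger$) with a real structure, and then to pick an orthonormal eigenbasis of ``real'' vectors. First I would read off from \eqref{eq:symm_i} the operator identity $C\tau_2\,D(\mu)^*=-D(\mu)\,C\tau_2$, the crucial minus sign coming from the explicit $i$ in $[C\tau_2K,iD(\mu)]=0$. Conjugating and transposing this also gives $C\tau_2\,D(\mu)\,C\tau_2=-D(\mu)^*$ and $C\tau_2\,D(\mu)^T C\tau_2=-D(\mu)^\dagger$, whence $C\tau_2\,(D^\dagger D)^*\,C\tau_2=D^\dagger D$ and $C\tau_2\,(DD^\dagger)^*\,C\tau_2=DD^\dagger$. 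Equivalently, the anti-unitary operator $J\equiv C\tau_2K$ commutes with both $D^\dagger D$ and $DD^\dagger$; moreover $J^2=\1$ because $C\tau_2$ is real (appendix~\ref{app:conv}) with $(C\tau_2)^2=\1$, and $J$ commutes with $\gamma_5$ since $[\gamma_5,C\tau_2]=0$, so it preserves chirality.

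Next I would work eigenspace by eigenspace. For each $\xi_n$ the space $\{\phi:D^\dagger D\phi=\xi_n^2\phi\}$ (finite-dimensional under the assumed regularization, and, for $\xi_n=0$, further split by chirality) is $J$-invariant. On a finite-dimensional Hilbert space an anti-unitary operator with $J^2=\1$ is a real structure: given any $v\neq0$, one of $v+Jv$ or $i(v-Jv)$ is a nonzero $J$-fixed vector, its orthogonal complement is again $J$-invariant by anti-unitarity, and induction yields an orthonormal basis $\{\phi_n\}$ with $J\phi_n=\phi_n$, i.e.\ $\phi_n=C\tau_2\phi_n^*$. This is \eqref{eq:phi_parity} with $p_n=1$; since rescaling $\phi_n\to e^{i\alpha}\phi_n$ only sends $p_n\to e^{2i\alpha}p_n$, any admissible normalization gives the general statement $|p_n|=1$.

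Finally, for $\xi_n>0$ the partner function is fixed by \eqref{eq:rel_bases}, $\tilde\phi_n=\xi_n^{-1}D\phi_n$, so I simply compute $C\tau_2\tilde\phi_n^*=\xi_n^{-1}C\tau_2D^*\phi_n^*=-\xi_n^{-1}D(C\tau_2\phi_n^*)=-\xi_n^{-1}D\phi_n=-\tilde\phi_n$, which is \eqref{eq:tphi_parity} with the same $p_n$; the minus sign is inherited directly from $C\tau_2D^*=-DC\tau_2$. When there are zero modes, their $DD^\dagger$-partners must instead be chosen directly on $\ker DD^\dagger$ (again using $J$), taking $\tilde\phi_n=i\,v_n$ with $v_n$ a $J$-fixed unit vector of $\ker DD^\dagger$, so that $J\tilde\phi_n=-\tilde\phi_n$, reproducing the sign in \eqref{eq:tphi_parity}. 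I expect the only genuinely non-routine point to be the degenerate case of the second step: when an eigenvalue $\xi_n^2$ has multiplicity larger than one, one really needs the real-structure argument rather than a one-line norm-preservation argument, since a generic orthonormal eigenbasis will not satisfy $\phi_n=p_nC\tau_2\phi_n^*$. Everything else is bookkeeping of conjugations and signs.
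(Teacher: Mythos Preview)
Your proof is correct and follows essentially the same approach as the paper: both use the anti-unitary symmetry $J=C\tau_2K$ (with $J^2=\1$) to show that each eigenspace of $D^\dagger D$ is $J$-invariant, construct a $J$-compatible orthonormal basis there, and then propagate the relation to $\tilde\phi_n$ via \eqref{eq:rel_bases}, treating zero modes of $DD^\dagger$ by an independent phase choice. Your inductive real-structure argument is slightly more systematic than the paper's two-case dichotomy (linearly dependent vs.\ independent), in particular making the treatment of eigenspaces of multiplicity larger than two and the orthogonality of the resulting basis explicit, but the underlying idea is identical.
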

\begin{proof}
  For two-color QCD, it can be shown from \eqref{eq:symm_i} that
  \begin{subequations}
    \begin{align}
      (D^\dagger D)^* &= C\tau_2 D^\dagger D C\tau_2 \,,\\
      (DD^\dagger)^* &= C\tau_2 DD^\dagger C\tau_2 \,.
    \end{align}
  \end{subequations}
  Using these properties as well as \eqref{eq:def_phi}, it follows
  that
  \begin{subequations}
    \begin{align}
      D^\dagger D (C\tau_2 \phi_n^*) & = \xi_n^2 (C\tau_2\phi_n^*)\,,\\
      DD^\dagger (C\tau_2 \tilde\phi_n^*) & = \xi_n^2
      (C\tau_2\tilde\phi_n^*)\,. 
    \end{align}
  \end{subequations}
  Therefore both $\phi_n$ and $C\tau_2\phi_n^*$ $\big(\tilde \phi_n$
  and $C\tau_2\tilde \phi_n^*\big)$ are eigenfunctions of $D^\dagger
  D$ $\big(DD^\dagger\big)$ with the same eigenvalue $\xi_n^2$. Then
  two possibilities arise:
  \begin{enumerate}
  \item $\phi_n$ and $C\tau_2\phi_n^*$ are linearly
    independent, and the eigenvalue $\xi_n^2$ is (at least)
    doubly degenerate.
  \item $\phi_n$ and $C\tau_2\phi_n^*$ are linearly dependent,
    and the eigenvalue $\xi_n^2$ is not degenerate.
  \end{enumerate}
  In the first case, we can redefine ${\mathcal N_1}(\phi_n + p_n
  C\tau_2\phi_n^*)$ as $\phi_n$ and ${\mathcal N_2}(\phi_n - p_n
  C\tau_2\phi_n^*)$ as $\phi_{n+1}$ with normalization constants
  $\mathcal{N}_{1,2}$ and arbitrary $p_n\in\U(1)$ so that%
  \footnote{This redefinition does not change the chirality because
    $C\tau_2$ commutes with $\gamma_5$.}
  \begin{align}
    \phi_n = p_n C\tau_2\phi_n^*\qquad\text{and}\qquad 
    \phi_{n+1} = - p_n C\tau_2\phi_{n+1}^* \equiv
    p_{n+1}C\tau_2\phi_{n+1}^* \,,
  \end{align}
  where we used $(C\tau_2)^*=C\tau_2$ and $(C\tau_2)^2=\1$.  In the
  second case, it can easily be shown that there exists a phase
  $p_n\in\U(1)$ such that $\phi_n = p_n C\tau_2\phi_n^*$, so the
  desired relation holds automatically.  This completes the proof of
  \eqref{eq:phi_parity}. Note that we have shown this for both zero
  and nonzero modes.

  Let us proceed to the proof of \eqref{eq:tphi_parity}.  For nonzero
  modes ($\xi_n > 0$), we have
  \begin{align}
    \tilde\phi_n^* = \frac{1}{\xi_n}D^*\phi_n^* 
    = \frac{1}{\xi_n}(-C\tau_2DC\tau_2) p_n^*C\tau_2\phi_n
    = - p_n^* C\tau_2 \frac{1}{\xi_n} D \phi_n 
    = -p_n^* C\tau_2 \tilde\phi_n\,.
  \end{align}
  Multiplying both sides by $-p_n C\tau_2$ we obtain
  \eqref{eq:tphi_parity}.  Finally we consider the case of zero modes
  $(\xi_n=0)$. Although there is no obvious relation between the zero
  modes of $D$ and those of $D^\dagger$, their numbers are equal, see
  \eqref{eq:dim_eq_2}.  Choosing the phases of the zero modes of
  $DD^\dagger$ one can make \eqref{eq:tphi_parity} hold.
\end{proof}

Now it is straightforward to obtain the matrix elements of $W$ in the
bases $\{\phi_n\}$ (for $\psi$) and $\{\tilde\phi_n^\dagger\}$ (for
$\bar\psi$).  We consider separately the four blocks of $W$.
\begin{itemize}
\item $(1,1)$-block:
  \begin{align}
    \int d^4x\, \phi_m^T \big[C\tau_2 (J_RP_R+J_LP_L)\big] \phi_n
    &= \int d^4x\, p_m \phi_m^\dagger (J_RP_R+J_LP_L) \phi_n
    \notag \\
    &= p_n \delta_{mn} J_{R/L}\,,
  \end{align}
  where in the first step we have used \eqref{eq:phi_parity} and in
  the second step $R/L$ corresponds to the handedness of $\phi_n$.
\item $(2,1)$-block:
  \begin{align}
    \int d^4x\, \tilde\phi_m^\dagger D\phi_n = \xi_n \delta_{mn}\,.
  \end{align}
\item $(1,2)$-block:
  \begin{align}
    \int d^4x\, \phi_m^T (-D^T) \tilde\phi_n^* 
    = -\xi_n \delta_{mn}\,.
  \end{align}
\item $(2,2)$-block:
  \begin{align}
    \int d^4x\, \tilde\phi_m^\dagger 
    \big[-C\tau_2 (J_R^\dagger P_L+J_L^\dagger P_R)\big] \tilde\phi_n^*
    &= \int d^4x\, \tilde\phi_m^\dagger 
    (J_R^\dagger P_L+J^\dagger _LP_R) p_n^* \tilde \phi_n
    \notag \\
    &= p^*_n \delta_{mn} J^\dagger_{R/L} \,,
  \end{align}
  where in the first step we have used \eqref{eq:tphi_parity} and in
  the second step $R/L$ corresponds to the \emph{opposite} of the
    handedness of $\tilde\phi_n$.
\end{itemize}
Collecting all results, we obtain
\begin{align}
  \Pf(W)  = 
  \Pf \bep
  \diag(p_n)\otimes J_{R/L}  &  -\diag(\xi_n)\otimes \1_{N_f}
  \\
  \diag(\xi_n)\otimes \1_{N_f}  &  \diag(p_n^*)\otimes J^\dagger_{R/L}
  \eep
  = \prod_{n} \Pf \bep
  J_{R/L}  &  -\xi_n 
  \\
  \xi_n  &  J^\dagger_{R/L}
  \eep,
\end{align}
where we used $p_np_n^*=1$. The contribution from zero modes can be
read off as
\begin{align}
  \label{eq:pf__zero}
  \prod_{n:\; \xi_n=0}\Pf \bep
  J_{R/L}  &  0 \\
  0  & J^\dagger_{R/L}
  \eep
  =\big[\Pf(J_R) \Pf(J_L^\dagger)\big]^{n_R}
  \big[\Pf(J_R^\dagger) \Pf(J_L)\big]^{n_L}\,,
\end{align}
where $n_{R,\,L}\geq 0$ denotes the number of zero modes of each
handedness.

The contribution from a nonzero mode is given by%
\footnote{Rigorously speaking, $\Pf$ and ${\det}^{1/2}$ may differ by
  a sign.  However, a $\nu$-independent multiplicative constant can
  safely be omitted without changing expectation values of
  observables.}
\begin{align}
  \label{eq:pf__nonzero}
  {\det}^{1/2}\bep J_{R/L} & -\xi_n \\ \xi_n & J_{R/L}^\dagger \eep 
  = \begin{cases}
     {\det}^{1/2}\big(\xi_n^2+J_R^\dagger J_R\big) &
     \text{for right-handed }\phi_n\,,
     \\
     {\det}^{1/2}\big(\xi_n^2+J_L^\dagger J_L\big) &
     \text{for left-handed }\phi_n\,,
   \end{cases} 
\end{align}
where we used the fact that the handedness of $\phi_n$ is opposite to
that of $\tilde\phi_n$ for $\xi_n\ne 0$. Since \eqref{eq:pf__nonzero}
is manifestly positive definite, the (non-) positivity of the measure
is determined by \eqref{eq:pf__zero}.

Summarizing, we find that the full partition function of two-color QCD
with diquark sources in the chiral limit reduces to the following
expression in terms of the singular values of $D(\mu)$,
\begin{align}
  Z(J_L, J_R) & = \bigg\langle 
  \big[\Pf(J_R) \Pf(J_L^\dagger)\big]^{n_R}
    \big[\Pf(J_R^\dagger) \Pf(J_L)\big]^{n_L}
  \notag
  \\
  & \qquad 
  \times {\prod_n}'{\det}^{1/2}\left(\xi_n^2+J^\dagger_RJ_RP_R
  +J^\dagger_LJ_L P_L \right)  \bigg\rangle_\text{YM}\,,
  \label{eq:app_main_result_beta=1}
\end{align}
where the primed product runs over all nonzero singular values.  The
above expression allows for accidental zero modes, but they are of
measure zero, see section~\ref{sec:ev}.  Therefore generically we have
$(n_R,\ n_L)=(\nu,\ 0)$ for $\nu\geq 0$ and $(0,\ -\nu)$ for $\nu<0$.
Note also that $\Pf(J^\dagger_R) \Pf(J_L) = \big[ \Pf(J_R)
\Pf(J^\dagger_L)\big]^*$.

\subsubsection{Short derivation}
\label{sc:beta=1_sloppy}

The derivation of the singular value representation of the partition
function in the last subsection is rigorous but lengthy. Here we will
give a less rigorous but shorter derivation of the same expression.

Let us assume $\mu=0$ so that the extended flavor symmetry $\SU(2N_f)$
is intact in the absence of diquark sources.  In computing $\Pf(W)$
let us separate the contributions from zero modes and those from
nonzero modes.  First, in the space of nonzero modes, we compute the
square of the Pfaffian in \eqref{eq:W}, which is equal to its
determinant,
\begin{align}
  \Pf'(W)^2
  = {\det}'\bep
    C\tau_2(J_RP_R+J_LP_L) & C\tau_2D^\dagger C\tau_2
    \\
    D & -C\tau_2(J_R^\dagger P_L+J_L^\dagger P_R)
  \eep,
\end{align}
where the prime on both sides indicates the omission of zero modes
and we used $-D^T = C\tau_2 D^\dagger C\tau_2$ in the $(1, 2)$ block.
If we interchange the first and the second column, there will be a
factor $(-1)^{d}$, with $d$ the total dimension of the space spanned
by the nonzero modes (we assume a suitable regularization such that
$d<\infty$). Since all nonzero modes are paired by $\gamma_5$, $d$ is
an even integer and $(-1)^d=1$. Thus
\begin{align}
  \Pf'(W)^2 & = {\det}'\bep
    C\tau_2D^\dagger C\tau_2 & C\tau_2(J_RP_R+J_LP_L)
    \\
     -C\tau_2(J_R^\dagger P_L+J_L^\dagger P_R) & D
  \eep
  \notag \\
  & = {\det}' \bep C\tau_2 & 0 \\ 0 & C\tau_2 \eep
  \bep
    D^\dagger C\tau_2 & J_RP_R+J_LP_L
    \\
     - J_R^\dagger P_L - J_L^\dagger P_R & C\tau_2 D
  \eep
  \notag \\
  & = {\det}'  
  \bep
    D^\dagger C\tau_2 & J_RP_R+J_LP_L
    \\
     - J_R^\dagger P_L - J_L^\dagger P_R & C\tau_2 D
  \eep .
  \intertext{Using the formula $\displaystyle \det \bep A &
    B\\C&D\eep=\det(AD-ACA^{-1}B)$ that holds when the blocks are
    square matrices of the same dimension and $A$ is invertible, we
    obtain} 
  \Pf'(W)^2& = {\det}'\left( D^\dagger D + D^\dagger C\tau_2 
  (J_R^\dagger P_L + J_L^\dagger P_R)(D^\dagger C\tau_2)^{-1}(J_RP_R+J_LP_L)  \right)
  \notag \\
  & = {\det}' ( D^\dagger D + J_R^\dagger J_R P_R + J_L^\dagger J_LP_L )\,.
  \label{eq:det_nonzeromodes}
\end{align}
We stress that the above formula cannot be used in the full eigenspace
of $D$ because $D^{-1}$ does not exist if there are zero modes. Since
\eqref{eq:det_nonzeromodes} is positive definite we can take its
square root naively to obtain a formula for $\Pf'(W)$.

The next task is to incorporate zero modes (where from now on we
ignore accidental zero modes). We do this by switching from quark
masses to diquark sources.  Let us recall that the mass term in
\eqref{eq:Lf} can be written as \cite{Kogut:2000ek}
\begin{align}
  \label{eq:mass_Psi_}
  \bar\psi (MP_L+M^\dagger P_R)\psi = \frac{1}{2}\Psi^\dagger
  \sigma_2\tau_2 \bep 0 & M \\ -M^T & 0 \eep\Psi^* + \text{h.c.} 
\end{align}
with the ``extended'' spinor $\displaystyle \Psi \equiv \bep \psi_R \\
\sigma_2\tau_2 \psi^*_L \eep$.\footnote{The symbol $\Psi$ used here is
  not to be confused with $\Psi$ defined in \eqref{eq:Psi}.}  As is
well known \cite{Leutwyler:1992yt}, the contribution of zero modes to
the fermion determinant in the absence of diquark sources is $(\det
M^\dagger)^\nu$ for $\nu\geq 0$ and $(\det M)^{-\nu}$ for $\nu<0$.%
\footnote{\label{ft:conv_diff}This differs from
  \cite{Leutwyler:1992yt,Smilga:1994tb}.  The reason seems to be that
  our $\gamma_5$ differs from their $\gamma_5$ by a minus sign.  (The
  definitions of the mass term and of $\nu$ in \eqref{eq:nu_definition}
  are identical.)  Thus $\nu=n_R-n_L$ in this paper, whereas
  $\nu=n_L-n_R$ in \cite{Leutwyler:1992yt,Smilga:1994tb}.  One should
  be careful when comparing results in this paper with those in
  \cite{Leutwyler:1992yt,Smilga:1994tb}.}  This expression is not
desirable, as it does not manifestly show the spurious invariance
under $\SU(2N_f)$.%
\footnote{For a fixed quark mass, $\SU(2N_f)$ is broken to
  $\U(1)_B\times\SU(N_f)_R\times\SU(N_f)_L$, but the initial symmetry
  is kept intact if we formally transform the quark mass as a spurion
  field, see \eqref{eq:M_pf_sp}.}  Instead, we write $\det M$ as a
Pfaffian \cite[Eq.~(4.13)]{Smilga:1994tb},
\begin{align}
  \det M = (-1)^{N_f/2}\Pf \bep 0 & M \\ -M^T & 0 \eep,
\end{align}
which makes the invariance under $U\in \SU(2N_f)$ explicit, i.e., 
\begin{align}
  \label{eq:M_pf_sp}
  \Pf\left[U \bep 0 & M \\ -M^T & 0 \eep U^T \right] 
  = \det U \cdot \Pf \bep 0 & M \\ -M^T & 0 \eep 
  = \Pf \bep 0 & M \\ -M^T & 0 \eep\,.
\end{align}
Similarly the diquark source in \eqref{eq:Lf} can be cast in the form
\begin{align}
  \frac{1}{2}\psi^T C\tau_2(J_RP_R+J_LP_L)\psi+\text{h.c.} 
  = \frac{1}{2}\Psi^\dagger \sigma_2\tau_2 \bep -J_R^\dagger & 0 \\ 0
  & J_L \eep\Psi^* + \text{h.c.} 
\end{align}
Comparing this with \eqref{eq:mass_Psi_} we notice that the diquark
source is obtained if we simply replace $\displaystyle \bep 0 & M \\
-M^T & 0 \eep$ by $\displaystyle \bep -J_R^\dagger & 0 \\ 0 & J_L
\eep$.  Therefore the contribution of zero modes (for $\nu<0$) in the
chiral limit and with nonzero diquark sources can be found by the
replacement%
\footnote{A precise account of this procedure goes as follows. We
  start with a certain diquark source in the chiral limit. By a
  suitable $\SU(2N_f)$ transformation we can rotate it into the form
  of the mass term, for which we know that the zero modes contribute
  $(\det M)^{-\nu}$ to the fermion determinant. Then we rotate
  inversely, bringing the mass term back to our original diquark
  source.  This is how we get \eqref{eq:J_replace} so quickly.}
\begin{align}
  &(\det M)^{-\nu} = 
  (-1)^{\nu N_f/2}\Pf \bep 0 & M \\ -M^T & 0 \eep^{-\nu} \notag\\
  &\longrightarrow\quad (-1)^{\nu N_f/2}\Pf \bep -J_R^\dagger & 0 \\ 0
  & J_L \eep^{-\nu}  
  =  \big[ \Pf(J_R^\dagger)\Pf(J_L)\big]^{-\nu}\,,
  \label{eq:J_replace}
\end{align}
and similarly for $\nu\geq 0$.  Combining the square root of
\eqref{eq:det_nonzeromodes} with \eqref{eq:J_replace} we finally
obtain for the partition function
\begin{align}
  \label{eq:quick_Z_beta_1}
  Z(J_L, J_R) = 
  \left\langle
    \begin{Bmatrix}
      \big[ \Pf(J_R)\Pf(J_L^\dagger) \big]^{\nu}  \\
      \big[ \Pf(J_R^\dagger)\Pf(J_L) \big]^{-\nu}  
    \end{Bmatrix}
    \sqrt{ {\det}'  ( 
    D^\dagger D + J_R^\dagger J_R P_R + J_L^\dagger J_LP_L ) }
    \right\rangle_\text{YM},
\end{align}
where the first (second) line in curly braces applies to $\nu\ge0$
($\nu<0$).  Assuming that $\mu\ne 0$ does not change the form of this
expression, we arrive at \eqref{eq:app_main_result_beta=1} in the
previous subsection.  Note that this argument hinges on the assumption
that the contribution of exact zero modes at $\mu\ne 0$ is the same as
at $\mu=0$. We also point out that, since $D=-D^\dagger$ at $\mu=0$,
we could also have written $D^\dagger D$ appearing in
\eqref{eq:quick_Z_beta_1} as $-D^2$, but this would not have
generalized to $\mu\ne0$.

\subsection[QCD with isospin chemical potential $(\beta=2)$]
{\boldmath QCD with isospin chemical potential $(\beta=2)$}
\label{app:iso}

Here we only sketch the shorter derivation of the singular value
representation of the partition function, i.e., the extension of the
argument in section~\ref{sc:beta=1_sloppy} to $\beta=2$, and omit the
lengthy and rigorous version.

First, working along similar lines as in
section~\ref{sc:beta=1_sloppy}, the contribution from nonzero modes to
the partition function can be shown to take the form
\begin{align}
  {\det}'  (D^\dagger D + \rho \rho^*P_R + \lambda \lambda^*P_L )\,.
\end{align}

Next we consider the zero modes. In order to figure out the mapping
between quark masses and pionic sources, we go to a new basis
\begin{align}
  \Psi_R \equiv P_R\bep u\\d \eep = \bep u_R \\ d_R \eep
  \qquad\text{and}\qquad 
  \Psi_L \equiv P_L\bep u\\d \eep = \bep u_L \\ d_L \eep\,.
\end{align}
These are the $\beta=2$ counterparts of the extended spinor $\Psi$
introduced after \eqref{eq:mass_Psi_} for $\beta=1$. Then the mass
term for $M=\diag(m_u,m_d)$ is
\begin{align}
  \bar u(m_uP_L+m_u^*P_R)u + \bar d(m_dP_L+m_d^*P_R)d 
  = \bar\Psi_R M \Psi_L + \bar\Psi_LM^\dagger \Psi_R\,.
\end{align}
The zero-mode contribution to the fermion determinant is given by
$(\det M)^{-\nu}$ for $\nu<0$ and $(\det M^\dagger)^{\nu}$ for $\nu
\geq 0$. This form is manifestly invariant under the spurious
$\SU(2)_R\times\SU(2)_L$ rotation of $M$.  On the other hand, the
pionic source term reads
\begin{align}
  \bar u(\lambda^* P_R+\rho^*P_L)d + \bar d(\rho P_R+\lambda P_L)u
  = 
  \bar\Psi_R \bep 0 & \rho^* \\ \lambda & 0 \eep \Psi_L 
  + \bar\Psi_L \bep 0 & \lambda^* \\ \rho & 0 \eep \Psi_R\,.
\end{align}
Comparing with the mass term, we obtain the correspondence
\begin{align}
  M \longleftrightarrow \bep 0 & \rho^* \\ \lambda & 0 \eep
  \qquad\text{and}\qquad
  M^\dagger \longleftrightarrow 
 \bep 0 & \lambda^* \\ \rho & 0 \eep,
\end{align}
i.e., there is a mass term that can be rotated to a given pionic
source term under the action of $\SU(2)_R\times \SU(2)_L$.  Thus the
contribution of zero modes is given by
\begin{subequations}
  \begin{alignat}{2}
    \det\bep 0 & \lambda^* \\ \rho & 0 \eep^\nu 
    &= (-\rho\lambda^*)^\nu\quad & \text{for } \nu\geq 0\,,\\
    \det\bep 0 & \rho^* \\ \lambda & 0 \eep^{-\nu} 
    &= (-\rho^*\lambda)^{-\nu}\quad & \text{for } \nu<0 \,.
  \end{alignat}
\end{subequations}
Summarizing, we find for the partition function
\begin{align}
  \label{eq:app_main_result_beta=2;}
  Z(\rho,\lambda) = \left\langle
  \begin{Bmatrix}
    (-\rho\lambda^*)^{\nu}\\
    (-\rho^*\lambda)^{-\nu}
  \end{Bmatrix}
  {\det}'  (D^\dagger D + \rho \rho^*P_R + \lambda \lambda^*P_L )
  \right\rangle_\text{YM}
  \quad\text{for}\quad\begin{Bmatrix}
    \nu\ge0 \\ \nu<0
  \end{Bmatrix}\,.
\end{align}
In deriving \eqref{eq:app_main_result_beta=2;} we ignored accidental
zero modes.  The result can straightforwardly be extended to include
them, and we then obtain \eqref{eq:app_main_result_beta=2}.

\subsection[QCD with adjoint fermions $(\beta=4)$]{\boldmath QCD with
  adjoint fermions $(\beta=4)$}
\label{app:real}

Again we only sketch the shorter and less rigorous derivation.  First,
we consider the contribution of nonzero modes.  Along similar lines as
for $\beta=1$ and $2$ we obtain
\begin{align}
  \label{eq:det'_beta=4}
  \sqrt{{\det}' (D^\dagger D+J_R^\dagger J_RP_R + J_L^\dagger J_LP_L ) }\,,
\end{align}
but this is not the end of the story. Using \eqref{eq:symm_adj_i} we
can show $(D^\dagger D)^*=CD^\dagger DC$, from which it follows that a
state $C\phi_n^*$ associated with an eigenstate $\phi_n$ of $D^\dagger
D$ is also an eigenstate of $D^\dagger D$ with the same eigenvalue. As
$\phi_n$ and $C\phi_n^*$ are linearly independent because of
$(CK)^2=-\1$, all eigenvalues of $D^\dagger D$ are doubly degenerate
(including zero modes). This allows us to take the square of
\eqref{eq:det'_beta=4},
\begin{align}
  {\det}'' 
  (D^\dagger D+J_R^\dagger J_RP_R + J_L^\dagger J_LP_L )\,,
\end{align}
where in ${\det}''$ each degenerate eigenvalue of $D^\dagger D$ is
counted only once.

Next we consider the zero-mode contributions. With the extended
spinor \smash{$\displaystyle \Psi \!\equiv\! \bep \psi_R \\
  i\sigma_2\psi^*_L \eep$} of length $2N_f$,\footnote{Again, the
  symbol $\Psi$ used here should not be confused with the $\Psi$
  defined in \eqref{eq:Psi} or after \eqref{eq:mass_Psi_}.}  the mass
term becomes
\begin{align}
  \label{eq:4_mass}
  \bar\psi (MP_L+M^\dagger P_R)\psi = -
  \frac{1}{2}\Psi^\dagger \sigma_2 \bep 0 & iM \\ iM^T & 0 \eep\Psi^* - 
  \frac{1}{2}\Psi^T\sigma_2\bep0&-iM^*\\-iM^\dagger &0\eep\Psi \,.
\end{align}
The contribution of the zero modes to the fermion determinant in the
absence of diquark sources is given by
\cite{Leutwyler:1992yt,Smilga:1994tb}
\begin{align}
  \det\bep 0 & iM \\ iM^T & 0 \eep ^{-\bar\nu}
  \quad\text{for }\bar\nu<0\quad\text{and}\quad
  \det\bep 0 & -iM^* \\ -iM^\dagger & 0 \eep^{\bar\nu}
  \quad\text{for }\bar\nu\geq 0\,,
\end{align} 
where $\bar\nu \equiv (n_R-n_L)/2$.%
\footnote{Note that $\bar\nu$ in \cite{Leutwyler:1992yt,Smilga:1994tb}
  is $-\bar\nu$ in our notation, see footnote \ref{ft:conv_diff}.}
Because of the degeneracy mentioned above the numbers $n_R$, $n_L$ of
zero modes are even integers so that $\bar\nu\in\Z$.  The number
$\bar\nu$ is proportional to $\nu$, the winding number of the gauge
field. The proportionality constant depends on the color
representation of the fermions. For example, $\bar\nu=N_c\nu$ for
fermions in the adjoint representation of $\SU(N_c)$
\cite{Leutwyler:1992yt}.  More general cases are considered, e.g., in
\cite{Creutz:2007yr}.

The diquark source term can be written as
\begin{align}
  \frac{1}{2}\psi^T C(J_RP_R+J_LP_L)\psi+\text{h.c.} 
  = - \frac{1}{2}\Psi^\dagger \sigma_2\bep J_R^\dagger & 0 \\ 0 & -J_L 
  \eep\Psi^*
  - \frac{1}{2}\Psi^T \sigma_2\bep J_R& 0 \\ 0 & -J_L^\dagger  
  \eep\Psi \,.
\end{align}
Comparison with \eqref{eq:4_mass} again suggests a correspondence
between diquark matrix and mass matrix, from which the contribution of
the zero modes in the chiral limit and with nonzero diquark sources is
given by
\begin{subequations}
  \begin{alignat}{2}
    \det\bep J_R & 0 \\ 0 & -J_L^\dagger \eep^{\bar\nu}
    &= \det (-J_RJ_L^\dagger )^{\bar\nu} \quad
    & \text{for }\bar\nu \geq 0\,,\\
    \det\bep J_R^\dagger & 0 \\ 0 & -J_L \eep^{-\bar\nu}
    &= \det (-J_R^\dagger J_L )^{-\bar\nu}
    & \text{for }\bar\nu<0\,.
  \end{alignat}
\end{subequations}
Collecting everything, we find for the partition function
\begin{align}
  \label{eq:app_main_result_beta=4;}
  Z(J_L, J_R) = \left\langle 
  \begin{Bmatrix}
    \det(-J_RJ_L^\dagger )^{\bar\nu}\\
    \det(-J_R^\dagger J_L )^{-\bar\nu}
  \end{Bmatrix}
  {\det}'' (D^\dagger D+J_R^\dagger J_RP_R + J_L^\dagger J_LP_L )
  \right\rangle_\text{YM}
  \;\;\text{for}\;\;
  \begin{Bmatrix}\bar\nu\ge0 \\ \bar\nu<0\end{Bmatrix}\,.
\end{align}
Again, in deriving \eqref{eq:app_main_result_beta=4;} we ignored
accidental zero modes.  The result can straightforwardly be extended to
include them, and we then obtain \eqref{eq:app_main_result_beta=4}.

\section{Consequences of a (non-) positive measure}
\label{app:complex}
\subsection{Diquark sources and positivity}
\label{app:complex-1}

From the results of appendix~\ref{app:Z-pfaffian} we can immediately
read off conditions for the positivity of the measure.  Here we
consider some cases of particular physical interest (assuming
$\theta=0$).

\begin{itemize}
\item $\beta=1$: From \eqref{eq:quick_Z_beta_1} it follows that
\begin{enumerate}
\item The choice $J_R=-J_L$ (source for the $0^+$ diquark condensate)
  implies
  \begin{align}
    \Pf(J_R) \Pf(J_L^\dagger)=\Pf(J_R)\Pf(-J_R^\dagger)
    =(-1)^{N_f}|\Pf(J_R)|^2=|\Pf(J_R)|^2>0\,. 
  \end{align}
  Hence the measure is positive definite and there is no sign problem.
\item The choice $J_R=J_L$ (source for the $0^-$ diquark condensate)
  implies
  \begin{align}
    \Pf(J_R) \Pf(J_L^\dagger)=\Pf(J_R) \Pf(J_R^\dagger) 
    =(-1)^{N_f/2}|\Pf(J_R)|^2\,. 
  \end{align}
  This is positive (negative) if $N_f=4n$ ($N_f=4n+2$) with $n\in\N$.
  Thus the sign problem arises for $N_f=4n+2$ due to the topological
  sectors with odd $\nu$.
\end{enumerate}
\item $\beta=2$: From \eqref{eq:app_main_result_beta=2;} it follows
  that
  \begin{enumerate}
  \item The choice $\rho = -\lambda$ (source for the $0^+$ pion
    condensate) implies
    \begin{align}
      -\rho\lambda^* = \rho \rho^* >0
    \end{align}
    so that the measure is positive definite and there is no sign
    problem.
  \item The choice $\rho = \lambda$ (source for the $0^-$ pion
    condensate) implies
    \begin{align}
      -\rho\lambda^* = - \rho \rho^* <0\,.
    \end{align}
    Thus the sign problem arises due to the topological sectors with
    odd $\nu$.
  \end{enumerate}
\item $\beta=4$: From \eqref{eq:app_main_result_beta=4;} it follows
  that
  \begin{enumerate}
  \item The choice $J_R=-J_L$ (source for the $0^+$ diquark
    condensate) implies
    \begin{align}
      \det(-J_R J_L^\dagger)=\det(J_R J_R^\dagger) >0 
    \end{align}
    so that the measure is positive definite and there is no sign
    problem.
  \item The choice $J_R=J_L$ (source for the $0^-$ diquark condensate)
    implies
    \begin{align}
      \det(-J_R J_L^\dagger)=(-1)^{N_f}\det(J_R J_R^\dagger)\,.
    \end{align}
    This is positive (negative) for even (odd) $N_f$,%
    \footnote{Note that $N_f$ is the number of Dirac fermions
      (not Majorana fermions).}  so the sign problem arises for odd
    $N_f$ due to the topological sectors with odd $\bar \nu$.
  \end{enumerate}
\end{itemize}

\subsection{Sign problem and Dirac spectrum}
\label{app:sign}

\subsubsection{Discussion by Leutwyler and Smilga}

A long time ago it was pointed out by Leutwyler and Smilga
\cite[Sec.~VII]{Leutwyler:1992yt} that the positivity of the measure
is a necessary condition for the Banks-Casher relation
\cite{Banks:1979yr} to hold.  They used $N_f=1$ QCD at $\mu=0$ as an
example.  Here we review their discussion briefly and motivate our
microscopic analysis in the next subsection.

A standard derivation of the Banks-Casher relation at $\mu=0$ goes as
follows.  Assuming $\theta=0$ for simplicity and taking $m$ to be real
(but not necessarily positive), we have
\begin{align}
  \langle\bar\psi\psi\rangle_{N_f=1}^{}
  & = \frac{1}{V_4}\frac{\del}{\del m}\log 
  \Big\langle m^{|\nu|}{\prod_n}' (\lambda_n^2+m^2) \Big\rangle
  _\text{YM}
  \notag\\
  & = \frac{1}{V_4m}\langle|\nu|\rangle_{N_f=1} 
  + \frac{1}{V_4}\Big\langle{\sum_n}'\frac{2m}{\lambda_n^2+m^2}
  \Big\rangle_{N_f=1}
  \notag\\
  & = \frac{1}{V_4m}\langle|\nu|\rangle_{N_f=1} 
  + \int_0^{\infty} 
  d\lambda\,\rho(\lambda)\,\frac{2m}{\lambda^2+m^2}\,,
  \label{eq:BC_1-flavor_}
\end{align}
where the $i\lambda_n$ are the Dirac eigenvalues, the primed sum
stands for the summation over nonzero modes, and
\begin{align}
  \rho(\lambda) = \frac{1}{V_4}
  \Big\langle {\sum_n}' \delta(\lambda-\lambda_n) \Big\rangle_{N_f=1}\,.
\end{align}
Note that $\rho(\lambda)$ depends on $m$ through the fermion
determinant in the measure.  The cases $m>0$ and $m<0$ differ
qualitatively, as we shall see now.
\begin{itemize}
\item $m>0$: Since $\langle \nu^2 \rangle_{N_f=1} =mV_4\Sigma$ with
  $\Sigma$ a low-energy constant, the first term in
  \eqref{eq:BC_1-flavor_} is suppressed as $O(1/\sqrt{V_4})$ and
  becomes negligible at large volume.  The second term in
  \eqref{eq:BC_1-flavor_} will reduce to $\pi \rho(0)$ as $m\to 0^+$
  after the thermodynamic limit. Therefore
  \begin{align}
    \langle \bar\psi\psi\rangle^{}_{N_f=1} = \pi \rho(0)\,.
    \label{eq:BC_Nf=1:}
  \end{align}
\item $m<0$: Since a probabilistic interpretation is no longer
  possible one cannot drop the first term of \eqref{eq:BC_1-flavor_}.
  Indeed we have \cite{Leutwyler:1992yt}
  \begin{align}
    \frac{1}{V_4m}\langle|\nu|\rangle_{N_f=1} 
    \sim 
    \frac{\Sigma}{2\sqrt{2\pi}}\frac{\ee^{2|x|}}{|x|^{3/2}}
    \to \infty 
    \quad \text{as } x=mV_4\Sigma\to\infty\,.
    \label{eq:+inf}
  \end{align}
  Similarly \smash{$\displaystyle \int_0^\infty
    d\lambda\,\rho(\lambda)\, \frac{2m}{\lambda^2+m^2}$} diverges to
  $-\infty$, but these divergences cancel neatly in
  \eqref{eq:BC_1-flavor_}.
\end{itemize}
Now it is clear why the Banks-Casher relation 
\eqref{eq:BC_Nf=1:} fails for $m<0$:
\begin{itemize}
  \item 
  Zero-mode contributions are not suppressed at all 
  in the thermodynamic limit. Nontrivial topologies cannot be ignored.
  \item 
  The macroscopic spectral density $\rho(\lambda)$ is ill-defined 
  in the thermodynamic limit at least near the origin.
\end{itemize}
It seems natural to expect that these findings also apply to dense
QCD-like theories, where the diquark (or pionic) sources can spoil the
positivity of the measure (see section~\ref{app:complex-1}). In a
measure with indefinite sign, $\rsv(\xi)$ would be singular in the
thermodynamic limit, and we cannot derive Banks-Casher-type relations
for the diquark and pion condensate.  This is the reason why in the
main text we chose only those sources that respect the positivity of
the measure.  An exceptional case is the limit $\mu=\infty$, where
instantons are completely suppressed and the exact zero modes will
disappear. For this case we can in principle choose any diquark/pionic
sources.

We conclude this subsection with a few important remarks.
\begin{itemize}
\item Sometimes there are multiple external fields that couple to
  different condensates and can be inserted without spoiling the
  positivity.  For instance, in $N_f=4$ QCD at $\mu=0$, the
  Banks-Casher relation can be derived for both
  $\ev{\bar\psi_f\psi_f}$ and $\ev{\bar\psi_f i\gamma_5\psi_f}$
  because the degenerate \emph{purely imaginary} mass term also
  respects positivity.%
  \footnote{For $N_f=2$ this is not the case. This fact does not seem
    to have been appreciated in \cite{Bitar:1997ic}.}  However,
  $\rho(0)$ itself can be measured on the lattice without using any
  external sources. To which condensate $\rho(0)$ is related to is
  determined by the infinitesimal external field we
  select.\footnote{More precisely, we have
    $|\ev{\bar\psi\psi}|=|\ev{\bar\psi i\gamma_5\psi}|=\pi\rho(0)$,
    and the orientation of the condensate is determined by the
    external field.  See also footnote~\ref{fn:ising}.}
\item The failure of a Banks-Casher relation does not prove the
  absence of the condensate%
  \footnote{Indeed $\langle\bar qq\rangle_{N_f=1}$ is the same for
    $m>0$ and $m<0$.  }  because it is possible that the condensate
  exists but is not determined by the value of $\rho(\lambda)$ at the
  origin.  An instructive example is three-color QCD at small $\mu$.
  It supports a nonzero $\langle\bar \psi\psi\rangle$, but there is no
  Banks-Casher relation.  The microscopic spectral density of the
  complex Dirac eigenvalues exhibits a drastic oscillation with a
  period $\sim O(1/V_4)$ and an amplitude $\sim \exp(V_4)$
  \cite{Osborn:2005ss,Osborn:2008jp}.  This suggests that the
  \emph{macroscopic} spectral density in the limit $V_4\to\infty$ is
  singular at least near the origin.
\end{itemize}

In the discussion by Leutwyler and Smilga, a sick behavior of
$\rho(\lambda)$ for $N_f=1$ with $m<0$ was alleged in an indirect
way. In the next subsection we study the spectral density in the
microscopic limit ($\epsilon$-regime) and explicitly show how it
behaves in an indefinite measure.

\subsubsection{The microscopic limit}

The effect of the topologically nontrivial sectors $(\nu\ne 0)$ on the
Dirac spectrum was studied in detail in \cite{Damgaard:1999ij}, where
the microscopic spectral density and the chiral condensate in full QCD
(including all topologies) were analyzed.  All explicit examples in
that reference were worked out for positive quark mass and vanishing
CP-breaking angle $\theta=0$. Here we demonstrate the effect of
negative mass, or equivalently nonzero $\theta$, on the microscopic
Dirac spectrum.  As in \cite{Damgaard:1999ij}, we use a formula
\cite{Damgaard:1997tr} that expresses the microscopic correlation
functions in terms of finite-volume partition functions.  While in
\cite{Damgaard:1999ij} the contributions from different topological
sectors were simply summed up numerically, we here use a closed
expression for the full partition function summed over all topologies
that became available later \cite{Lenaghan:2001ur}.  By doing this we
avoid the errors coming from the truncation of an infinite to a finite
sum.

\begin{figure}[t]
  \centering
  \includegraphics{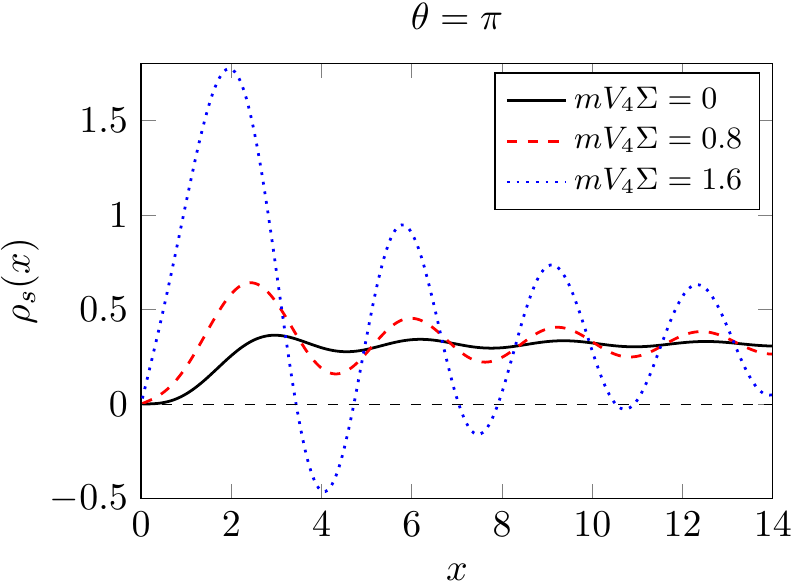}
  \caption{ \label{fig:Nf=1_oscillation} Microscopic spectral density
    of the Dirac eigenvalues in $N_c\geq 3$ QCD ($\beta=2$) for
    $N_f=1$ at $\theta=\pi$ for different quark masses.  Here
    $\rho_s(x)$ is not necessarily positive definite and the
    probabilistic meaning as a ``density'' of eigenvalues is lost,
    reflecting the underlying sign problem.  }
\end{figure}

Since all relevant formulas can be found in the literature
\cite{Damgaard:1997tr,Damgaard:1999ij,Lenaghan:2001ur} we omit the
technical details and only show the final plot.  As the simplest case
we considered $N_f=1$ QCD at $\mu=0$ with $N_c\geq 3$ ($\beta=2$).  We
have chosen $\theta=\pi$ and $m>0$ so that the measure is not positive
definite. (This is physically equivalent to $\theta=0$ with $m<0$.)
In figure \ref{fig:Nf=1_oscillation} we show the plot of the
microscopic spectral density with several values of the quark mass.
In the chiral limit $(m=0)$ the contribution from the $\nu=0$ sector
is dominant and there is no effect of $\theta$. For nonzero quark
mass, the microscopic spectral density exhibits a strong oscillation
whose magnitude grows rapidly with $mV_4\Sigma$, while the period of
oscillation looks roughly independent of $mV_4\Sigma$. This is exactly
the opposite of what we observe at $\theta=0$, where the dynamical
quark decouples for a sufficiently large mass and $\rho_s(x)$ becomes
smoother. This oscillatory behavior and the failure of decoupling is
reminiscent of the complex Dirac eigenvalue spectrum in QCD at $\mu\ne
0$ \cite{Osborn:2005ss}. We also checked that these characteristics
are not limited to $\theta=\pi$: Even for a small $\theta\ne 0$ the
violent oscillation shows up eventually at a sufficiently large value
of $mV_4\Sigma $.

Now we have understood microscopically why the standard Banks-Casher
relation can fail: The spectral density varies rapidly over the scale
of the quark mass and $\rho(0)$ is certainly ill-defined in the
thermodynamic limit. It is quite tempting to conjecture that this
phenomenon is universal and will occur also in other theories with
indefinite measure. In particular, we expect this for dense QCD-like
theories with positivity-breaking external sources.%
\footnote{The $\theta$-dependence of QCD-like theories at $\mu\ne 0$
  was studied in \cite{Metlitski:2005db,Metlitski:2005di}, but the
  singular value spectrum of the Dirac operator was not considered
  there.}  Indeed, in $N_f=2$ two-color QCD, $j_R=j_L$ at $\theta=0$
is equivalent to $j_R=-j_L$ at $\theta=\pi$, and the similarity to the
case considered above is evident.  It would be interesting to study
the impact of the sign problem on the singular value spectrum in more
detail in analogy to the analysis of
\cite{Osborn:2005ss,Osborn:2008jp}.

\subsection{QCD inequalities}
\label{app:qcd_ineq}

QCD inequalities are a powerful theoretical tool to impose strong
constraints on the dynamics when the Euclidean functional measure is
positive definite
\cite{Weingarten:1983uj,Witten:1983ut,Nussinov:1983hb,Espriu:1984mq}.
Although the sign problem hampers the application of QCD inequalities
to three-color QCD at $\mu\ne 0$, they are still applicable to a class
of dense QCD-like theories with positive definite measures
\cite{Kogut:1999iv,Son:2000xc,Son:2000by,Splittorff:2000mm}.  In
particular it was shown for two-color QCD (with nonzero quark masses
and no diquark sources) that the symmetry breaking is driven by the
$0^+$ diquark condensate \cite{Kogut:1999iv}, thus leaving parity
unbroken. In this appendix we reexamine the application of QCD
inequalities to two-color QCD at $\mu\ne 0$ and extend the analysis to
the case of nonzero diquark sources.  For the sake of simplicity we
will limit ourselves to either nonzero quark masses or nonzero diquark
sources, but not both. We always work in Euclidean space.  

\subsubsection{Two-color QCD with quark masses}

We review the original discussion in \cite{Kogut:1999iv} and assume
even $N_f$ for positivity of the measure.  In this subsection, the
transpose ($T$) and the adjoint ($\dagger$) act only on color, spinor,
and flavor indices, but not on space-time indices.  The quark chemical
potential $\mu$ can be arbitrary in the following.  When the diquark
sources are absent and the (degenerate) quark mass is real and
positive, the propagator in a fixed background gauge field is given by
\begin{equation} 
  S_{\psi\bar\psi}(x,y) \equiv \ev{\psi(x)\bar\psi(y)}_\psi 
  = \Big\langle x\Big| \frac{1}{D+m} \Big|y\Big\rangle\,, 
\end{equation} 
where $\langle~\rangle_\psi$ denotes the average only over the fermion
fields.  Below we also use $\langle~\rangle_{\psi, A}$ and
$\langle~\rangle_{A}$ for the full average and the average only over
the gauge fields, respectively.
 
Let us consider a diquark operator $M(x)\equiv \psi^T \Gamma\psi$ with
an antisymmetric matrix $\Gamma$ satisfying $\Gamma^\dagger
\Gamma=\1$.  The correlation function of this field is then given by
\begin{align}
  \ev{M(x)M^\dagger (y)}_{\psi, A} & = \ev{
  \psi^T(x)\Gamma\psi(x) \bar\psi(y)\bar\Gamma\,\bar\psi^T\!(y)
  }_{\psi, A} \qquad (\bar\Gamma \equiv \gamma_4\Gamma^\dagger
  \gamma_4)
  \notag \\
  & = \ev{\tr\big[\Gamma S_{\psi\bar\psi}(x,y)\bar\Gamma
  S_{\psi\bar\psi}(x,y)^T\big]}_{A} 
  \notag \\
  & \leq \ev{\tr\big[
  S_{\psi\bar\psi}(x,y)S_{\psi\bar\psi}(x,y)^\dagger\big]}_{A}\,, 
  \label{eq:CS}
\end{align}
where the last line follows from the Cauchy-Schwarz inequality and the
positivity of the measure.  The same upper bound can be proven for
those correlators of mesonic fields $\bar\psi\Gamma\psi$ that have no
disconnected piece.%
\footnote{By ``disconnected piece'' we mean a diagram whose quark
  lines are not connected directly, although they could be connected
  by gluons, see, e.g., figure~1 in \cite{Witten:1983ut}. The absence
  of disconnected pieces is essential for QCD inequalities. The
  correlators of $\sigma$ and $\eta'$ contain disconnected pieces and
  our arguments do not apply. Indeed, if the disconnected piece could
  be dropped, the $\eta'$ would be as light as the pions. This is of
  course invalid.  In reality the contribution of the gluonic
  intermediate state $|F\tilde{F}\rangle$ is essential in the
  generation of the large mass of the $\eta'$.}  On the other hand,
$(D+m)^T=C\tau_2\gamma_5(D^\dagger+m)C\tau_2\gamma_5$ allows us to
write
\begin{equation}
  \ev{ M(x)M^\dagger (y) }_{\psi, A}
  = \ev{\tr\big[\Gamma S_{\psi\bar\psi}(x,y)\bar\Gamma C\tau_2\gamma_5 
  S_{\psi\bar\psi}(x,y)^\dagger C\tau_2\gamma_5\big]}_{A}\,.
\end{equation}
If we assume $\Gamma=C\tau_2\gamma_5\Gamma_f$ for an antisymmetric
flavor matrix $\Gamma_f$, it follows that
\begin{equation}
  \ev{ M(x)M^\dagger (y) }_{\psi, A} = \tr[\Gamma_f \Gamma_f^\dagger]
  \ev{\tr\big[S_{\psi\bar\psi}(x,y)S_{\psi\bar\psi}(x,y)^\dagger\big]}_{A}\,.
\end{equation}
This is equal to the upper bound in \eqref{eq:CS} up to an irrelevant
multiplicative constant,\footnote{The mass is given by the exponential
  decay of the correlation function, for which the multiplicative
  constant is irrelevant.} implying that the diquarks in this channel
are the lightest of all mesons and diquarks.%
\footnote{Strictly speaking, this argument does not exclude the
  possibility that there are other mesons or diquarks of the same
  mass.  For example, at $\mu=0$ there are mesonic states that have
  the same mass as the scalar diquarks due to the extended flavor
  symmetry.  This degeneracy is lifted at $\mu\ne 0$.}  For
$\Gamma=C\tau_2\Gamma_f$ the upper bound is not saturated, implying
that if a diquark condensate forms, it does so in the $0^+$ channel
and not in the $0^-$ channel.  This is consistent with the observation
that the instanton-induced interaction for $N_f=2$ is attractive in
the $0^+$ diquark channel and repulsive in the $0^-$ diquark channel
\cite{Alford:1997zt,Rapp:1997zu,Rapp:1999qa}.

It is a general rule that the preferred direction of a condensate
depends on the direction of the external symmetry-breaking field. In
the case above, we considered real positive quark masses, but it is
also instructive to consider other cases. Let us take degenerate
purely imaginary masses for instance. For the sake of positivity
$(\det M>0)$ we require $N_f=4n$ with $n\in\N$. The propagator now
reads
\begin{equation}
  S_{\psi\bar\psi}(x,y) = \ev{ \psi(x)\bar\psi(y) }_\psi 
  = \Big\langle x\Big| \frac{1}{D+im\gamma_5} \Big|y \Big\rangle
  \quad\text{with }m\in\R\,,
\end{equation}
and because of $(D+im\gamma_5)^T=-C\tau_2(D+im\gamma_5)^\dagger
C\tau_2$ the propagator satisfies $S_{\psi\bar\psi}(x,y)^T = -C\tau_2
S_{\psi\bar\psi}(x,y)^\dagger C\tau_2$.  Thus we find from the second
line of \eqref{eq:CS} that
\begin{align}
  \ev{ M(x)M^\dagger (y) }_{\psi, A}= 
  - \ev{\tr\big[\Gamma S_{\psi\bar\psi}(x,y)\bar\Gamma 
  C\tau_2 S_{\psi\bar\psi}(x,y)^\dagger C\tau_2\big]}_{A}\,.
\end{align}
When $\Gamma=C\tau_2\Gamma_f$, the r.h.s.\ reduces to
$\tr[\Gamma_f\Gamma_f^\dagger]
\big\langle\tr\big[S_{\psi\bar\psi}(x,y)
S_{\psi\bar\psi}(x,y)^\dagger\big]\big\rangle_{A}$, saturating the
inequality \eqref{eq:CS}.  Therefore this time the diquark
condensation occurs in the $0^-$ channel, breaking parity.  The
conclusion is that the quantum numbers of the condensate are not
entirely determined by the internal dynamics of the system, but are
sensitive to the external symmetry-breaking fields.

The reader may worry that this is in conflict with the Vafa-Witten
theorem \cite{Vafa:1984xg} stating that parity is not spontaneously
broken in vector-like theories.  Let us make two remarks on this
point. First, various loopholes in the original ``proof'' have been
discussed in the literature
\cite{Sharpe:1998xm,Azcoiti:1999rq,Cohen:2001hf,Ji:2001sa,Einhorn:2002rm,Crompton:2005ap,Azcoiti:2008nq},
and therefore it cannot be regarded as an established mathematical
theorem. Second, our result does not contradict recent work
\cite{Azcoiti:2008nq} on a Vafa-Witten-type theorem for fermion
bilinears, since the positivity of the probability distribution
function of the observable, which plays an essential role in
\cite{Azcoiti:2008nq}, is not ensured for purely imaginary masses.

We end with a brief remark on the rotation of the condensate.  It is
known from chiral perturbation theory that the chiral condensate at
$\mu=0$ rotates into the diquark condensate for $\mu>m_{\pi}/2$. For
real positive mass, $\langle \bar\psi\psi\rangle$ at $\mu=0$ rotates
into the $0^+$ diquark condensate $\langle\psi C\gamma_5\psi\rangle$
\cite{Kogut:2000ek}.  For imaginary mass (and $N_f=4n$), $\langle
\bar\psi i\gamma_5\psi\rangle$ at $\mu=0$ rotates into the $0^-$
diquark condensate $\langle\psi C\psi\rangle$.%
\footnote{More general cases, not necessarily preserving the
  positivity of the measure, were addressed in \cite{Metlitski:2005db}
  based on chiral perturbation theory with a nonzero $\theta$-angle.}

\subsubsection{Two-color QCD with diquark sources}

We now turn to two-color QCD in the chiral limit with diquark sources.
In this subsection, the transpose and the adjoint again act only on
color, spinor, and flavor indices, with one exception: as in the other
parts of this paper the adjoint in $D^\dagger$ acts on all indices.

In calculations of correlators, it is important to know whether there
is a disconnected piece or not.  If there is, the contribution of the
gluonic intermediate states will invalidate the derivation of QCD
inequalities.  In the previous subsection, the annihilation of the
diquark into gluons was trivially prohibited by the $\U(1)_B$ charge
conservation, but the latter is explicitly violated once we insert
diquark sources.  Indeed, in the cases of $J_R=\pm J_L=jI$ considered
below, the correlators of $\psi^TC\tau_2I\psi$ and
$\psi^TC\tau_2\gamma_5I\psi$ have disconnected pieces, and QCD
inequalities do not apply. We must limit ourselves to those
correlators that have no disconnected piece.

The propagators in a fixed gauge field are given by the inverse of $W$
defined in \eqref{eq:W},
\begin{align}
  \bep
    S_{\psi\psi}(x,y) & S_{\psi\bar\psi}(x,y) 
    \\ 
    S_{\bar\psi\psi}(x,y) & S_{\bar\psi\bar\psi}(x,y)
  \eep
  & \equiv 
  \bep  
    \ev{\psi(x)\psi^T(y)}_\psi & \ev{\psi(x)\bar\psi(y)}_\psi
    \\
    \big\langle\bar\psi^T\!(x)\psi^T(y)\big\rangle_\psi 
    & 
    \big\langle \bar\psi^T\!(x)\bar\psi(y)\big\rangle_\psi
  \eep
  = \frac{1}{2}\langle x|W^{-1}|y\rangle\,. 
\end{align}
The operator inverse $W^{-1}$ depends on the choice of the diquark
sources.  We first discuss the diquark sources with positive parity
and then those with negative parity.  As stated above we work in the
chiral limit.

For the $0^+$ diquark source, i.e., $J_R=-J_L=jI$ with real $j$, we
find%
\footnote{Note that the denominator in $(W^{-1})_{22}$ contains
  $DD^\dagger$ and not $D^\dagger D$. Using $W$ in \eqref{eq:W} one
  can explicitly confirm $WW^{-1}=\1$.}
\begin{equation}
  \label{eq:Winv}
  W^{-1} = 
  \bep 
    \displaystyle 
    -\frac{1}{j^2+D^\dagger D}jC\tau_2\gamma_5 I & 
    \displaystyle 
    \frac{1}{j^2+D^\dagger D}D^\dagger
    \\
    \displaystyle  
    C\tau_2 D\frac{1}{j^2+D^\dagger D} C\tau_2 & 
    \displaystyle  
    jC\tau_2\gamma_5 I \frac{1}{j^2+DD^\dagger}
  \eep.
\end{equation}
The positivity of the measure is ensured for even $N_f$ (see section
\ref{app:complex-1}).  We now consider the correlator of the diquark
field $M(x)=\psi^T\Gamma\psi$, assuming that there is no disconnected
piece. A rerun of the arguments leading to \eqref{eq:CS} then shows
that
\begin{align}
  \ev{ M(x)M^\dagger (y) }_{\psi, A}
  & = \ev{ \psi^T\!(x)\Gamma\psi(x) \bar\psi(y)\bar\Gamma\,\bar\psi^T\!(y) }_{\psi, A}
  \qquad (\bar\Gamma \equiv \gamma_4\Gamma^\dagger \gamma_4)
  \notag \\
  & = -\ev{\tr\big[\Gamma S_{\psi\bar\psi}(x,y)\bar\Gamma S_{\bar\psi\psi}(y,x)\big]}_{A}
  \notag \\
  & \leq \ev{\tr\big[ S_{\psi\bar\psi}(x,y)S_{\psi\bar\psi}(x,y)^\dagger\big]}_{A}\,.
  \label{eq:SS_up}
\end{align}
We now use \eqref{eq:Winv} to show that\footnote{Note that the
  interchange of $x$ in $y$ in the second line of \eqref{eq:C20} is
  correct even though the outer adjoint does not act on space-time
  indices.}
\begin{align}
  S_{\psi\bar\psi}(x,y)^\dagger 
  & = \left(\frac{1}{2}\Big\langle x\Big|\frac{1}{j^2+D^\dagger D}D^\dagger \Big|y\Big\rangle\right)^\dagger
  \notag \\
  & = \frac{1}{2}\Big\langle y\Big|D\frac{1}{j^2+D^\dagger D}\Big|x\Big\rangle
  \notag \\
  & = C\tau_2S_{\bar\psi\psi}(y,x)C\tau_2\,.
  \label{eq:C20}
\end{align}
Using this relation in the second line of \eqref{eq:SS_up} we obtain
\begin{align}
  \ev{ M(x)M^\dagger (y) }_{\psi, A} = 
  -\ev{\tr\big[\Gamma S_{\psi\bar\psi}(x,y)\bar\Gamma 
  C\tau_2S_{\psi\bar\psi}(x,y)^\dagger C\tau_2\big]}_{A}\,.
\end{align}
It is easily shown that for both $\Gamma=C\tau_2\Gamma_f$ and
$\Gamma=C\tau_2\gamma_5\Gamma_f$, we find
\begin{equation}
  \ev{ M(x)M^\dagger (y) }_{\psi, A} = \tr[\Gamma_f\Gamma_f^\dagger]
  \ev{\tr\big[S_{\psi\bar\psi}(x,y)S_{\psi\bar\psi}(x,y)^\dagger\big]}_{A}\,,
\end{equation}
which saturates the inequality \eqref{eq:SS_up} up to a multiplicative
constant.  Thus $\psi^T C\tau_2\Gamma_f\psi$ and $\psi^T
C\tau_2\gamma_5\Gamma_f\psi$ are the lightest diquarks (with
degenerate masses).

From $S_{\psi\psi}\propto I$ we see that the nonexistence of
disconnected pieces is ensured if $\tr[I\Gamma_f]=0$. Recalling from
section~\ref{sec:ss_high} that an $N_f\times N_f$ antisymmetric matrix
can be expanded in the generators $t^A$ of $\U(N_f)/\Sp(N_f)$, we find
that this condition is satisfied for all $A\ne 0$ but not for
$A=0$. Thus we conclude that $\{\psi^T C\tau_2 t^A\psi\}_{A\ne 0}$ and
$\{\psi^T C\tau_2\gamma_5 t^A\psi\}_{A\ne 0}$ are the lightest
diquarks (with degenerate masses), whereas no information on $\psi^T
C\tau_2 I\psi$ and $\psi^T C\tau_2\gamma_5I\psi$ can be gained from
QCD inequalities.

The fields $\{\psi^T C\tau_2 t^A\psi\}_{A\ne 0}$ and $\{\psi^T
C\tau_2\gamma_5 t^A\psi\}_{A\ne 0}$ can be interpreted as the NG modes
associated with the spontaneous symmetry breaking $\SU(N_f)_R\times
\SU(N_f)_L\to\Sp(N_f)_R\times \Sp(N_f)_L$. However, this breaking is
caused by \emph{both} $\langle\psi^TC\tau_2I\psi\rangle$ and
$\langle\psi^TC\tau_2\gamma_5I\psi\rangle$, which means that our
result does not yield any constraint on the parity of the
condensate. (This is to be contrasted with QCD with quark masses and
no diquark sources, where we could determine the parity of the
condensate even though we could not apply QCD inequalities to the
correlators of $\sigma$ and $\eta'$.)  

We now turn to the $0^-$ diquark source.  For $J_R=J_L=jI$ with real
$j$ we find
\begin{equation}
  W^{-1} = 
  \bep 
    \displaystyle 
    -\frac{1}{j^2+D^\dagger D}jC\tau_2  I & 
    \displaystyle 
    \frac{1}{j^2+D^\dagger D}D^\dagger
    \\
    \displaystyle  
    C\tau_2 D\frac{1}{j^2+D^\dagger D} C\tau_2 & \;
    \displaystyle  
    - jC\tau_2  I \frac{1}{j^2+DD^\dagger}
  \eep.
\end{equation}
For the positivity of the measure we require $N_f=4n$ (see section
\ref{app:complex-1}).  Since the off-diagonal entries are not changed
from \eqref{eq:Winv}, our previous discussion goes through without
modifications, and therefore we reach the same conclusion that
$\{\psi^T C\tau_2 t^A\psi\}_{A\ne 0}$ and $\{\psi^T C\tau_2\gamma_5
t^A\psi\}_{A\ne 0}$ are the lightest diquarks. Again, the parity of
the condensate cannot be determined from this argument alone.

A further discussion of the parity of the diquark condensate is given
in the last paragraph of section~\ref{sec:three}.

\section{Anomaly and index theorem for non-Hermitian Dirac operators}
\label{app:index}

In this appendix we prove the following properties of the Euclidean
Dirac operator%
\footnote{The color generators are assumed to be in the fundamental
  representation of $\SU(N_c)$ for $N_c\geq 2$.}
\begin{align}
  \label{eq:mu_D_def}
  D(\mu) \equiv \gamma_\nu D_\nu+\mu \gamma_4
  \equiv\bep 0&D_L\\D_R&0\eep
\end{align} 
at zero temperature and nonzero quark chemical potential $\mu$ in a
given gauge field:
\begin{enumerate}
\item The chiral anomaly equation in the chiral limit,
  \begin{equation}
    \label{eq:A}
    \del_\nu J_{5\nu}= \frac{i N_f}{16\pi^2} F \tilde F
  \end{equation}
  with $J_{5\nu} = i \bar\psi \gamma_\nu \gamma_5 \psi$,  is unchanged.
\item We have
  \begin{subequations}
    \label{eq:BC}
    \begin{align}
      \label{eq:BCa}
      \dim\ker D_R-\dim \ker D_R^\dagger&=\nu\,,\\
      \label{eq:BCb}
      \dim\ker D_L-\dim \ker D_L^\dagger&= -\nu\,,
    \end{align}
  \end{subequations}
  where $\nu$ is the winding number of the gauge field, defined in
  \eqref{eq:nu_definition}.
\item For $\mu=0$, \eqref{eq:BC} reduces to the ordinary index
  theorem%
  \footnote{\label{ft:conv_diff2}Whether the r.h.s.\ of this equation
    is $\nu$ or $-\nu$ is determined by the convention for $\gamma_5$.
    Here we use $\gamma_5=\gamma_1\gamma_2\gamma_3\gamma_4$ (see
    appendix \ref{app:conv}), which leads to $+\nu$.  Some authors use
    the convention $\gamma_5=-\gamma_1\gamma_2\gamma_3\gamma_4$, which
    leads to $-\nu$. As mentioned in footnote~\ref{ft:conv_diff}, it
    seems that in \cite{Leutwyler:1992yt} (although not stated
    explicitly) the second convention is used, which differs from ours
    by a sign.}
  \begin{equation}
    \label{eq:D}
    \dim\ker D_R-\dim \ker D_L = \nu 
  \end{equation}
  since $D_R=-D_L^\dagger$ for $\mu=0$.
\end{enumerate}

The in-medium chiral anomaly and its impact on the phenomenology have
been discussed in numerous studies
\cite{Abrikosov:1980nx,Abrikosov:1981qb,AragaodeCarvalho:1980de,Shuryak:1982hk,GomezNicola:1992gg,GomezNicola:1994vq,Qian:1994pp,Alford:1997zt,Rapp:1997zu,Schafer:1998up,Carter:1998ji,Rapp:1999qa,Sannino:2000kg,Hsu:2000by,Son:2004tq,Hatsuda:2006ps,Chen:2009gv,Gavai:2009vb,Abuki:2010jq,Basler:2010xy,Zhang:2011xi}.
So far the observation \eqref{eq:A} has been made via perturbative
calculations of the triangle diagram
\cite{GomezNicola:1992gg,GomezNicola:1994vq,Qian:1994pp,Hsu:2000by,Gavai:2009vb}
as well as by Fujikawa's path integral method
\cite{GomezNicola:1994vq,Hsu:2000by,Gavai:2009vb}.  The latter
encounters some subtleties at $\mu\ne 0$ that were not stressed in the
preceding works.  Also, in contrast to the chiral anomaly equation,
the modification of the index theorem due to $\mu\ne 0$ was rarely
considered so far.

In this appendix we take a closer look at these issues.  After some
preliminaries in section~\ref{app:prelim}, we present in
section~\ref{sc:path_anom} two derivations of the anomaly equation
\eqref{eq:A} via the path integral method in a style that differs from
\cite{GomezNicola:1994vq,Hsu:2000by,Gavai:2009vb}.  In
section~\ref{sc:path_ind} we then study the index of the Dirac
operator at $\mu\ne 0$.  We will show that \eqref{eq:BC} can be proven
for generic gauge fields and $\mu\ne0$, while \eqref{eq:D} requires an
additional input.  In section~\ref{sc:path_discre} we will discuss
where the difference between \eqref{eq:BC} and \eqref{eq:D} comes
from.

Three caveats are in order. First, we note that Fujikawa's analysis in
its original form is equivalent to one-loop perturbation theory
\cite{Shizuya:1986am,Shizuya:1986hn,Joglekar:1986nj,Joglekar:1987rf}
although it is sometimes incorrectly said that it offers a
nonperturbative derivation of the anomaly.  Our analysis should be
seen as the extension of this equivalence to $\mu\ne 0$.  Second, our
analysis does not extend to lattice fermions directly. It is beyond
the scope of this paper to analyze the effects of $\mu$ in the lattice
formulation (see e.g.,
\cite{Bloch:2006cd,Bloch:2007xi,Narayanan:2011ff}).  Third, the entire
discussion in this appendix will be given for a fixed gauge field
background.

\subsection{Preliminaries}
\label{app:prelim}

\subsubsection{Path integral measure}
To clarify our stance on the subject and fix the notation, we begin by
reviewing Fujikawa's method at $\mu=0$
\cite{Bertlmann:1996xk,Fujikawa:2004cx}.  It starts with the expansion
of the fermion fields $\psi$ and $\bar\psi$ in terms of the
eigenstates $\{\psi_n\}$ of $D(0)$,
\begin{subequations}
  \begin{align}
    \psi(x)&=\sum_n a_n\psi_n(x)\equiv \sum_n a_n\langle x|n\rangle\,,\\
    \bar\psi(x)&=\sum_n \bar{b}_n\psi^\dagger_n(x) \equiv \sum_n
    \bar{b}_n\langle n|x\rangle\,, 
  \end{align}
\end{subequations}
where the $a_n$ and $\bar{b}_n$ are Grassmann variables.  The
transformations from $\psi$ and $\bar\psi$ to $\{a_n\}$ and
$\{\bar{b}_n\}$ are unitary by virtue of the orthonormality and
completeness of $\{\psi_n\}$, which are guaranteed since $D(0)$ is
anti-Hermitian.  The functional measure then transforms as
\begin{align}
  \prod_{x} \dd\psi(x)\,\dd\bar\psi(x)
  &=(\det\langle n|x\rangle)^{-1}(\det\langle x|m\rangle)^{-1}
  \prod_n da_n d\bar{b}_n
  \label{eq:measure}
  \nonumber \\
  &=(\det\langle n|m\rangle)^{-1}\prod_n da_n d\bar{b}_n
  =\prod_n da_n d\bar{b}_n\,,
\end{align}
where $\det \langle n|x\rangle$ and $\det \langle x|m\rangle$ each
represent the determinant of an infinite-dimensional matrix specified
by the indices $n$, $m$, and $x$.  By standard calculations the
evaluation of the Jacobian for an infinitesimal chiral transformation
leads to an infinite sum, $\tr \gamma_5 = \sum_n \psi_n(x)^\dagger
\gamma_5\psi_n(x)$, which is to be regularized by, e.g.,
$\ee^{D^2/\Lambda^2}$ with $\Lambda\to\infty$.

Two remarks are in order. First, for the change of variables
\eqref{eq:measure} to be unitary, the operator whose eigenstates are
used to expand the fermion fields must be (anti-) Hermitian. Second,
the final result does depend on the choice of the operator. Had we
used the plane-wave basis, the result would simply have vanished:
$\tr(\gamma_5 \ee^{\del^2/\Lambda^2})=0$.  This is sometimes ascribed
to the lack of gauge invariance, but the gauge invariance is merely a
necessary condition.  Indeed there is an instructive example,
$\tr(\gamma_5 \ee^{D_\nu D_\nu/\Lambda^2})=0$, which reveals that a
seemingly unitary transformation from the gauge invariant eigenspace
of $D_\nu D_\nu$ to that of $D$ is actually non-unitary in the
presence of regularization \cite{Fujikawa:1979ay}.

The bottom line is that the operator used to define the functional
space must be chosen so as to diagonalize the fermion action
\cite{Fujikawa:2004cx,Srednicki:2007qs}.  The operators $D_\nu D_\nu$
and $\partial^2$ are not eligible, because their eigenbases do not
diagonalize the action.  In the following, we will work with this
criterion as a guiding principle.  This completes our preliminary
comments on the chiral anomaly at $\mu=0$.

\subsubsection{Remarks on the literature}

$D(\mu)$ is no longer anti-Hermitian at $\mu\ne 0$ and the eigenstates
lose orthogonality, spoiling the unitary transformation
\eqref{eq:measure}.  This requires us to choose a more appropriate
definition of the path integral measure.
\begin{itemize}
\item In \cite{GomezNicola:1994vq} the fermion fields were expanded in
  eigenstates of $D(i\mu)$, the Dirac operator with imaginary
  chemical potential. With $D(i\mu)$ being Hermitian (in their
  convention), the eigenstates are orthonormal and the anomaly equation
  follows straightforwardly.
\item In \cite{Hsu:2000by} the fermion fields were expanded in
  eigenstates of $D(0)$. Again the derivation of the anomaly equation
  is straightforward.
\item In \cite{Gavai:2009vb} the fermion fields were expanded in right
  and left eigenstates of $D(\mu)$, but the subtlety is that the
  ``eigenstates'' were simply defined by multiplying the eigenstates
  of $D(0)$ by $\ee^{\pm \mu x_4}$. They either blow up exponentially
  at infinity for $T=0$, or spoil the anti-periodic boundary condition
  in the $x_4$ direction for $T>0$. Note also that this multiplication
  does not change the eigenvalues from purely imaginary ones at
  $\mu=0$.  We suspect that the eigenstates thus obtained are not the
  legitimate eigenstates of $D(\mu)$ (see, e.g.,
  \cite{AragaodeCarvalho:1980de,Schafer:1998up}) since the
  contribution from occupied states below the Fermi surface is not
  taken into account \cite{Schafer:1998up}.
\end{itemize}
The schemes based on $D(i\mu)$ and $D(0)$
\cite{GomezNicola:1994vq,Hsu:2000by} do not meet our criterion, as the
eigenstates of $D(i\mu)$ and $D(0)$ do not diagonalize the action.
While the final result for the anomaly equation obtained in
\cite{GomezNicola:1994vq,Hsu:2000by,Gavai:2009vb} is correct, the
rigorousness of these approaches is not completely obvious to
us.\footnote{Note also that the index theorem derived in
  \cite{Gavai:2009vb} corresponds to \eqref{eq:D}, while the correct
  result at $\mu\ne0$ is \eqref{eq:BC} if there are accidental zero
  modes.}

\subsection[Proofs of (\ref{eq:A}): Path integral derivation of the
anomaly at $\mu\ne 0$]{\boldmath Proofs of (\ref{eq:A}): Path integral
  derivation of the anomaly at $\mu\ne 0$}
\label{sc:path_anom}

In this subsection we will present new derivations of the anomaly
equation \eqref{eq:A} by the path integral method, which we believe
place the results in the literature on a firmer footing.  For
simplicity we assume $N_f=1$. The extension to $N_f>1$ should be
straightforward.

\subsubsection[Derivation based on $D^\dagger D$]{\boldmath Derivation
  based on $D^\dagger D$}
\label{sc:anomaly-Derivation-1}

A non-Hermitian Dirac operator occurs not only in QCD at $\mu \ne 0$
but also in chiral gauge theories where the gauge interaction involves
a $\gamma_5$-coupling.  We shall apply one of the methods devised for
this situation \cite{Fujikawa:1983bg}%
\footnote{In chiral gauge theories this regularization is known to
  give the so-called covariant anomaly \cite{Bertlmann:1996xk}.  } to
QCD at $\mu \ne 0$.  We use the same orthonormal bases $\{\phi_n\}$
and $\{\tilde\phi_n\}$ as in \eqref{eq:def_phi}.  Note again that the
$\xi_n$ are not the eigenvalues but rather the singular values of $D$,
with $\xi_n\geq 0$ for all $n$.  The orthogonality and completeness of
these bases follow from the Hermiticity of $D^\dagger D$ and
$DD^\dagger$.  The fields can be expanded as
\begin{align}
  \psi(x)=\sum_n a_n\phi_n(x)\,,
  \qquad 
  \bar\psi(x)=\sum_n\bar{b}_n\tilde\phi_n^\dagger(x)\,,
  \label{eq:psi_expand_}
\end{align}
and the fermionic part of the action is then diagonalized,
\begin{align}
  S=\int d^4x~\bar\psi(x) D \psi(x)=\sum_n \xi_n\bar{b}_na_n\,.
\end{align}
Therefore this scheme meets the criterion mentioned above. Note that,
had we expanded both $\psi$ and $\bar\psi$ in only one of the bases,
the action would not have been diagonalized.

Under an infinitesimal chiral transformation $\psi\to
\ee^{i\alpha\gamma_5}\psi$,
$\bar\psi\to\bar\psi\,\ee^{i\alpha\gamma_5}$, the action changes as
\begin{align}
  \ee^{-S}\to \ee^{-S'}=\exp\Big[
  -S + \int d^4x~\alpha(x)\del_\nu J_{5\nu}(x)
  \Big]\,,
  \label{eq:change_action_}
\end{align}
while the functional measure also changes as
\begin{align}
  \dd\psi\,\dd\bar\psi   &\to   
  \dd\psi\,\dd\bar\psi\,
  \exp\Big\{-i\int d^4x\,\alpha(x) \sum_n 
  \Big[\varphi_n^\dagger(x) \gamma_5 \varphi_n(x)
  +\tilde\phi_n^\dagger(x) \gamma_5 \tilde\phi_n(x)\Big] \Big\}
  \nonumber \\
  & \equiv
  \lim_{\Lambda\to\infty}
  \dd\psi\,\dd\bar\psi\,
  \exp\Big\{-i\int d^4x\,\alpha(x)  
  \sum_n \Big[\phi_n^\dagger (x)\gamma_5 
  \ee^{-D^\dagger D/\Lambda^2}\phi_n(x)
  \notag
  \\
  &\qquad \qquad \qquad \qquad \quad\;+\tilde\phi_n^\dagger(x) 
  \gamma_5 \ee^{-DD^\dagger/\Lambda^2}
  \tilde\phi_n(x)\Big] \Big\}\,,
  \label{eq:measu_sum}
\end{align}
where a Gaussian cutoff has been employed.  Because the phase factor
$\alpha(x)$ is arbitrary, we have from \eqref{eq:change_action_} and
\eqref{eq:measu_sum}
\begin{align}
  \del_\nu J_{5\nu} & 
  = i \lim_{\Lambda\to\infty}
  \sum_n \Big[\phi_n^\dagger (x)\gamma_5 
  \ee^{-D^\dagger D/\Lambda^2}\phi_n(x)
  +\tilde\phi_n^\dagger(x) 
  \gamma_5 \ee^{-DD^\dagger/\Lambda^2}
  \tilde\phi_n(x)\Big] \notag\\
  \label{eq:del_j5-2}
  &= i \frac{1}{32\pi^2}\varepsilon_{\alpha\beta\gamma\delta}
  F^a_{\alpha\beta}F^a_{\gamma\delta}\,,
\end{align}
where the last line is obtained using standard algebra involving the
plane-wave basis.  The $\mu$-dependent terms appear in the calculation
but disappear from the final result.  We have thus obtained the
conventional anomaly equation \eqref{eq:A}.

\subsubsection[Derivation based on $D$]{\boldmath Derivation based on
  $D$}
\label{sc:anomaly-Derivation-2}

In the second derivation, we apply the method of
\cite{AlvarezGaume:1983cs,AlvarezGaume:1984dr} to QCD at $\mu \ne 0$.%
\footnote{In chiral gauge theories this regularization is known to
  give the so-called consistent anomaly satisfying the Wess-Zumino
  consistency condition \cite{Bertlmann:1996xk}.  }  Let us denote the
right and left eigenvectors of $D(\mu)$ by $\{\psi_n\}$ and
$\{\chi_n\}$,
\begin{align}
  D(\mu)\psi_n = \lambda_n \psi_n\,, 
  \qquad 
  \chi_n^\dagger D(\mu) = \lambda_n \chi_n^\dagger \,. 
\end{align}
The eigenvalues $\lambda_n$ are no longer purely imaginary but complex
in general.  We assume\footnote{This assumption can be violated on a
  gauge field set of measure zero, see section~\ref{sc:path_discre},
  and then the derivation no longer works.}  that they satisfy the
following orthonormality and completeness relations,%
\footnote{The conditions in \eqref{eq:ornorm} say nothing about inner
  products such as $\int d^4x\ \psi_m^\dagger(x)\psi_n(x)$.}
\begin{align}
  \int d^4x~\chi_m^\dagger (x) \psi_n(x) = \delta_{mn}\,,\qquad 
  \sum_{n} \psi_n(x)\chi_n^\dagger (y) = \delta^4(x-y)\,.
  \label{eq:ornorm}
\end{align}
The orthonormalization is possible because the left and right
eigenstates corresponding to different eigenvalues are orthogonal,
as can be seen from
\begin{align}
  \lambda_m \int d^4x\,\chi_m^\dagger \psi_n
   = \int d^4x\,(\chi_m^\dagger D) \psi_n 
  = \int d^4x\,\chi_m^\dagger (D \psi_n)
   = \lambda_n \int d^4x\,\chi_m^\dagger \psi_n
  \,.
\end{align}
Let us expand the fields $\psi$ and $\bar\psi$ in the sets
$\{\psi_n\}$ and $\{\chi_n^\dagger\}$, respectively,
\begin{align}
  \label{eq:psi_expand_2}
  \psi(x) = \sum_n a_n\psi_n(x) ,  \qquad 
  \bar{\psi}(x) = \sum_n \bar{b}_n \chi_n^\dagger(x) \,,
\end{align} 
by which the fermion action is diagonalized as desired,
\begin{align}
  S = \int d^4x\,\bar\psi D \psi 
     = \sum_k \sum_\ell \int d^4x\, \bar b_k\chi_k^\dagger(x)
         D a_\ell \psi_\ell(x)
     = \sum_k \lambda_k \bar b_k a_k\,.
\end{align}
Thus this choice of bases also satisfies our criterion.  After some
algebra, the change of the measure under an infinitesimal chiral
rotation is found to be
\begin{align}
  \dd\psi \dd\bar\psi
  & \to \dd\psi \dd\bar\psi 
  \exp\Big[-i\int d^4x\, 2\alpha(x)\sum_n 
  \chi_n^\dagger(x) \gamma_5 \psi_n(x)\Big]
  \label{eq:jacob}
  \notag \\
  & = \lim_{\Lambda\to\infty}
  \dd\psi \dd\bar\psi 
  \exp\Big[-i\int d^4x\, 2\alpha(x)\sum_n 
  \chi_n^\dagger(x) \gamma_5 \frac{1}{1-D^2/\Lambda^2} \psi_n(x)\Big]\,.
\end{align}
We avoided a Gaussian cutoff because complex eigenvalues are not
suppressed by a Gaussian factor.  Using the assumed completeness
\eqref{eq:ornorm} we can move to the plane-wave basis. After some
standard algebra, we find that the $\mu$-dependent terms disappear
from the final result and recover the anomaly equation \eqref{eq:A}.

\subsection[Proofs of (\ref{eq:BC}): Index theorem at $\mu \ne
0$]{\boldmath Proofs of (\ref{eq:BC}): Index theorem at $\mu \ne 0$}
\label{sc:path_ind}

In the following we present two derivations of the index theorem at
$\mu \ne 0$.  The first derivation yields \eqref{eq:BC}, while the
second one yields \eqref{eq:D}.  We analyze the origin of this
discrepancy and relate it to the (in-) completeness of the bases.

\subsubsection[Derivation based on $D^\dagger D$]{\boldmath Derivation
  based on $D^\dagger D$}
\label{sc:index-Derivation-1}

Let us reexamine the analysis in
section~\ref{sc:anomaly-Derivation-1}.  From \eqref{eq:del_j5-2} it
follows that
\begin{align}
  \lim_{\Lambda\to\infty} 
  \sum_n \int d^4x\ 
  \Big[\phi_n^\dagger (x)\gamma_5 
  \ee^{-D^\dagger D/\Lambda^2}\phi_n(x)
  +\tilde\phi_n^\dagger(x) 
  \gamma_5 \ee^{-DD^\dagger/\Lambda^2}
  \tilde\phi_n(x)\Big]
  =  2\nu\,.
\end{align}
Using \eqref{eq:rel_bases} one can show that the entire contribution
to the l.h.s.\ comes solely from the eigenstates with $\xi_n=0$, for
any finite $\Lambda$.  This observation, combined with
\begin{align}
  \sum_{n:\,\xi_n=0} 
  \int d^4x\, \phi_n^\dagger (x)\gamma_5 
  \ee^{-D^\dagger D/\Lambda^2}\phi_n(x)
  = 
  \dim \ker D_R - \dim \ker D_L
\end{align}
and 
\begin{align}
  \sum_{n:\,\xi_n=0} 
  \int d^4x\, \tilde\phi_n^\dagger (x)\gamma_5 
  \ee^{-DD^\dagger/\Lambda^2}\tilde\phi_n(x)
  = 
  \dim \ker D^\dagger_L - \dim \ker D^\dagger_R \,,
\end{align}
proves the equality 
\begin{align}
  \dim \ker D_R - \dim \ker D_L + 
  \dim \ker D^\dagger_L - \dim \ker D^\dagger_R = 2\nu\,.
  \label{eq:dim_eq_1}
\end{align}
On the other hand, using the identity
\begin{align}
  \label{eq:trace}
  0 & = \tr\big( \ee^{-DD^\dagger/\Lambda^2}
  - \ee^{-D^\dagger D/\Lambda^2} \big)
\end{align}
that follows from the cyclic invariance of the trace\footnote{It was
  pointed out in \cite{Weinberg:1979ma} that the cyclic invariance of
  the trace could potentially break down.  This does not happen in the
  instanton background, see \cite{Weinberg:1979ma} or section~4.2 of
  \cite{Vandoren:2008xg}.  This possibility does not invalidate our
  result \eqref{eq:BC} since we have also constructed an alternative
  proof of \eqref{eq:BC}, using the eigenbases of $D_5(\mu)$ and
  $D_5(-\mu)$, which does not make use of \eqref{eq:trace}.  For
  brevity we do not show this proof here.}  and the fact that all
nonzero eigenvalues of $D^\dagger D$ and $DD^\dagger$ coincide, we
have
\begin{align}
  \dim \ker D & = \dim \ker D^\dagger\,,
\end{align}
or equivalently
\begin{align}
  \dim \ker D_R + \dim \ker D_L 
  & =  \dim \ker D^\dagger_R +  \dim \ker D^\dagger_L\,. 
  \label{eq:dim_eq_2}
\end{align}
Then \eqref{eq:dim_eq_1} and \eqref{eq:dim_eq_2} prove 
\eqref{eq:BC}.

For $\nu\ge0$, $\dim\ker D_R\geq \nu$ follows from \eqref{eq:BCa}.
For $\nu<0$, $\dim\ker D_L\geq |\nu|=-\nu$ follows from
\eqref{eq:BCb}.  Thus our index theorem at $\mu\ne 0$ shows that
$D(\mu)$ must possess at least $|\nu|$ zero modes for any $\mu$.  This
is consistent with the existence of exact zero modes in the instanton
background at $\mu\ne 0$
\cite{Abrikosov:1981qb,AragaodeCarvalho:1980de}.%
\footnote{%
  Note that the fermionic zero modes in the instanton background at
  $\mu\ne 0$ have a norm that diverges logarithmically in the spatial
  volume \cite{AragaodeCarvalho:1980de}.  }

\subsubsection[Derivation based on $D$]{\boldmath Derivation based on
  $D$}
\label{sc:index-Derivation-2}

Let us return to the second derivation of the anomaly in
section~\ref{sc:anomaly-Derivation-2}.  From \eqref{eq:jacob} and the
anomaly equation \eqref{eq:A} it follows that
\begin{align}
  \lim_{\Lambda\to\infty}
  \int d^4x\, \sum_n 
  \chi_n^\dagger(x) \gamma_5 \frac{1}{1-D^2/\Lambda^2} \psi_n(x) = \nu\,.
  \label{eq:sum_:}
\end{align}
If $\lambda_n\ne 0$, we have
\begin{align}
  \lambda_n \chi_n^\dagger(x) \gamma_5
  \frac{1}{1-\lambda_n^2/\Lambda^2} \psi_n(x) 
  & = [\chi_n^\dagger(x) D] \gamma_5 \frac{1}{1-\lambda_n^2/\Lambda^2}
  \psi_n(x) 
  \nonumber \\
  & = - \chi_n^\dagger(x)\gamma_5 \frac{1}{1-\lambda_n^2/\Lambda^2} D
  \psi_n(x) 
  \nonumber \\
  & = - \lambda_n \chi_n^\dagger(x)\gamma_5
  \frac{1}{1-\lambda_n^2/\Lambda^2} \psi_n(x)\,, 
\end{align}
and hence all contributions to the l.h.s.\ of \eqref{eq:sum_:} from
nonzero modes vanish.  Restricting the sum in \eqref{eq:sum_:} to zero
modes, we arrive at
\begin{align}
  \dim\ker D_R - \dim\ker D_L \overset?= \nu\,.
\end{align}
This ``proves'' \eqref{eq:D}, i.e., the index theorem in the same form
as at $\mu=0$.  It does not imply \eqref{eq:BC} proved in
section~\ref{sc:index-Derivation-1}.  The origin of this discrepancy
is analyzed below.

\subsubsection{Analysis of the discrepancy}
\label{sc:path_discre}

The arguments in section~\ref{sc:index-Derivation-1} proved
\eqref{eq:BC}, whereas in section~\ref{sc:index-Derivation-2} we were
led to \eqref{eq:D}.  Which is the true index theorem at $\mu\ne 0$?

To answer this question, let us begin by recalling a standard argument
at $\mu=0$ about the stability of the index of the Dirac operator
against small perturbations.  Suppose there are $\nu$ right-handed
zero modes and no left-handed zero modes in a given background gauge
field. Now we deform the gauge field smoothly so that two of the
eigenvalues $\lambda$ and $-\lambda$ (with eigenstates $\psi$ and
$\gamma_5\psi$, respectively) approach zero. When $\lambda=0$ is
achieved, we have two accidental zero modes, which we can arrange to
be eigenstates of $\gamma_5$ by choosing the linear combinations
$(\1+\gamma_5)\psi$ and $(\1-\gamma_5)\psi$.  Thus the number of
right-handed (left-handed) zero modes becomes $\nu+1$ (1), but their
difference never changes: $(\nu+1)-1=\nu$. This is why accidental zero
modes cannot change the index.

This argument is correct at $\mu=0$. If it could be extended to
$\mu\ne 0$, the accidental zero modes could not change $\ind
D=\dim\ker D_R-\dim\ker D_L$ and the ordinary index theorem
\eqref{eq:D} would continue to hold.  However, the argument does not
work for $\mu\ne 0$.  The pitfall is that, when $D(\mu)$ is not
anti-Hermitian, $\psi$ and $\gamma_5 \psi$ may become linearly
dependent in the limit $\lambda \to 0$.  Let us exemplify this by a
toy model which mimics $D(\mu)$:
\begin{align}
  D_\text{toy}(\alpha) = \bep 0&0&2&-4 \\ 0&0&1&2 \\ 1&2&0&0 \\
  0&\alpha &0&0 \eep \quad \text{with eigenvalues}\quad \{ \pm 2,\,
  \pm\sqrt{2\alpha} \}\,.
\end{align}
The right eigenvectors of $D_\text{toy}(\alpha)$ associated with the
eigenvalues $\sqrt{2\alpha}$ and $-\sqrt{2\alpha}$ are
$(4,-2,0,-\sqrt{2\alpha})^T$ and $(4,-2,0,\sqrt{2\alpha})^T$,
respectively. In the limit $\alpha\to 0$ they coincide, implying that
$\dim \ker D_R$ increases by 1 whereas $\dim \ker D_L$ remains zero.
The accidental zero modes changed the index.

At $\alpha=0$ the dimension of the eigenspace of
$D_\text{toy}(\alpha)$ is $3\ (<4)$.  This means that the eigenvectors
fail to form a complete basis, a phenomenon that occurs only for
non-Hermitian matrices. The interpretation of the discrepancy between
the last two subsections is now straightforward: Our derivation in
section~\ref{sc:index-Derivation-2} led to \eqref{eq:D} because we
assumed the completeness of the basis in \eqref{eq:ornorm}.  If the
completeness is not ensured, it is not possible to expand an arbitrary
fermion field in this basis as in \eqref{eq:psi_expand_}, and then our
discussions in section~\ref{sc:anomaly-Derivation-2} and
\ref{sc:index-Derivation-2} break down.%
\footnote{Although our discussion here is based on finite-dimensional
  matrices, we believe that the essential part of the argument carries
  over to the actual Dirac operator as an infinite-dimensional
  matrix.}  The index theorem that holds for generic gauge fields and
$\mu\ne0$ is not \eqref{eq:D}, but \eqref{eq:BC}.

However, the incompleteness of the basis requires a fine-tuning of the
gauge field, and so we expect that this occurs only on a gauge field
set of measure zero. For practical calculations it is justified to
neglect the possibility of incompleteness, in the same sense as we can
neglect accidental (non-topological) zero modes at $\mu=0$. Thus both
\eqref{eq:BC} and \eqref{eq:D} will be valid for almost all gauge
fields and $\mu\ne0$. Summarizing, we have
\begin{subequations}
  \begin{alignat}{3}
    &\text{for }\nu\geq 0:\;\; && \dim \ker D_R \overset{\text{a.s.}}{=}
    \dim \ker D_L^\dagger \overset{\text{a.s.}}{=} \nu\,,
    & \dim \ker D_R^\dagger \overset{\text{a.s.}}{=} \dim \ker D_L
    \overset{\text{a.s.}}{=} 0\,, 
    \\
    &\text{for }\nu<0: && \dim \ker D_R^\dagger \overset{\text{a.s.}}{=}
    \dim \ker D_L \overset{\text{a.s.}}{=} -\nu\,,\;\;\;
    & \dim \ker D_R \overset{\text{a.s.}}{=} \dim \ker D_L^\dagger
    \overset{\text{a.s.}}{=} 0\,, 
  \end{alignat}
\end{subequations}
where $\overset{\text{a.s.}}{=}$ denotes an equality that holds
``almost surely''. This completes our discussion of the index theorem
at $\mu\ne 0$.

\section{Derivation of (\ref{eq:rho_pert})} 
\label{app:rho_pert}

In this appendix we outline the derivation of the singular value
density in the free limit in \eqref{eq:rho_pert}. The basic relation
we use is
\begin{align}
  \rsv(\xi) = -\frac{2\xi}{\pi} \im \ev{
  \sum_{n}\frac{1}{\xi_n^2-\xi^2+i\epsilon} } \,,
\end{align}
where $\epsilon\to 0^+$ is tacitly assumed.  To obtain the resolvent
on the r.h.s., we express it in terms of the free Dirac operator
$D(\mu)= (\gamma_\nu \del_\nu+\mu\gamma_4)\otimes \1_{N_c}~(\mu>0)$
and move to momentum space,
\begin{align}
   \ev{ \sum_n \frac{1}{\xi_n^2 - \xi^2+i\epsilon} }
   & = \ev{ \tr \frac{1}{D(\mu)^\dagger D(\mu) - \xi^2+i\epsilon} }
   \\
   & = V_4 N_c \int \frac{d^4p}{(2\pi)^4} 
   \tr \frac{1}{(-i\slashed{p}+\mu\gamma_4)(i\slashed{p}+\mu\gamma_4)-
       \xi^2+i\epsilon } \,.
\end{align}
Then
\begin{align}
  \im \ev{ \sum_n \frac{1}{\xi_n^2 - \xi^2+i\epsilon} } 
  = - 4\pi V_4 N_c  \int \frac{d^4p}{(2\pi)^4}  \left|p^2+\mu^2-\xi^2\right|
  \, \delta\left( (p^2+\mu^2-\xi^2)^2-4\mu^2\mathbf{p}^2 \right),
\end{align}
where $\mathbf{p}=(p_1,p_2,p_3)$ denotes the three-momentum.  After
elementary but tedious algebra this integral can be performed
analytically and \eqref{eq:rho_pert} follows.

\section[Random matrix theory for QCD with isospin chemical potential
\boldmath ($\beta=2$)]{Random matrix theory for QCD with isospin
  chemical potential\\ \boldmath ($\beta=2$)}
\label{app:isospin}

In this appendix, we comment on the random matrix theory describing
QCD with three colors, two flavors, and isospin chemical potential
$\mu_I=2\mu$, see sections~\ref{sec:isospin} and \ref{sec:iso}.
Starting from the two-matrix model of \cite{Osborn:2004rf} the random
matrix theory for this case in the chiral limit is
\begin{align}
  Z_{\nu,N_f=2}^\text{iso}(\hat\mu, \hat\rho_1,\hat\rho_2) 
  &=\int dA\,dB\,\ee^{-N\tr(A^\dagger A+B^\dagger B)}
  \det\begin{pmatrix}
    D(\hat\mu) & X\\
    Y & D(-\hat\mu)
  \end{pmatrix}
\end{align}
with
\begin{align}    
  D(\hat\mu)&=\bep
    0 & iA+\hat\mu B\\
    iA^\dagger+\hat\mu B^\dagger & 0
  \eep,
\end{align}
where $A$ and $B$ are complex $N\times(N+\nu)$ matrices and
\begin{align}
  X=\bep
    \hat\rho_1\1_N & 0\\
    0 & \hat\rho_2\1_{N+\nu}
  \eep
  \quad\text{and}\quad
  Y=\bep
    \hat\rho_2\1_N & 0\\
    0 & \hat\rho_1\1_{N+\nu}
  \eep
\end{align}
are source terms for the pion condensate.  This model can be extended
to $N_f$ pairs of quarks and conjugate quarks.  Introducing
$P=iA+\hat\mu B$ and $Q=iA^\dagger+\hat\mu B^\dagger$, we obtain
\begin{align}
  \label{eq:Ziso}
  Z_{\nu, N_f=2}^\text{iso}(\hat\mu,\hat\rho_1,\hat\rho_2) = \int dP\,dQ\,
  &\exp\left\{-\frac{N(1+\hat\mu^2)}{4\hat\mu^2}
  \left[  \tr(P^\dagger P+Q^\dagger Q) + 
    \tr(PQ+Q^\dagger P^\dagger)
  \right]\right\}
  \notag
  \\
  &\times\hat\rho_1^{-\nu}\hat\rho_2^\nu
  \det(P^\dagger P+\hat\rho_1^2\1_{N+\nu})
  \det(Q^\dagger Q+\hat\rho_2^2\1_N)\,.
\end{align}
To study the singular values of $D(\hat\mu)$ we need to express the
partition function in terms of the eigenvalues of $P^\dagger P$ and
$Q^\dagger Q$.  Now, we notice that \eqref{eq:Ziso} is identical to
eqs.~(2.5)--(2.8) of \cite{Akemann:2006ru}.\footnote{It is interesting
  that the model of \cite{Akemann:2006ru} is useful here since it
  applies to QCD with imaginary chemical potential, which is unrelated
  to QCD with real isospin chemical potential.}  The representation of
\eqref{eq:Ziso} in terms of the eigenvalues of $P^\dagger P$ and
$Q^\dagger Q$ is given in eq.~(2.14) of that reference, and the
microscopic spectral correlations were also computed in
\cite{Akemann:2006ru}.  Hence, the microscopic correlations of the
singular values of $D(\hat\mu)$ can be obtained immediately from the
results of \cite{Akemann:2006ru}.  The mapping of RMT parameters to
physical quantities can be worked out similarly to the $\beta=1$ case
in section~\ref{sec:ls}.

\section{Derivation of (\ref{eq:sumlow0})}
\label{app:sumlow0}

From \eqref{eq:sumlownu} we have for $\nu=0$
\begin{align}
  \label{eq:df}
  \bigg\langle {\sum_n}'\frac1{\xi_n^2}\bigg\rangle_0
  =2(V_4\Phi_\L)^2\frac\partial{\partial z}\ln f(z)
  \quad\text{with}\quad
  f(z)=\int_{S^5}d\vec n\,\ee^{z(n_2^2+n_4^2)}\,.
\end{align}
Performing a Hubbard-Stratonovich transformation gives
\begin{align}
  f(z)&=\frac1\pi\int_{-\infty}^\infty  dp\,dq\,\ee^{-p^2+q^2}
  \int_{S^5}d\vec n\,\ee^{2\sqrt z(pn_2+qn_4)}\,.
\end{align}
We now define $\vec a=(0,2\sqrt zp,0,2\sqrt zq,0,0)$ and then rotate
the coordinate system such that $\vec a$ is parallel to $\hat n_1$ to
obtain
\begin{align}
  \int_{S^5} d\vec n\,\ee^{2\sqrt z(pn_2+qn_4)}
  &=\int_{S^5}d\vec n\,\ee^{an_1}\quad\text{with}\quad a=|\vec a|\notag\\
  &=\frac{8\pi^2}3\int_0^\pi d\phi\,{\sin}^4\phi\,\ee^{a\cos\phi}
  =\frac{8\pi^3}{a^2}I_2(a)\,.
\end{align}
After introducing polar coordinates for $p$ and $q$ we end up with
\begin{align}
  f(z)&=\frac{4\pi^3}z\int_0^\infty  dR\,
  \frac{\ee^{-R^2}}R\,I_2(2\sqrt zR)
  =\pi^3 M(1,3,z)\,,
\end{align}
where $M$ is Kummer's function (a.k.a.\ the confluent hypergeometric
function).  Using the recurrence relations for $M$ we obtain
\begin{align}
  M(1,3,z)=\frac2z\left(\frac{\ee^z-1}{z}-1\right),
\end{align}
and after performing the differentiation according to \eqref{eq:df} we
obtain \eqref{eq:sumlow0}.

\bibliography{Ref_for_paper}
\bibliographystyle{JBJHEP}
\end{document}